\documentclass[11pt]{article}

\usepackage[a4paper,margin=1in]{geometry}
\usepackage[T1]{fontenc}
\usepackage[utf8]{inputenc}
\usepackage{lmodern}
\usepackage{amsmath,amssymb,amsthm,mathtools}
\usepackage{array}
\usepackage{enumitem}
\usepackage{float}
\usepackage{graphicx}
\usepackage{microtype}
\usepackage{placeins}
\usepackage{xspace}
\usepackage{xcolor}
\usepackage{hyperref}
\usepackage[nameinlink,capitalise]{cleveref}
\usepackage{abstract}

\hypersetup{
  colorlinks=true,
  linkcolor=blue!45!black,
  citecolor=blue!45!black,
  urlcolor=blue!45!black,
  pdftitle={Residual-Entropy Accounting for Routed Atom-Budgeted Learned Indexes},
  pdfauthor={Faruk Alpay and Levent Sarioglu}
}

\newcommand{\U}{\mathcal{U}}
\newcommand{\M}{\mathcal{M}}
\newcommand{\Dir}{\mathsf{Dir}}
\newcommand{\Comp}{\mathsf{Comp}}
\newcommand{\Gap}{\mathsf{Gap}}
\newcommand{\RGap}{\mathsf{RGap}}
\newcommand{\Lin}{\mathsf{Lin}}
\newcommand{\Rep}{\mathsf{Rep}}

\newcommand{\Def}{\mathsf{Def}}

\newcommand{\diam}{\mathsf{diam}}
\newcommand{\Beff}{B_{\mathrm{eff}}}
\newcommand{\rank}{\mathsf{rank}}
\newcommand{\pred}{\mathsf{pred}}
\newcommand{\E}{\mathbb{E}}
\newcommand{\Prb}{\mathbb{P}}

\newcommand{\logplus}{\log^{+}}
\newcommand{\KL}{D_{\mathrm{KL}}}
\DeclareMathOperator*{\argmin}{arg\,min}

\theoremstyle{plain}
\newtheorem{theorem}{Theorem}
\newtheorem{lemma}{Lemma}
\newtheorem{proposition}{Proposition}
\newtheorem{corollary}{Corollary}

\theoremstyle{definition}
\newtheorem{definition}{Definition}

\newtheorem{remark}{Remark}
\newtheorem{example}{Example}

\title{\bfseries Residual-Entropy Accounting for Routed Atom-Budgeted Learned Indexes}
\author{
Faruk Alpay\thanks{Corresponding author: \texttt{alpay@lightcap.ai}.}\\
\small Department of Computer Engineering, Bah\c{c}e\c{s}ehir University\\[-0.1em]
\small Istanbul, Turkey\\[-0.1em]
\small \texttt{faruk.alpay@bahcesehir.edu.tr}
\and
Levent Sar\i{}o\u{g}lu\\
\small Department of Computer Engineering, Bah\c{c}e\c{s}ehir University\\[-0.1em]
\small Istanbul, Turkey\\[-0.1em]
\small \texttt{levent.sarioglu@bahcesehir.edu.tr}
}
\date{}

\begin{document}
\maketitle
\begin{abstract}
We study exact predecessor and rank search within a routed, atom-budgeted, certified-repair learned-index architecture: an ordered directory routes a query to a contiguous interval, a counted local predictor returns a certified rank window, and exact repair resolves the residual uncertainty by comparisons.  This scope is part of the claim from the outset; the result is not a guarantee for arbitrary learned-index designs such as unconstrained RMI dispatch, hash routing, neural routing, or exact-payload indexes without additional accounting.  The main parameter is conditional residual answer entropy, namely the entropy of the exact answer after the leaf, predictor output, certificate, and charged pre-repair information have been observed.  We prove a two-sided accounting theorem showing that this functional gives the query-time scale, under the stated architecture and local predictor-atom budget.  Directory space, sorted-array storage, and transcript-indexed repair-program space are separate system costs, so the theorem is not a byte-level space lower bound and is not an implementation recipe for a compact repair-program table; in a materialized upper-bound construction the repair-program table can be as large as the sum of the attainable residual answer sets over transcript values.  The radius expression based on $\log(1+\Delta)$ is a rank-spread specialization, valid only when the predictor transcript leaves many residual ranks with non-negligible conditional probability.  We make the profile term non-oracular for counted piecewise-linear segments, derive a shadow-price allocation rule, compute exact finite-instance $\RGap_\M$ and $\Gap_\M$ values on real SOSD/Zenodo samples for an explicit affine atom model, and report end-to-end benchmark results against PGM-index, RadixSpline, and binary search.  The systems benchmarks expose overheads and bottlenecks; they are not a speed claim for the shadow prototype.
\end{abstract}
\noindent\textbf{Keywords:} learned indexes; predecessor search; rank queries; routed piecewise models; entropy; approximation profiles; cell-probe lower bounds.
\vspace{0.75em}

\section{Introduction}

The classical theory of ordered dictionaries treats the input order as adversarial unless a distribution is explicitly built into the search tree.  Balanced search trees give $O(\log n)$ comparisons; predecessor structures improve the dependence on the universe in word-RAM regimes; and cell-probe lower bounds explain why broad classes of static and dynamic ordered search problems remain hard under space restrictions \cite{vanemdeboas1977veb,ajtai1988predecessors,beame2002predecessor,patrascu2006predecessor,patrascu2007randomization,navarro2020predecessor}.  Learned indexes alter the premise.  They exploit the fact that many sorted datasets have rank functions with large predictable regions, so that a small stored model can place a query close to its true rank before exact correction begins.

The systems literature has developed this idea rapidly.  Recursive model indexes, FITing-Tree, RadixSpline, ALEX, LIPP, and the PGM-index demonstrate that prediction can replace a substantial part of a comparison tree on regular data \cite{kraska2018case,galakatos2019fiting,kipf2020radixspline,ding2020alex,wu2021lipp,ferragina2020pgm}.  Benchmarks and stress tests also show that the benefit depends sharply on local error, corrective search, memory layout, updates, and robustness under distribution shift \cite{marcus2021benchmarking,maltry2022critical,wongkham2022ready,liu2025pgmpp,zhang2024disk,luo2025robust}.  This paper isolates the common comparison-routing skeleton behind segment-based systems: the router chooses a local model, the model returns a certified position window, and the last-mile repair is exact.

A satisfactory theory should identify that structure without reducing the problem to either a worst-case lower bound or an average model-error statistic.  Distribution-sensitive search trees already show that non-uniform query mass is worth entropy rather than $\log n$ \cite{knuth1971obst,mehlhorn1975obst,bent1985biased,sleator1985splay}.  Approximation-theoretic learned-index results show that model families have intrinsic limits on how cheaply they approximate a rank curve \cite{zeighami2023distribution,croquevielle2025constant,croquevielle2026lower}.  What is missing is a single parameter that couples these two phenomena: a key interval should be cheap only if it is easy to reach, easy to approximate, or rarely queried.

We study this coupling for \emph{routed piecewise learned indexes}.  A query first follows an ordered binary directory to a contiguous key interval, then evaluates a local predictor, and finally performs exact repair inside a certified rank window.  This is the theorem's architecture.  It covers the segment-based learned-index skeleton used by PGM-like, FITing-tree-like, and spline-like systems when their routing and local records are made explicit.  It does not directly cover unconstrained recursive model indexes with constant-time arithmetic dispatch, learned hash-routing layers, neural routers, or exact-payload structures unless their dispatch or payload information is charged in the transcript or in a separate memory term.  This limitation is central to the contribution: the paper identifies a residual-entropy accounting objective for a concrete architecture, not a theorem about arbitrary learned-index designs.

The novelty is not the isolated use of Shannon entropy, alphabetic coding, or conditional decision trees.  Those tools are classical.  The contribution is the architecture-specific accounting that makes them apply to learned indexes: a local atom budget for predictive records, explicit non-atom accounting for directories and repair programs, certified repair windows, and a residual objective that couples routing mass with the answer entropy left after prediction.  This accounting is what turns familiar information-theoretic ingredients into a learned-index instance parameter.

\begin{center}
\fbox{\begin{minipage}{0.91\linewidth}
\textbf{What the atom budget is not.}
The budget $B$ is not a total-space budget.  It counts only local predictive records.  Directory topology, directory keys, the sorted array, and transcript-indexed repair programs are outside $B$.  The upper bound in \cref{thm:residual-law} may use such a repair program as an information-theoretic coding device; if transcript value $y$ leaves $K_y$ exact answers attainable in the required query domain, a materialized correct table can require $\Theta(\sum_y K_y)$ repair-program nodes and has no bound in terms of $B$ alone.  A byte-space theorem would therefore need an additional directory/repair-program memory term.  The experiments report those bytes and routing costs separately instead of claiming a byte-level space guarantee.
\end{minipage}}
\end{center}

\subsection*{Technical obstruction}

A worst-case prediction radius $\Delta$ does not, by itself, imply an expected lower bound of $\log(1+\Delta)$.  A workload may put all its mass on a single residual answer inside the window.  Conversely, an empirical average error does not certify a lower-bound parameter, because another partition might spend the same number of atoms in different intervals.  The correct lower-bound object is the residual entropy of the exact answer after the predictor and its certificate are known.  The familiar $\log(1+\Delta)$ repair term emerges when the conditional workload is rank-spread across the certified window.

The rank-spread distinction appears in ordinary workloads.  It is plausible for uniform local lookups, timestamp buckets with jitter, or spatial traces that leave many nearby OSM gaps possible after prediction.  It fails for repeated hot-key lookups, endpoint probes, and cache-filtered workloads where the same transcript almost always maps to one residual rank.

For a partition $\Pi=\{I_1,\ldots,I_m\}$ with masses $p_j=\mu(I_j)$, predictors $h_j$, and certificates $\Delta_j$, the canonical lower-bound object is
\[
  H_\mu(\Pi)+\sum_{j=1}^{m}p_j\Rep_\mu(I_j,h_j,\Delta_j),
\]
where $\Rep_\mu$ is the residual entropy of the exact answer in leaf $I_j$.  Only after imposing rank-spread does this expression reduce to the more familiar radius surrogate
\[
  \sum_{j=1}^{m}p_j\left(\log\frac{1}{p_j}+\log(1+\Delta_j)\right).
\]
The feasible certificates are constrained by local approximation profiles $\Comp_{I,\M}(\Delta)$.  Thus the paper has two levels: a general residual-entropy theorem and a rank-spread specialization expressed through $\Gap_{\M}(S,\mu,B)$.

\subsection*{Contributions}

\begin{itemize}[leftmargin=1.5em]
  \item We formalize a budgeted atom model for routed piecewise learned indexes.  The formalism separates local prediction from directory bytes, sorted-array storage, stored repair programs, and exact repair comparisons.
  \item \textbf{Theorem 1} formalizes the residual-entropy functional $\RGap_\M(S,\mu,B)$ and proves two-sided query-time bounds for it under this architecture and accounting convention.
  \item \textbf{The rank-spread corollary} defines the radius surrogate $\Gap_\M(S,\mu,B)$ only as a specialization of $\RGap_\M$.  Without rank-spread the residual-entropy theorem is the correct statement and $\Gap_\M$ is not a valid lower bound.
  \item We make the local profile $\Comp_{I,\M}(\Delta)$ concrete for error-bounded piecewise-linear segment families, where it equals the minimum certified segment cover of the local rank curve.
  \item We develop a discrete shadow-price calculus for atom allocation.  For normalized power-law profiles, including integer atoms, atom floors, and boundary radii, it gives a water-filling rule and a hard-mass obstruction to constant expected query time.
  \item We provide a running example, synthetic allocation calculations, exact finite-instance $\RGap_\M/\Gap_\M$ measurements on real-data samples, and a reproducible systems benchmark artifact using large SOSD/Zenodo datasets with PGM-index, RadixSpline, binary search, and a diagnostic atom-allocation prototype.
\end{itemize}

\subsection*{Organization}

\Cref{sec:related} positions the paper relative to ordered search, succinctness, dynamic lower bounds, and learned-index systems.  \Cref{sec:model} defines atoms, local profiles, certificates, piecewise-linear profile computation, and rank-spread workloads.  \Cref{sec:calculus} develops the shadow-price calculus for the radius surrogate.  \Cref{sec:static} proves the residual-entropy accounting law and then derives the rank-spread $\Gap_\M$ specialization.  \Cref{sec:powerlaw} gives the discrete power-law specialization, examples, and design rules.  \Cref{sec:experiments} reports a real-data benchmark artifact and diagnostics.  \Cref{sec:dynamic} records dynamic boundary checks and limitations.  Appendices contain expanded proofs and rounding details.

\subsection*{Notation guide}

\begin{center}
\small
\begin{tabular}{>{\raggedright\arraybackslash}p{0.17\linewidth}p{0.72\linewidth}}
$\Comp_{I,\M}(\Delta)$ & minimum charged local atoms needed to certify rank error at most $\Delta$ on interval $I$;\\
$\RGap_\M$ & general residual-entropy objective, based on conditional entropy after the predictor transcript is known;\\
$\Gap_\M$ & radius-based surrogate for $\RGap_\M$, valid as a lower-bound expression only under rank-spread;\\
$\Rep_\mu(I,h,\Delta)$ & conditional entropy $H(A_I(q)\mid Y_{I,h}(q))$ of the exact answer left for repair;\\
$Y_{I,h}(q)$ & pre-repair transcript: leaf, predictor output, certificate window, and charged side information;\\
rank-spread & condition saying that, after the transcript is observed, many residual ranks remain likely;\\
$B$ & local predictor atom budget; directory, repair-program, and sorted-array storage are accounted for separately;\\
$\kappa_j,\alpha,R_j$ & local hardness, approximation exponent, and rank diameter of leaf $j$.
\end{tabular}
\end{center}

\begin{center}
\fbox{\begin{minipage}{0.90\linewidth}
\textbf{Main theorem intuition within this architecture.}
Routing spends comparisons to identify a leaf.  The local predictor and its finite transcript then reveal some information about the exact rank answer.  Whatever answer entropy remains must be paid by exact repair comparisons.  The upper-bound construction matches this decomposition by using an alphabetic directory plus a transcript-indexed repair program.  That program is outside the local atom budget $B$ and can be large, so the theorem is an architecture-level accounting law rather than a compact-space implementation guarantee.
\end{minipage}}
\end{center}

\subsection*{Running example}

Let $S=\{10,20,\ldots,80\}$ and split the universe into two leaves
$I_1=(-\infty,45]$ and $I_2=(45,\infty)$.  Suppose the workload masses are
$p_1=3/4$ and $p_2=1/4$.  If $I_1$ stores a one-atom affine predictor with
certified radius $\Delta_1=1$ and $I_2$ uses the empty predictor with radius
$\Delta_2=4$, then the radius surrogate is
\[
  \Gap\approx H(3/4,1/4)+\frac34\log 2+\frac14\log 5.
\]
The residual functional can be smaller.  If the transcript value in $I_2$
always identifies that the answer is rank $6$, then
$\Rep_\mu(I_2,h_2,\Delta_2)=0$ although $\Delta_2=4$.  If instead the same
transcript leaves ranks $5,6,7,8$ nearly uniform, then
$\Rep_\mu(I_2,h_2,\Delta_2)\approx\log 4$ and the rank-spread corollary
recovers the radius term.  The atom-allocation problem is exactly the choice
between spending another local atom on $I_1$, where most queries arrive, or on
$I_2$, where the window is wider but the query mass may be small.
In this example, $\Comp_{I_1,\M}(1)=1$ records the one counted segment on
$I_1$, while $\Comp_{I_2,\M}(4)=0$ records that full repair with the empty
predictor is allowed.  $\RGap_\M$ uses the actual residual entropy values
$\Rep_\mu$, whereas $\Gap_\M$ replaces them by $\log(1+\Delta_j)$ only when
rank-spread makes that replacement valid.

\section{Related Work}
\label{sec:related}

\subsection{Classical ordered search, succinctness, and dynamic lower bounds}

The study of ordered search predates learned indexing by decades.  Optimal and nearly optimal binary search trees show that, when the access distribution is known, expected search time is governed by entropy up to additive constants \cite{knuth1971obst,mehlhorn1975obst}.  Biased search trees and splay trees further develop the idea that non-uniform access patterns should be reflected in the shape of the data structure \cite{bent1985biased,sleator1985splay}.  These structures are distribution-sensitive, but they do not exploit geometric regularity of the rank function.  In our language, they optimize routing entropy while setting local repair to zero by storing exact branching information at every decision.

The predecessor problem supplies the complementary lower-bound tradition.  Yao's cell-probe model, Ajtai's lower bound, the Beame-Fich bounds, and the Patrascu-Thorup time-space tradeoffs identify barriers for predecessor search under word-level memory access \cite{yao1977minimax,ajtai1988predecessors,beame2002predecessor,patrascu2006predecessor,patrascu2007randomization}.  These results are worst-case and space-sensitive.  They do not directly answer the workload-conditioned question for a fixed approximation profile.  Our lower bound uses a more explicit learned-index interface and is sharper in its dependence on the instance.

Succinct data structures and rank/select dictionaries provide another relevant axis.  Jacobson's thesis and subsequent indexable dictionaries show how close one can get to information-theoretic space while supporting constant-time operations on bitvectors and related objects \cite{jacobson1989succinct,raman2007fid,patrascu2010succinct}.  Learned rank/select structures reinterpret this compression problem through prediction, using models to represent regular rank functions more compactly \cite{boffa2022rankselect}.  Our model is compatible with this perspective: the approximation profile $\Comp_{I,\M}(\Delta)$ can be viewed as a local compressed representation cost.

Dynamic data structures bring a further obstacle.  Overmars's logarithmic method gives a general dynamization pattern by maintaining exponentially growing static structures \cite{overmars1983design}.  Fredman and Saks established foundational dynamic cell-probe lower bounds, and later work sharpened logarithmic barriers for partial sums and related primitives \cite{fredman1989cellprobe,patrascu2006loglb}.  These lower bounds matter for learned indexes because exact dynamic rank is still dynamic prefix counting, no matter how good the prediction layer is.  Our dynamic section separates what learning can improve in the static search component from what exact dynamic maintenance cannot avoid.

\subsection{Learned indexes: systems evidence and emerging theory}

The systems line begins with the observation that an index can be learned as a function from keys to positions \cite{kraska2018case}.  FITing-Tree and RadixSpline reduce this idea to efficient piecewise-linear approximations \cite{galakatos2019fiting,kipf2020radixspline}.  The PGM-index gives a compressed, error-bounded, and fully dynamic learned index with worst-case guarantees derived from piecewise-linear approximation \cite{ferragina2020pgm}.  ALEX and LIPP explore update-aware layouts and adaptive placement strategies \cite{ding2020alex,wu2021lipp}.  These systems motivate our routed piecewise abstraction: a model gives a coarse position, while a corrective mechanism makes the answer exact.

Benchmarking work has shown that learned indexes are not uniformly superior.  Their performance depends on the data distribution, model hierarchy, error bound, last-mile search procedure, page layout, and update workload \cite{marcus2021benchmarking,maltry2022critical,wongkham2022ready,liu2025pgmpp}.  Disk-resident settings introduce an additional objective in which the relevant repair cost is not just comparisons but block transfers \cite{lan2023disk,zhang2024disk}.  Robustness studies show that dynamic learned indexes may suffer severe degradation under adversarial or skewed update sequences \cite{yang2024aca,luo2025robust}.  These observations support a central premise of this paper: the right theory must expose the interaction between prediction error, workload mass, local difficulty, and exact repair.

The theoretical learned-index literature is now catching up.  Ferragina, Lillo, and Vinciguerra relate the effectiveness of learned indexes to properties of the input distribution \cite{ferragina2020effective}.  Zeighami and Shahabi prove distribution-dependent sub-logarithmic expected query time and near-linear-space constant expected time under mild assumptions \cite{zeighami2023distribution}.  Croquevielle, Yang, Liang, Hadian, and Heinis strengthen the positive theory by obtaining constant expected query time with linear space under weaker probabilistic assumptions and by introducing an information-theoretic statistical-complexity measure \cite{croquevielle2025constant}.  Croquevielle, Sokolovskii, and Heinis develop approximation-theoretic lower bounds for learned indexes, parameterized by the model class \cite{croquevielle2026lower}.

Our result is orthogonal to those lines in three concrete ways.  First, PGM-style theory gives worst-case error-bounded segment guarantees and dynamic compressed indexes; it does not price a workload-conditioned routing distribution together with the residual answer entropy left after the segment transcript.  Second, the positive distributional theorems of Zeighami--Shahabi and Croquevielle et al. identify broad regimes with small expected query time; they do not fix a particular set $S$, workload $\mu$, atom budget $B$, and accounting convention, then ask how the budget should be split across leaves.  Third, approximation-theoretic lower bounds show that a model class may need many pieces on hard rank curves; they do not by themselves combine that local obstruction with query mass and exact-repair entropy.  The contribution here is the accounting equation that couples these quantities for the routed, certified-repair architecture.

There is also a growing line of learned data structures beyond predecessor search, including learned range-minimum queries and learned compressed rank/select structures \cite{ferragina2025flrmq,boffa2022rankselect}.  These works reinforce that learned data structures should not be treated as a single systems trick.  They are a broader attempt to replace uniform worst-case indexing by instance-aware compressed representations.  The present paper contributes a residual-entropy lens to that program.

\section{Formal Model and Profile Complexity}
\label{sec:model}

Let $S=\{x_1<\cdots<x_n\}$ be an ordered set of keys from a totally ordered universe $\U$.  A predecessor query on key $q\in\U$ returns
\[
  \pred_S(q)=\max\{x_i\in S:x_i\le q\},
\]
with the usual sentinel convention when no predecessor exists.  A rank query returns
\[
  \rank_S(q)=|\{x_i\in S:x_i\le q\}|.
\]
The two problems are interchangeable up to standard constant-factor transformations when the structure has access to the sorted key array.  We state most definitions for rank prediction because this is the natural language of learned indexes.

A workload is a probability measure $\mu$ over query keys.  For a measurable interval $I\subseteq\U$, let $\mu(I)$ be the probability that a query lies in $I$.  Throughout the paper logarithms are base two.  Terms with $p\log(1/p)$ are interpreted as zero when $p=0$.  Partitions may contain zero-mass intervals, but such intervals never improve the objective and may be removed.  For clarity, the main statements assume positive leaf masses.

\subsection{Computational and encoding conventions}

The static results are stated in a comparison/RAM hybrid model.  The directory routing phase is an ordered binary decision tree, so each routing step is a comparison against a key or interval boundary.  The local prediction phase may use arithmetic operations on a constant number of machine words per model atom.  Exactness is restored by ordered comparisons made after prediction.  This is the abstraction used by many learned-index analyses: a predictor is useful only when it is cheap to evaluate, and exactness is restored by a last-mile ordered search.

The model deliberately separates two resources.  The atom budget $B$ is a \emph{local approximation budget}: it counts the predictive records that reduce certified repair uncertainty, such as segment records, coefficients, knots, local radix records, and charged exceptions.  It is not the implementation's total byte budget.  Directory topology, directory keys, the sorted key array, and any finite table mapping a pre-repair transcript to a repair comparison program are outside $B$ and must be reported as routing/repair comparisons or as separate byte-level artifact costs.  This separation is not a loophole: it prevents the theorem from pretending to be a byte-space lower bound while still isolating the tradeoff the paper studies, namely local predictive information versus residual exact-search uncertainty.  A one-resource byte theorem would need an additional directory-and-repair-program space term.

Two admissibility conventions keep the local predictor from hiding an exact index.  First, every piece of information used by a local predictor is charged as atoms; arbitrary tables of exceptional ranks are not free.  Second, after a leaf predictor returns a certified rank interval of radius $\Delta$, the repair algorithm may compare the query with stored keys inside that interval, but it receives no extra oracle revealing the exact rank.  These conventions are standard in cell-probe and comparison lower bounds: computation can be powerful, but stored information and ordered distinctions must be accounted for.

For dynamic lower bounds we use the cell-probe model with $w$-bit cells.  The cell-probe model charges only memory accesses and gives computation for free.  A lower bound in this model therefore applies to any RAM implementation.  We use the model only to transfer classical dynamic subset-rank lower bounds to exact dynamic learned rank structures.

\subsection{Atomized local model families}

The phrase ``model atom'' is too weak unless it fixes what information is being budgeted.  We use the following abstract encoding.  It is deliberately restrictive enough to support lower bounds, but broad enough to include the segment-based learned indexes used in practice.

\begin{definition}[Atomized model family]
An atomized model family $\M$ is a sequence $(\M_a)_{a\ge 0}$ of local rank predictors together with an encoding map.  A predictor in $\M_a$ has a code consisting of at most $a$ constant-size records, each record using $O(w)$ bits on a word RAM with $w=\Theta(\log n)$ unless otherwise stated.  Evaluation may perform arbitrary computation on the decoded records and the query key, but every stored coefficient, breakpoint, exception, pointer to an auxiliary table, or payload used by the local predictor is charged to one of the $a$ records.  The classes are monotone, $\M_a\subseteq\M_{a+1}$, and $\M_0$ contains only the empty predictor, which may be used with the trivial radius of the leaf.
\end{definition}

The encoding convention is part of the theorem, not an implementation detail.  Atoms are finite word records; real-valued coefficients are understood as finite encodings with enough precision to certify the stated window, and any additional precision must be charged.  A local model may compute an arbitrary deterministic function of the query and its charged records, but it may not consult an uncharged table, unbounded-precision constant, hidden hash map, or payload that distinguishes individual ranks.  If a design stores exceptions, learned buckets with exact offsets, or a secondary table of positions, those records are atoms.  These restrictions are what prevent the predictor from storing an exact index in disguise.

The definition charges information, not syntax.  For an affine piecewise-linear family, one atom is a constant-size segment record containing its domain boundary and two coefficients.  For a spline or RadixSpline-style family, one atom is a knot or radix table entry together with its stored position information.  For a bucketed family, one atom is a bucket descriptor.  An exception table is allowed, but each exception entry is an atom; otherwise the lower bound would be vacuous because exact ranks could be hidden in the model layer.

\begin{definition}[Local rank diameter]
For a contiguous interval $I\subseteq\U$, define
\[
  R_I=\diam_S(I)=1+\max_{q,q'\in I}|\rank_S(q)-\rank_S(q')|.
\]
The trivial empty predictor has certified radius at most $R_I$ on $I$.
\end{definition}

\begin{definition}[Local approximation profile]
Let $I\subseteq\U$ be a contiguous key interval and let $\Delta\ge 0$ be an integer.  The local approximation profile of $I$ with respect to $\M$ is
\[
\begin{aligned}
  \Comp_{I,\M}(\Delta)=\min\Bigl\{a:\;&\text{there exists }h\in\M_a\text{ such that}\\
  &|h(q)-\rank_S(q)|\le \Delta \text{ for every }q\in I\Bigr\}.
\end{aligned}
\]
If no such predictor exists, the value is $+\infty$.  By convention $\Comp_{I,\M}(R_I)=0$ because the empty predictor and full repair are always allowed.
\end{definition}

The function $\Comp_{I,\M}(\Delta)$ is nonincreasing in $\Delta$ and integer-valued.  It is the only point where the local modeling family enters the static theorems.  The routing and repair arguments therefore apply unchanged to linear segments, bucket models, splines, or other families once their certified local profiles have been specified.

\subsection{Computability and certifiability of profiles}

The parameter $\Comp_{I,\M}(\Delta)$ should not be read as a black box oracle.  In an application one usually has a certified sandwich
\[
  L_I(\Delta)\le \Comp_{I,\M}(\Delta)\le U_I(\Delta),
\]
where $U_I$ is produced by an explicit construction and $L_I$ by an obstruction argument.  All lower bounds in this paper may use $L_I$, all upper bounds may use $U_I$, and a constant-factor sandwich changes the final bounds only by constant factors.

For piecewise-linear predictors, $U_I(\Delta)$ is obtained by the minimum or near-minimum number of error-bounded segments covering the local rank curve.  This is precisely the segmentation primitive exploited by PGM-like indexes.  Lower certificates can be obtained from disjoint subintervals whose rank curves cannot be approximated by a single admissible segment at error $\Delta$; more generally they come from packing, curvature, or metric-entropy arguments.  For bucket models, $U_I$ is a bucketing scheme and $L_I$ follows from the number of buckets needed to separate rank jumps.  For parametric families, $L_I$ may be obtained from pseudo-dimension, covering-number, or approximation-width bounds.  Thus exact profile computation is not required for the theorem to be meaningful; certified profile bounds are enough.

\begin{definition}[Certified profile sandwich]
A profile sandwich for a family $\M$ on a collection of intervals is a pair of integer-valued functions $(L_I,U_I)$ satisfying $L_I(\Delta)\le \Comp_{I,\M}(\Delta)\le U_I(\Delta)$ for all certified radii.  The sandwich has quality $\gamma$ on a radius range if $U_I(\Delta)\le \gamma\max\{1,L_I(\Delta)\}$ throughout that range.
\end{definition}

With a profile sandwich, one obtains two computable functionals: a lower functional using $L_I$ and an upper functional using $U_I$.  If the sandwich quality is bounded and the relevant optimum avoids only zero-mass intervals, the two functionals differ by at most the corresponding constant factors and the same shadow-price calculus applies.

\subsection{A concrete profile: error-bounded linear segments}

We now spell out one model family for which the profile is not an oracle.  Let $Q_I=(q_1<\cdots<q_N)$ be the ordered set of certified control keys in a leaf interval $I$.  For exact predecessor and rank certificates, $Q_I$ contains the relevant stored keys and gap endpoints at which the rank function can change; for a finite benchmark trace, the same definition can be applied to the trace keys but then the result is a trace certificate rather than a certificate for every query in $I$.  Let
\[
  f_I(t)=\rank_S(q_t),\qquad t=1,\ldots,N,
\]
be the certified local rank curve on those control keys.  A block $[a,b]\subseteq[N]$ is $\Delta$-linear if there exist finite-precision encoded coefficients $u,v$ such that
\[
  |u q_t+v-f_I(t)|\le \Delta\qquad \text{for every }t\in[a,b].
\]
Let $N_I^{\Lin}(\Delta)$ be the minimum number of contiguous $\Delta$-linear blocks whose union is $[N]$.

\begin{proposition}[Exact profile for counted affine segments]
\label{prop:linear-profile}
Let $\M^{\Lin}_a$ be the family of local predictors consisting of at most $a$ contiguous affine segments, where each segment record stores its right endpoint and two coefficients and is charged as one atom.  Then, up to the constant convention for endpoint records,
\[
  \Comp_{I,\M^{\Lin}}(\Delta)=N_I^{\Lin}(\Delta).
\]
Moreover, $N_I^{\Lin}(\Delta)$ is computable by the dynamic program
\[
  D[b]=1+\min\{D[a-1]:[a,b]\text{ is }\Delta\text{-linear}\},\qquad D[0]=0,
\]
provided feasibility of a block is available.
\end{proposition}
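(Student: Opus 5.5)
The equality $\Comp_{I,\M^{\Lin}}(\Delta)=N_I^{\Lin}(\Delta)$ will follow from two inequalities, and then the dynamic program will be justified by a standard optimal-substructure argument. Throughout, certification is read on the control keys $Q_I$. By construction these contain every point at which $\rank_S$ can change on $I$, so an error bound checked on $Q_I$ is the certificate required by the definition of $\Comp_{I,\M}(\Delta)$.

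\emph{Upper bound.} Take an optimal cover of $[N]$ by $N_I^{\Lin}(\Delta)$ contiguous $\Delta$-linear blocks $[a_1,b_1],\dots,[a_k,b_k]$ with encoded coefficients $(u_i,v_i)$. Build the predictor in $\M^{\Lin}_k$ whose $i$-th record stores the right endpoint $q_{b_i}$ (or the gap endpoint separating block $i$ from block $i+1$) together with $(u_i,v_i)$. Every control key is then routed to the segment of its block and evaluated within error $\Delta$, so $\Comp\le N_I^{\Lin}(\Delta)$. The phrase ``up to the constant convention for endpoint records'' absorbs the choice of where to place the breakpoint inside a gap.

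\emph{Lower bound.} Let $h\in\M^{\Lin}_a$ certify error $\Delta$ on $I$ with $a$ segments. Each segment's domain is a contiguous subinterval of $I$, and the domains are ordered by their stored right endpoints. Intersecting these domains with $Q_I$ gives at most $a$ contiguous index blocks covering $[N]$ (drop empty ones). On each block the single affine map $u q+v$ of that segment satisfies the $\Delta$ bound, so the block is $\Delta$-linear. Hence $N_I^{\Lin}(\Delta)\le a$. The admissibility convention is essential here: the predictor may use only its charged records, so it cannot switch formulas inside a segment domain without paying for an extra atom. This is the step I expect to need the most care. One must argue that ``arbitrary computation on decoded records'' in this family is restricted to evaluating the stored affine map on the domain selected by the stored endpoints. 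Otherwise a single record could encode several lines, and the count would break. I would resolve this by taking the family definition literally: the predictor is piecewise affine with breakpoints exactly at the stored endpoints.

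\emph{Dynamic program.} Let $D[b]$ be the minimum number of $\Delta$-linear blocks covering the prefix $[1,b]$. In any optimal cover of $[1,b]$, the last block is some $[a,b]$ that is $\Delta$-linear. Removing it leaves a cover of $[1,a-1]$, which must itself be optimal. Conversely, any $\Delta$-linear $[a,b]$ appended to an optimal cover of $[1,a-1]$ is feasible. This gives the recurrence with $D[0]=0$, and induction on $b$ yields $D[N]=N_I^{\Lin}(\Delta)$. The minimum is always over a nonempty set, since a single point forms a $\Delta$-linear block (take $u=0$, $v=f_I(t)$, assuming the encoding can represent it). Feasibility of a block is taken as given, for example via a linear program or a convex-hull test.
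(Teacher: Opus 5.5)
Your proposal is correct and follows the paper's proof essentially step for step. Both directions match: a cover of the control keys by $N_I^{\Lin}(\Delta)$ blocks gives a predictor with that many segment atoms, and any $h\in\M^{\Lin}_a$ induces at most $a$ contiguous $\Delta$-linear blocks on $Q_I$. Your optimal-substructure recurrence is the same as the paper's shortest-path formulation, with an edge $(a-1,b)$ whenever $[a,b]$ is $\Delta$-linear. The differences are cosmetic: you make explicit that singleton blocks keep the minimum nonempty and that breakpoint placement is what the endpoint-record convention absorbs. The paper adds that block feasibility can be certified by intersecting the slope intervals induced by $f_I(t)-\Delta\le uq_t+v\le f_I(t)+\Delta$, where you point to a linear program or convex-hull test.
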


\begin{proof}
Every cover by $N_I^{\Lin}(\Delta)$ feasible affine blocks gives a predictor with that many counted segment atoms and certified error at most $\Delta$, proving the upper bound.  Conversely, any predictor in $\M_a^{\Lin}$ partitions the local domain into at most $a$ contiguous affine pieces, and each piece must be $\Delta$-linear under the certificate.  Hence those pieces form a feasible cover, so $a\ge N_I^{\Lin}(\Delta)$.  The dynamic program is the standard shortest-path formulation on prefix endpoints: an edge $(a-1,b)$ exists exactly when $[a,b]$ is feasible.  For one-dimensional affine $\ell_\infty$ error, feasibility can be certified by intersecting the slope intervals induced by the inequalities
\[
  f_I(t)-\Delta\le u q_t+v\le f_I(t)+\Delta,
\]
so the certificate is checkable and not empirical.
\end{proof}

This proposition is the concrete case to keep in mind.  For a PGM-style index, an atom is an error-bounded segment and $\Delta$ is the usual last-mile search radius.  For a RadixSpline-style index, an atom is a spline or radix record and the same profile logic applies after charging the stored dispatch records.  In the experimental artifact we also report sampled profile curves, because they are useful diagnostics on large traces.  Those sampled curves are labeled empirical estimates; only the all-control-key certificate gives the theorem-level guarantee.

\begin{center}
\scriptsize
\begin{tabular}{lll}
\hline
Component & Atom charge & Outside $B$\\
\hline
PGM/FITing seg. & slope/intercept/error & directory + sorted array\\
RadixSpline rec. & spline predictor rec. & radix table + sorted array\\
Exceptions & charged predictor payload & exact-rank payload/caches\\
Shadow prototype & selected segments & route dir + repair comps\\
\hline
\end{tabular}
\end{center}

This mapping is why the atom model is not chosen after the fact to force the theorem.  It isolates the common predictive records whose purpose is to reduce the certified repair window.  Additional dispatch machinery is allowed, but it must be visible either in the transcript, in routing comparisons, or in a separate byte count.

\subsection{Routed piecewise learned indexes}

\begin{center}
\fbox{\begin{minipage}{0.92\linewidth}
\phantomsection\label{conv:accounting}
\textbf{Accounting convention used by the theorem.}
$B$ counts only charged local predictor atoms.  Directory topology, directory keys, the sorted key array, and transcript-to-repair-program tables are outside $B$.  They are not ignored: they are reported as routing/repair comparisons or as explicit byte-level artifact costs.  Thus \cref{thm:residual-law} is a local-predictor budget theorem, not a byte-space lower bound.
\end{minipage}}
\end{center}

\begin{definition}[Routed piecewise learned index]
A routed piecewise learned index for $(S,\mu)$ consists of:
\begin{enumerate}[label=(\roman*),leftmargin=1.5em]
  \item an ordered partition $\Pi=\{I_1,\ldots,I_m\}$ of $\U$ into key-aligned contiguous intervals;
  \item an ordered binary directory tree $\Dir$ whose leaves are exactly $I_1,\ldots,I_m$ in left-to-right order;
  \item for each positive-mass leaf $I_j$, either the empty predictor with radius $R_{I_j}$ or a local predictor $h_j\in\M_{a_j}$ with certified worst-case rank error at most $\Delta_j$;
  \item an exact local repair procedure that, after reaching $I_j$, returns the predecessor or rank by comparisons inside the certified window.
\end{enumerate}
The atom budget is $\sum_j a_j$.  Directory keys, tree topology, and sorted data are not counted as model atoms; they are charged through routing and repair comparisons.
\end{definition}

The atom budget measures only the approximation layer identified in the boxed convention.  A system evaluation must still report directory bytes, sorted-array assumptions, and repair-program storage.  None of the lower bounds become weaker if those structures are also charged as memory; the resulting statement is simply a two-resource theorem rather than the local predictor-budget theorem studied here.

The ordered binary routing condition fixes the theorem interface.  In a PGM-like interpretation, an atom is a certified linear segment, the repair radius is the PGM error parameter $\varepsilon$, routing is the ordered search among segments or upper-level recursive models, and repair is binary or exponential search inside $\pm\varepsilon$ positions.  In a RadixSpline-like interpretation, atoms are spline and radix records, routing uses the charged radix/spline directory plus an ordered local search, and repair resolves the bracket left by interpolation.  If a router uses arithmetic dispatch, a neural layer, hashing, or exact-position payloads, the corresponding dispatch records or payloads must be added to the charged transcript or to a separate memory term before applying the theorem.

\subsection{Leaf entropy and certified repair}

For a partition $\Pi=\{I_1,\ldots,I_m\}$, write
\[
  p_j=\mu(I_j).
\]
The leaf entropy of the partition is
\[
  H_\mu(\Pi)=\sum_{j=1}^m p_j\log\frac{1}{p_j}.
\]
If a query routed to leaf $I_j$ traverses depth $d_j$ in the directory tree and then performs exact correction inside a window of radius $\Delta_j$, the natural total cost is
\[
  d_j+\Theta(\log(1+\Delta_j)).
\]
The routing cost is information-theoretic; the repair cost comes from exact ordered search inside a candidate set of size $\Theta(1+\Delta_j)$.

\subsection{Residual entropy and rank spread}

A certified radius is a worst-case promise, not an average-case lower bound.  The repair algorithm does not enter the last-mile search knowing only the leaf.  It also knows the output of the local predictor, the certified interval, and any deterministic pre-repair information computed from charged atoms.  The residual uncertainty must be measured after this information is revealed.

\begin{definition}[Pre-repair transcript]
Fix a leaf interval $I$, a charged local predictor $h$, and its certificate rule.  For a query $q\in I$, the pre-repair transcript
\[
  Y_{I,h}(q)
\]
is the finite information available to the exact repair phase before its first comparison with a stored key: the leaf identity, the encoded predictor identity, the value $h(q)$, the certified rank interval or window endpoints derived from $h(q)$, and any deterministic side information computed from the charged local atoms.  It does not include the outcomes of comparisons between $q$ and stored keys.  The exact answer is denoted
\[
  A_I(q)\in\{0,\ldots,n\}.
\]
\end{definition}

\begin{center}
\fbox{\begin{minipage}{0.90\linewidth}
\textbf{Transcript example.}
For a segment leaf, a typical finite word-RAM transcript is
\[
\begin{aligned}
Y_{I,h}(q)=(&\mathrm{leaf}=17,\ \mathrm{segment}=17,\ \widehat r=1048576,\\
&[\ell,u],\ \Delta,\ \mathrm{encoded\ local\ coefficients}).
\end{aligned}
\]
It may include rounded predictor output and certified window endpoints.  It does not include answers to probes such as ``is $q<S[k]$?''; those ordered comparisons are the repair cost lower bounded below.
\end{minipage}}
\end{center}

This definition follows the comparison lower-bound convention.  In an implemented word-RAM structure the transcript is a finite tuple of words: encoded coefficients, interval identifiers, rounded predictor output, certified endpoints, and finite side information.  This finite version is the primary implementation model.  If one writes a mathematical predictor with real-valued output, $Y_{I,h}$ can instead be treated as a measurable random variable and
\[
  H(A\mid Y)=\int H(P_{A\mid Y=y})\,dP_Y(y),
\]
using a regular conditional distribution of the finite answer variable $A$.  The finite implementation is obtained by the word encoding of the predictor output and certificate; any extra precision or lookup table used to refine the transcript is charged as atoms or as explicit non-atom memory.  Ordered facts learned by comparing $q$ with stored keys are not part of the transcript; they are exactly the information whose cost is being lower bounded.

\begin{definition}[Conditional leaf repair entropy]
Fix a leaf interval $I$, a predictor $h$, a certified radius $\Delta$, and the conditional workload $\mu_I$.  The conditional repair entropy is
\[
  \Rep_{\mu}(I,h,\Delta)=H\!\left(A_I(q)\mid Y_{I,h}(q)\right),
  \qquad q\sim\mu_I.
\]
Equivalently,
\[
  \Rep_{\mu}(I,h,\Delta)
  =\sum_y \Prb[Y_{I,h}=y]\,
     H\!\left(A_I(q)\mid Y_{I,h}(q)=y\right).
\]
For a measurable non-finite transcript this sum is replaced by the integral above.  Since $A_I$ has at most $n+1$ values, the conditional entropy is always finite and the finite-transcript formulas are recovered by discretization.
\end{definition}

The conditioning is essential.  If the predictor is exact, then $Y_{I,h}$ already determines $A_I$ and the repair entropy is zero.  If the predictor returns a large window but the workload concentrates on one answer inside that window, the conditional entropy is also small.  If, after the same transcript value, many ranks remain plausible with non-negligible conditional probability, the entropy is logarithmic in the number of plausible ranks.

\begin{definition}[Residual-entropy functional]
For an atomized family $\M$, define
\begin{equation}
\label{eq:rgap-def}
\begin{aligned}
  \RGap_{\M}(S,\mu,B)=
  \inf_{\substack{\Pi=\{I_1,\ldots,I_m\}\\ a_j,h_j,\Delta_j\\ \sum_j a_j\le B}}
  \Biggl(
    H_\mu(\Pi)+\sum_{j=1}^{m}p_j\Rep_\mu(I_j,h_j,\Delta_j)
  \Biggr),
\end{aligned}
\end{equation}
where $p_j=\mu(I_j)$, $h_j\in\M_{a_j}$, and $h_j$ has certified worst-case rank error at most $\Delta_j$ on $I_j$.  The infimum ranges over ordered partitions, charged local predictors, and their certificates.
\end{definition}

This is the primary instance parameter.  It is not a radius parameter: two leaves with the same certified radius may have different conditional residual entropies because their predictor transcripts may leave different answer distributions.

\begin{definition}[Rank-spread condition]
For constants $c_0,c_1>0$, a positive-mass leaf $I$ with predictor $h$ is $(c_0,c_1)$-rank-spread if, for every positive-probability transcript value $y$ of $Y_{I,h}$, the conditional residual exact-answer distribution contains a subset of at least $c_0(1+\Delta_y)$ candidate ranks, each of conditional probability at least $c_1/(1+\Delta_y)$, where $\Delta_y$ is the certified radius of the window encoded by transcript $y$.  For fixed-radius leaves this reduces to the requirement that, after the predictor output and certificate are known, $\Theta(1+\Delta)$ ranks remain conditionally plausible.  A structure is rank-spread if every positive-mass leaf satisfies this condition for its realized predictor and transcript distribution.
\end{definition}

Rank-spread is not a technical afterthought; it is exactly the hypothesis that turns conditional residual entropy into a radius lower bound.  It is plausible when conditional queries are approximately uniform over many gaps in a leaf, when a Zipfian workload has had its heaviest ranks separated into singleton or small leaves and the remaining tail is not concentrated on one residual answer, and when range or sensor workloads add local jitter around predicted positions.  It fails for repeated point lookups at a single hot key, for membership checks concentrated in one gap, and for cache-filtered workloads in which an upper layer removes all but a few residual outcomes.  In those failing cases the theorem below remains true with $\RGap_\M$, while the radius expression below is not a lower bound.

\begin{example}[Rank-spread workloads and failures]
\begin{center}
\scriptsize
\begin{tabular}{lll}
\hline
Workload & Holds when & Fails when\\
\hline
Uniform local & broad rank mass & endpoint-only mass\\
Time/sensor jitter & adjacent ranks likely & cache pre-resolves answer\\
Zipf tail & hot keys isolated & head dominates transcript\\
Sparse OSM locality & many plausible gaps & one gap repeated\\
\hline
\end{tabular}
\end{center}
These failures are not pathologies; they are exactly the cases where $\RGap_\M$ is smaller than the radius surrogate.
\end{example}

\begin{definition}[Empirical rank-spread diagnostic]
Fix a benchmark implementation and a finite trace.  Let $Z(q)$ be a documented coarsening of the finite pre-repair transcript, such as the leaf id together with a bucketed repair-window size.  For each positive-mass bucket $z$, let $\widehat P_{A\mid Z=z}$ be the empirical residual-rank histogram and let $\widehat\Delta_z$ be the median certified radius in that bucket.  The empirical entropy ratio is
\[
  \widehat\rho
  =
  \frac{\sum_z \widehat P[Z=z]\,H(\widehat P_{A\mid Z=z})}
       {\sum_z \widehat P[Z=z]\,\log(1+\widehat\Delta_z)}.
\]
The accompanying support statistic is the average number of residual ranks with empirical probability at least $1/(4(1+\widehat\Delta_z))$ inside their bucket.
\end{definition}

This diagnostic is not a theorem assumption and does not equal $H(A\mid Y)$ unless the coarsening $Z$ is the actual implementation transcript.  It is a trace-level test for whether the radius surrogate is plausible.  High ratios and broad support indicate rank-spread behavior; low ratios indicate that the residual-entropy objective should be used directly.

\begin{lemma}[Radius is only a spread surrogate]
\label{lem:spread-surrogate}
For every certified leaf,
\[
  0\le \Rep_\mu(I,h,\Delta)\le O(\log(1+\Delta)).
\]
If the leaf is rank-spread, then
\[
  \Rep_\mu(I,h,\Delta)=\Omega(\log(1+\Delta)).
\]
Both inequalities are tight up to constants.
\end{lemma}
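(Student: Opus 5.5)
The plan is to work transcript value by transcript value and then average over transcripts using the second (sum/integral) form of $\Rep_\mu$ in the definition of conditional leaf repair entropy.

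\emph{Upper bound.} Fix a positive-probability transcript value $y$. The certificate says $|h(q)-\rank_S(q)|\le\Delta_y$ for every $q\in I$. Hence the exact answer $A_I(q)$ lies in the integer window $[\widehat r-\Delta_y,\widehat r+\Delta_y]$ determined by $y$, intersected with $\{0,\ldots,n\}$. Since entropy is at most the log of the support size,
\[
H(A_I\mid Y=y)\le\log(1+2\Delta_y)\le 1+\log(1+\Delta_y).
\]
For a fixed-radius leaf, $\Delta_y\le\Delta$. Integrating over $y$ then gives $\Rep_\mu\le 1+\log(1+\Delta)=O(\log(1+\Delta))$. For $\Delta=0$ the window is a single rank, so the conditional entropy is exactly $0$ and the additive constant does not spoil the $O(\cdot)$. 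Nonnegativity is immediate.

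\emph{Lower bound.} Assume $(c_0,c_1)$-rank-spread. For each $y$ there is a set $T_y$ of at least $c_0(1+\Delta_y)$ ranks, each of conditional probability at least $c_1/(1+\Delta_y)$. I would use the elementary bound
\[
H(P)\ge\sum_{a\in T_y}P(a)\log\frac1{P(a)}.
\]
Here $P(a)\le 1$, but we need $\log(1/P(a))$ large, and a lower bound on probabilities does not give an upper bound on them. This is the main obstacle. The fix is to discard atoms with $P(a)$ large: since probabilities sum to $1$, at most $(1+\Delta_y)^{1/2}$ elements can have $P(a)\ge(1+\Delta_y)^{-1/2}$. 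The remaining elements of $T_y$ have total mass at least
\[
\bigl(c_0(1+\Delta_y)-(1+\Delta_y)^{1/2}\bigr)\frac{c_1}{1+\Delta_y},
\]
which is at least $c_0c_1/2$ once $\Delta_y$ exceeds a constant depending on $c_0$. Each such element contributes at least $\tfrac12\log(1+\Delta_y)$ per unit mass, so
\[
H(A_I\mid Y=y)\ge\tfrac{c_0c_1}{4}\log(1+\Delta_y).
\]
When $\Delta_y$ is below that constant, $\log(1+\Delta_y)=O(1)$. There $T_y$ has at least two elements as soon as $c_0(1+\Delta_y)\ge2$, and then $H\ge c_1/(1+\Delta_y)$, a positive constant. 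When $T_y$ could be a singleton, the claim should be read with $\Delta_y\ge1$ large or with the implicit constant adjusted; I would state this small-radius caveat explicitly. Averaging over $y$ with $\Delta_y=\Delta$ for fixed-radius leaves yields $\Rep_\mu=\Omega(\log(1+\Delta))$.

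\emph{Tightness.} For the upper bound, take $h$ constant on a leaf whose ranks range over $1+\Delta$ values, with $\mu_I$ uniform over one query per gap. Then $Y$ is constant and $H(A\mid Y)=\log(1+\Delta)$. For the lower bound, take the same predictor but let $\mu_I$ be concentrated on one key. Then $\Rep=0$ while $\Delta$ is arbitrary, which shows that without rank-spread no radius lower bound holds; this is the second half of the running example. Under rank-spread, the uniform example itself attains the lower bound up to constants.
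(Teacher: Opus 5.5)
Your proof is correct and has the same skeleton as the paper's. You condition on a transcript value $y$ and bound $H(A\mid Y=y)$ by the log of the certified window size for the upper bound. You then use the spread set for the lower bound, average over $y$, and use the uniform and point-mass workloads for tightness.

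The difference is in the lower-bound step, and there your version is the sound one. The paper's proof of this lemma only asserts that $\Theta(1+\Delta_y)$ answers of probability $\Omega(1/(1+\Delta_y))$ force entropy $\Omega(\log(1+\Delta_y))$. Its explicit computation in \cref{lem:rank-spread-repair} replaces each term $P(a)\log(1/P(a))$ by $\frac{c_1}{1+\Delta_y}\log\frac{1+\Delta_y}{c_1}$. That step uses $P(a)\ge c_1/(1+\Delta_y)$ in the wrong direction: $x\mapsto x\log(1/x)$ is not increasing on $[c_1/(1+\Delta_y),1]$, which is exactly the obstacle you identified. Your device of discarding the at most $(1+\Delta_y)^{1/2}$ heavy atoms makes the step valid. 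An equivalent fix uses concavity of $x\log(1/x)$. Under the constraints $P(a)\ge\varepsilon$ and total mass at most one, the sum over $T_y$ is minimized by piling all excess mass on a single atom. This gives at least $(|T_y|-1)\,\varepsilon\log(1/\varepsilon)$, which is $\Omega(\log(1+\Delta_y))$ once $1+\Delta_y$ exceeds a constant depending on $c_0$ (and on $c_1$ if $c_1>1$).

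Your small-radius caveat is also genuine, and the paper does not address it. Take $(c_0,c_1)=(1/2,1/2)$ and $\Delta_y=1$. A point mass on one rank is rank-spread, since it needs only one candidate rank of probability at least $1/4$. Yet its residual entropy is $0$, while $\log(1+\Delta_y)=1$.

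So state the lower bound in one of two forms: for $\Delta$ above a threshold depending on the spread constants, or as $c\log(1+\Delta)-O(1)$. Your phrase ``with the implicit constant adjusted'' cannot rescue the singleton case if it means only the multiplicative constant.
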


\begin{proof}
Condition on a transcript value $y$.  The certified window for that transcript contains at most $O(1+\Delta_y)$ possible ranks, so its answer entropy is at most $O(\log(1+\Delta_y))$; averaging over $y$ gives the upper bound for fixed-radius leaves and the analogous variable-radius statement.  Under rank-spread, each positive-probability transcript leaves $\Theta(1+\Delta_y)$ answers with probability $\Omega(1/(1+\Delta_y))$, so the conditional entropy is $\Omega(\log(1+\Delta_y))$.  The uniform residual distribution gives tightness at the upper end.  A workload supported on a single residual answer gives entropy zero even when the certified window is large.
\end{proof}

\begin{definition}[Rank-spread radius functional]
For a model class $\M$, ordered set $S$, workload $\mu$, and atom budget $B$, define
\begin{equation}
\label{eq:gap-def}
\Gap_{\M}(S,\mu,B)=
\min_{\substack{\Pi=\{I_1,\ldots,I_m\}\\ \Delta_1,\ldots,\Delta_m\ge 0\\ \sum_{j=1}^{m}\Comp_{I_j,\M}(\Delta_j)\le B}}
\sum_{j=1}^{m}p_j\left(\log\frac{1}{p_j}+\log(1+\Delta_j)\right),
\end{equation}
where $p_j=\mu(I_j)$.
\end{definition}

The quantity $\Gap_\M$ is a \emph{rank-spread surrogate} for $\RGap_\M$, not the general lower-bound parameter.  It removes the conditional transcript distribution inside each window and replaces it by the certified radius.  This replacement is mathematically valid only when residual answers remain sufficiently spread after the predictor output has been observed.

\begin{remark}[Sanity checks for the surrogate]
If every leaf predicts exactly, so that $\Delta_j=0$, then \cref{eq:gap-def} collapses to $H_\mu(\Pi)$, recovering alphabetic-search entropy.  If the partition has one interval and only the trivial radius $\Delta=\Theta(n)$ is certified, the repair term is $\Theta(\log n)$.  These checks concern the surrogate; the residual functional can be strictly smaller when the workload has low conditional residual entropy.
\end{remark}

\section{Shadow Prices}
\label{sec:calculus}

The residual functional is the general object.  The radius functional $\Gap_\M$ is the tractable rank-spread surrogate.  This section develops the Lagrangian calculus for that surrogate.  The calculus has two roles.  First, it provides lower-bound certificates for every feasible radius allocation under rank-spread.  Second, it yields the water-filling law that determines which intervals deserve accurate predictors.

For an interval $I$ with conditional mass parameter $p>0$, define the leaf objective
\[
  F_{I,p}(\Delta)=p\log(1+\Delta),
\]
and let the local storage cost be $C_I(\Delta)=\Comp_{I,\M}(\Delta)$.  For a multiplier $\lambda\ge 0$, define the \emph{leaf shadow envelope}
\begin{equation}
\label{eq:shadow-envelope}
  \Psi_{I,p}(\lambda)=\min_{\Delta\ge 0}\{F_{I,p}(\Delta)+\lambda C_I(\Delta)\}.
\end{equation}
The value $\lambda$ is the exchange rate between one model atom and one bit of expected repair information.

\begin{definition}[Repair-entropy potential]
For a partition $\Pi=\{I_1,\ldots,I_m\}$, local radii $\Delta_1,\ldots,\Delta_m$, and multiplier $\lambda\ge 0$, define
\begin{equation}
\label{eq:phi-def}
  \Phi_\lambda(\Pi,\Delta)=
  H_\mu(\Pi)+\sum_{j=1}^{m}p_j\log(1+\Delta_j)
  +\lambda\sum_{j=1}^{m}\Comp_{I_j,\M}(\Delta_j).
\end{equation}
\end{definition}

\begin{proposition}[Shadow-price separability]
\label{prop:separable}
Fix a partition $\Pi=\{I_1,\ldots,I_m\}$ and $\lambda\ge 0$.  Minimizing $\Phi_\lambda(\Pi,\Delta)$ over integer radii separates leafwise:
\[
\argmin_{\Delta_1,\ldots,\Delta_m\ge 0}\Phi_\lambda(\Pi,\Delta)=
\prod_{j=1}^{m}\argmin_{\Delta\ge 0}\left\{p_j\log(1+\Delta)+\lambda\Comp_{I_j,\M}(\Delta)\right\}.
\]
Moreover, for every budget $B$ and every $\lambda\ge 0$,
\begin{equation}
\label{eq:dual-cert}
\min_{\substack{\Delta_j\ge 0\\ \sum_j\Comp_{I_j,\M}(\Delta_j)\le B}}
\sum_{j=1}^{m}p_j\log(1+\Delta_j)
\ge
\sum_{j=1}^{m}\Psi_{I_j,p_j}(\lambda)-\lambda B.
\end{equation}
\end{proposition}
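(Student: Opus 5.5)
The plan has two parts: separability of the penalized objective, and a Lagrangian weak-duality argument for the budgeted problem.

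\textbf{Separability.} With the partition fixed, the term $H_\mu(\Pi)$ in $\Phi_\lambda(\Pi,\Delta)$ does not depend on the radii, so it does not affect the minimizer. The remaining part of \cref{eq:phi-def} is
\[
  \sum_j \bigl(p_j\log(1+\Delta_j)+\lambda\Comp_{I_j,\M}(\Delta_j)\bigr),
\]
a sum of functions each depending on a single coordinate $\Delta_j$. A sum of single-variable functions over a product domain is minimized exactly at the product of the coordinatewise minimizer sets. Both inclusions are elementary:
\begin{itemize}[leftmargin=1.5em]
  \item If some coordinate is not a leafwise minimizer, replacing it by one strictly decreases the sum.
  \item Any tuple of leafwise minimizers attains $\sum_j\Psi_{I_j,p_j}(\lambda)$, which is a lower bound for every tuple.
\end{itemize}

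\textbf{Existence of leafwise minima.} I must check that each leafwise minimum is attained, so that $\Psi$ is a genuine minimum. On integers, $\Comp$ takes values in $\mathbb{Z}_{\ge0}\cup\{+\infty\}$, and $\Comp_{I,\M}(R_I)=0$, so the objective is finite at $\Delta=R_I$. For $\Delta\ge R_I$, $\Comp$ stays $0$ by monotonicity, and $p\log(1+\Delta)$ is increasing with $p>0$. Hence the minimum over $\Delta\ge0$ equals the minimum over the finite set $\{0,\dots,R_I\}$, and it is attained. This finiteness reduction is the only point needing care; I expect it to be the main (mild) obstacle. Radii with $\Comp=\infty$ are simply never minimizers.

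\textbf{Dual certificate.} Take any feasible $(\Delta_j)$ with $\sum_j\Comp_{I_j,\M}(\Delta_j)\le B$, and fix $\lambda\ge0$. Then
\[
  \sum_j p_j\log(1+\Delta_j)\ \ge\ \sum_j p_j\log(1+\Delta_j)+\lambda\Bigl(\sum_j\Comp_{I_j,\M}(\Delta_j)-B\Bigr),
\]
since the added term is nonpositive. Regrouping the right side leafwise, it equals
\[
  \sum_j\bigl(p_j\log(1+\Delta_j)+\lambda\Comp_{I_j,\M}(\Delta_j)\bigr)-\lambda B,
\]
and each summand is at least $\Psi_{I_j,p_j}(\lambda)$ by definition \cref{eq:shadow-envelope}. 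So every feasible allocation satisfies the bound $\sum_j p_j\log(1+\Delta_j)\ge\sum_j\Psi_{I_j,p_j}(\lambda)-\lambda B$. Taking the minimum over feasible allocations gives \cref{eq:dual-cert}.

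The feasible set is nonempty, since $\Delta_j=R_{I_j}$ costs $0$ atoms, and the minimum is attained because each coordinate can be restricted to a finite set as above.
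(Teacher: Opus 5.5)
Your proposal is correct and takes essentially the same route as the paper. Separability holds because $H_\mu(\Pi)$ does not depend on the radii and the remainder is a sum of one-coordinate terms. The dual certificate follows by adding the nonpositive penalty $\lambda(\sum_j\Comp_{I_j,\M}(\Delta_j)-B)$ and then minimizing leafwise. Your extra check that each leafwise minimum is attained on $\{0,\dots,R_{I_j}\}$, using $\Comp_{I,\M}(R_I)=0$ and monotonicity, is sound; the paper leaves this detail implicit.
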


\begin{proof}
The first claim follows because $H_\mu(\Pi)$ is independent of $\Delta$ and the remaining Lagrangian is a sum of one-dimensional objectives.  For \eqref{eq:dual-cert}, every feasible radius vector satisfies
\[
  \sum_jp_j\log(1+\Delta_j)
  \ge \sum_j\bigl(p_j\log(1+\Delta_j)+\lambda\Comp_{I_j,\M}(\Delta_j)\bigr)-\lambda B.
\]
Minimizing the right-hand side coordinatewise gives the stated bound.
\end{proof}

The inequality \eqref{eq:dual-cert} is useful because it is a certificate: to prove that an allocation cannot beat a claimed value, it is enough to exhibit a multiplier $\lambda$.  When the discrete profiles are replaced by their lower convex envelopes, the reverse inequality follows from standard convex duality.  Thus the shadow envelope is not merely an analysis trick; it is the dual form of the atom-allocation problem.

\begin{definition}[Entropy deficiency]
\label{def:deficiency}
For a fixed partition $\Pi$ and budget $B$, define the repair deficiency
\[
  \Def_{\M}(\Pi,B)=
  \min_{\substack{\Delta_j\ge 0\\ \sum_j\Comp_{I_j,\M}(\Delta_j)\le B}}
  \sum_{j=1}^{m}p_j\log(1+\Delta_j).
\]
Thus $\Gap_\M(S,\mu,B)=\min_\Pi\{H_\mu(\Pi)+\Def_\M(\Pi,B)\}$.
\end{definition}

The name is deliberate.  $H_\mu(\Pi)$ is the information needed to choose a local model; $\Def_\M(\Pi,B)$ is the information not removed by the model budget and therefore still paid by exact repair.  A learned index is effective precisely when the budget makes this deficiency small on high-mass intervals.

\begin{definition}[Normalized discrete power-law profile]
\label{def:powerlaw}
Let $R_j=\diam_S(I_j)$.  A fixed partition $\Pi=\{I_1,\ldots,I_m\}$ has an $(\alpha,\kappa,R)$-regular profile for $\M$ if there are constants $a,b,\alpha>0$ and local hardness values $\kappa_j>0$ such that for every leaf and every integer $0\le \Delta\le R_j$,
\begin{equation}
\label{eq:powerlaw}
  a\bigl(1+\kappa_j(1+\Delta)^{-\alpha}\bigr)
  \le 1+\Comp_{I_j,\M}(\Delta)
  \le b\bigl(1+\kappa_j(1+\Delta)^{-\alpha}\bigr),
\end{equation}
and $\Comp_{I_j,\M}(R_j)=0$ up to the empty-predictor convention.  The additive one inside $1+\Comp$ is the atom floor; the cap at $R_j$ is the full-repair boundary.
\end{definition}

This is the discrete version used in the theorem.  It avoids the impossible conclusion that fewer than one atom certifies a nontrivial local predictor.  It also records the boundary case in which the model stores no local predictor and the repair window spans the entire leaf.

\begin{proposition}[Water-filling law]
\label{prop:local-waterfilling}
Assume \cref{def:powerlaw}.  Fix a partition $\Pi$ and multiplier $\lambda>0$.  Every asymptotic minimizer of the shadow objective satisfies
\[
  1+\Delta_j(\lambda)=\Theta\left(
  \min\left\{1+R_j,\max\left\{1,\left(\frac{\lambda\kappa_j}{p_j}\right)^{1/\alpha}\right\}\right\}\right),
\]
with constants depending only on $a,b,\alpha$.
\end{proposition}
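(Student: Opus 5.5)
By \cref{prop:separable}, minimizing the shadow objective separates leafwise, so it suffices to fix one leaf $j$ and study the one-dimensional problem
\[
  g_j(\Delta)=p_j\log(1+\Delta)+\lambda\,\Comp_{I_j,\M}(\Delta),\qquad 0\le\Delta\le R_j .
\]
The plan is to replace $\Comp_{I_j,\M}$ by the power-law sandwich \eqref{eq:powerlaw}. Set $x=1+\Delta$. Then
\[
  \lambda a\bigl(1+\kappa_jx^{-\alpha}\bigr)-\lambda\le\lambda\Comp\le\lambda b\bigl(1+\kappa_jx^{-\alpha}\bigr)-\lambda .
\]
Up to additive $O(\lambda)$ constants that do not depend on $x$, $g_j$ is therefore sandwiched between the comparison functions $G_c(x)=p_j\log x+c\lambda\kappa_jx^{-\alpha}$ with $c\in\{a,b\}$. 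Each $G_c$ is unimodal on $(0,\infty)$. Its stationary point is $x_c^\ast=(c\alpha\ln 2\,\lambda\kappa_j/p_j)^{1/\alpha}$, which equals $\Theta\bigl((\lambda\kappa_j/p_j)^{1/\alpha}\bigr)$ with constants depending only on $a,b,\alpha$. Clamping $x_c^\ast$ to the feasible range $[1,1+R_j]$ yields the $\max\{1,\cdot\}$ and $\min\{1+R_j,\cdot\}$ in the statement. Integrality of $\Delta$ costs at most a factor of $2$ in $x$.

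The key step is to show that any (asymptotic) minimizer $x$ of $g_j$ lies within a constant factor of the clamped point $x^\ast=\min\{1+R_j,\max\{1,(\lambda\kappa_j/p_j)^{1/\alpha}\}\}$. I will argue by comparison in both directions.

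\emph{Too large.} Suppose $x\ge Kx^\ast$ with $x^\ast$ interior. Then $g_j(x)-g_j(x^\ast)\ge p_j\log K-O(\lambda)-b\lambda\kappa_j(x^\ast)^{-\alpha}$. Since $\lambda\kappa_j(x^\ast)^{-\alpha}=\Theta(p_j)$, this is positive for a large constant $K$, provided the additive $O(\lambda)$ atom-floor term is dominated.

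\emph{Too small.} Suppose $x\le x^\ast/K$. Then the atom term alone gives $\lambda a\kappa_jx^{-\alpha}\ge aK^\alpha\cdot\Theta(p_j)$, which exceeds the competitor's cost $p_j\log x^\ast+O(p_j)+O(\lambda)$ once $K$ is large. The $\log x^\ast$ contribution is controlled because $\log$ grows slower than $K^\alpha$ only relative to the ratio. So this part of the comparison must be done as a ratio, $g_j(x)-g_j(x^\ast)\ge p_j\bigl(aK^\alpha\Theta(1)-\log K-b\Theta(1)\bigr)$.

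\emph{Clamped cases.} If $x^\ast=1$, i.e.\ $\lambda\kappa_j\le p_j$, then only the "too large" argument is needed. If $x^\ast=1+R_j$, then the empty predictor at $R_j$ costs zero atoms and the "too small" argument applies.

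The expected obstacle is the additive atom floor. The $-\lambda$ and $+\lambda(b-1)$ offsets from \eqref{eq:powerlaw} are of order $\lambda$, not of order $p_j$. When $\lambda\gg p_j$ while $\kappa_j$ is moderate, these offsets can swamp the $p_j\log K$ gap, and the comparison above would then be vacuous. I would first handle this by noting that the floor term is constant in $\Delta$ for all $\Delta<R_j$. It only creates the discrete jump between "some atoms" and "empty predictor at $R_j$", so it affects only the choice between the interior optimum and the boundary $R_j$. This is precisely why the statement says "asymptotic minimizer" with $\Theta$. I would make this precise by comparing the interior candidate with the boundary candidate: if the boundary wins, then $\lambda\cdot\Theta(1+\kappa_j)\gtrsim p_j\log(1+R_j)$, and in that regime I expect to show $(\lambda\kappa_j/p_j)^{1/\alpha}\gtrsim 1+R_j$ up to constants. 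If that implication fails for extreme parameter ranges, I would restrict "asymptotic" to the regime $\kappa_j\gtrsim 1$, which is where the power law is nontrivial, and record the needed constant dependence on $a,b,\alpha$.
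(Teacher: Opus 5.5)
\paragraph{A genuine gap: the atom floor.} Your reduction to one leaf via \cref{prop:separable}, the sandwich, the stationary point of $G_c$, and the clamping and rounding are all fine, and you have found the real weak point. The repair you propose does not work, however. For the shadow objective built from the true profile $\Comp_{I_j,\M}$ no repair is possible, because the statement is false in that reading.

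The two sides of your sandwich differ by the $\Delta$-independent slack $(b-a)\lambda$. Both your ``too large'' and ``too small'' comparisons gain only amounts of order $p_j$; your displayed ``too small'' bound silently drops exactly this $-(b-a)\lambda$. So each comparison needs $\lambda=O(p_j)$, which is not assumed. Your implication ``boundary wins $\Rightarrow(\lambda\kappa_j/p_j)^{1/\alpha}\gtrsim 1+R_j$'' is false. The boundary can win once $\lambda\gtrsim p_j\log(1+R_j)$, which is exponentially weaker than $\lambda\kappa_j\gtrsim p_j(1+R_j)^{\alpha}$.

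Here is a counterexample. Let one atom certify every radius below $R_j$, i.e.\ $\Comp_{I_j,\M}(\Delta)=1$ for $0\le\Delta<R_j$ and $\Comp_{I_j,\M}(R_j)=0$ (e.g.\ evenly spaced keys, which one charged record predicts exactly). This satisfies \eqref{eq:powerlaw} with $\alpha=1$, $\kappa_j=1$, $a=1/2$, $b=2$. The leaf objective is $\lambda+p_j\log(1+\Delta)$ for $\Delta<R_j$ and $p_j\log(1+R_j)$ at $\Delta=R_j$. So the exact minimizer is $\Delta=0$ when $\lambda<p_j\log(1+R_j)$, and $\Delta=R_j$ when $\lambda>p_j\log(1+R_j)$.
\begin{itemize}
\item With $\lambda/p_j=\sqrt{1+R_j}$, the minimizer has $1+\Delta=1+R_j$, while the formula predicts $\Theta(\sqrt{1+R_j})$.
\item With $\lambda/p_j=\tfrac12\log(1+R_j)$, the minimizer is $\Delta=0$, while the formula predicts $\Theta(\log(1+R_j))$.
\end{itemize}
Both ratios are unbounded. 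Since $\kappa_j=1$, your fallback restriction $\kappa_j\gtrsim 1$ does not exclude the example. The condition that actually matters is $\lambda\lesssim p_j$, not a condition on $\kappa_j$.

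\paragraph{Relation to the paper's proof, and what can be proved.} The paper's own proof never meets this obstacle. It replaces the leaf objective by the floor-free surrogate $p_j\log(1+\Delta)+\lambda\kappa_j(1+\Delta)^{-\alpha}$, balances derivatives, and asserts that the truncations absorb the atom floor. That assertion is exactly where the $(b-a)\lambda$ slack is discarded. If ``asymptotic minimizer'' means a minimizer of that surrogate, then your stationary-point computation, clamping to $[1,1+R_j]$, and factor-$2$ rounding already form a complete proof, essentially the paper's.

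If you want a statement about the true objective, add the hypothesis $\lambda\le Cp_j$. Let $x^\circ$ be the clamped point. Testing $\Delta=\lceil x^\circ\rceil-1$ (or $\Delta=R_j$ itself when $x^\circ=1+R_j$) gives $\min g_j\le p_j\log x^\circ+O(p_j)$. The bound $g_j(\Delta)\ge p_j\log(1+\Delta)$ then rules out $1+\Delta>Kx^\circ$. The bound $g_j(\Delta)\ge p_j\log(1+\Delta)+a\lambda\kappa_j(1+\Delta)^{-\alpha}-(1-a)\lambda$ rules out $1+\Delta<x^\circ/K$. Here $K$ depends only on $a,b,\alpha,C$.
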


\begin{proof}
For a non-boundary radius, the local objective is approximated by
\[
  p_j\log(1+\Delta)+\lambda\kappa_j(1+\Delta)^{-\alpha}.
\]
Balancing derivatives gives
\[
  \frac{p_j}{1+\Delta}\asymp \lambda\kappa_j(1+\Delta)^{-\alpha-1},
\]
so $(1+\Delta)^\alpha\asymp \lambda\kappa_j/p_j$.  The lower truncation enforces integrality and the atom floor; the upper truncation enforces the full-repair boundary $\Delta\le R_j$.  Rounding to the nearest integer changes the objective by at most a constant factor because $\log(1+\Delta)$ and the normalized profile vary by constant factors under constant-factor radius perturbations.
\end{proof}

The law is asymmetric in the right way.  Increasing $p_j$ makes a leaf more expensive to leave unrepaired, so its optimal radius decreases.  Increasing $\kappa_j$ makes the same reduction in radius more costly, so the radius increases unless the query mass justifies the atoms.  The relevant local statistic is therefore the hardness-to-mass ratio $\kappa_j/p_j$, not hardness alone.

\begin{definition}[Hard-mass profile]
\label{def:hardmass}
For a power-law partition define
\[
  W_\Pi(t)=\sum_{j:\,\kappa_j/p_j\ge t}p_j,
\]
the query mass lying on leaves whose hardness-to-mass ratio is at least $t$.
\end{definition}

\begin{lemma}[Layer-cake form of repair deficiency]
\label{lem:layercake}
For every fixed power-law partition,
\[
  \sum_{j=1}^{m}p_j\logplus\frac{\kappa_j}{\Beff p_j}
  =\int_{0}^{\infty} W_\Pi(\Beff 2^u)\,du,
\]
where logarithms are base two.
\end{lemma}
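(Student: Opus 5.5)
The identity is the layer-cake (Tonelli) formula $\int_0^\infty \mathbf{1}[u\le x]\,du = x^+$, applied leaf by leaf and then summed. I treat $\Beff>0$ as a fixed positive parameter and interpret $\logplus x=\max\{0,\log x\}$ in base two.

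\textbf{Per-leaf identity.} For each leaf write $x_j=\kappa_j/(\Beff p_j)$. For every $u\ge 0$,
\[
  \kappa_j/p_j\ge \Beff 2^u \iff x_j\ge 2^u \iff u\le \log x_j .
\]
Hence
\[
  \int_0^\infty \mathbf{1}\bigl[\kappa_j/p_j\ge \Beff 2^u\bigr]\,du
  =\bigl|[0,\log x_j]\cap[0,\infty)\bigr|
  =\logplus x_j .
\]
When $x_j\le 1$ the set of admissible $u$ is empty, or is the single point $u=0$ when $x_j=1$; either way it has measure zero, which matches $\logplus x_j=0$.

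\textbf{Summation.} Multiply the per-leaf identity by $p_j$ and sum over $j$. By \cref{def:hardmass},
\[
  \sum_j p_j\,\mathbf{1}\bigl[\kappa_j/p_j\ge \Beff 2^u\bigr]=W_\Pi(\Beff 2^u).
\]
The partition is finite, so the sum and the integral can be interchanged by linearity. All integrands are nonnegative, so Tonelli justifies the exchange even if one prefers to allow countably many leaves. This yields the claimed equality.

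\textbf{Main obstacle.} The only real care is bookkeeping at the boundary cases. I need the strict threshold $\ge$ in $W_\Pi$ to agree with the $\logplus$ convention, and it does, because the two differ only on a null set of $u$. I also need $p_j>0$, which holds because the positive-mass convention stated in the paper applies to the leaves. If some $p_j=0$ were permitted, the term $p_j\logplus(\cdot)$ would be interpreted as $0$ and the leaf would contribute $0$ to $W_\Pi$, so the identity would still hold. As a by-product, $W_\Pi$ is a nonincreasing step function of $u$, so the integral is a finite sum.
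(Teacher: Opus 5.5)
Your proposal is correct and follows the paper's own argument. Like the paper, you write each $\logplus\bigl(\kappa_j/(\Beff p_j)\bigr)$ as $\int_0^\infty \mathbf{1}\{\kappa_j/(\Beff p_j)\ge 2^u\}\,du$, multiply by $p_j$, and interchange the finite sum with the integral to obtain $W_\Pi(\Beff 2^u)$. Your extra bookkeeping is accurate: the boundary point $u=\log x_j$ has measure zero, and $\Beff>0$ is what lets you rescale the threshold. One small wording slip: the threshold $\ge$ in $W_\Pi$ is non-strict, not strict as you wrote.
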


\begin{proof}
For each $j$, $\logplus(\kappa_j/(\Beff p_j))=\int_0^\infty \mathbf{1}\{\kappa_j/(\Beff p_j)\ge 2^u\}\,du$.  Multiplying by $p_j$ and interchanging the finite sum with the integral gives the identity.
\end{proof}

The hard-mass profile is the most concise way to state the obstruction.  Learned indexing is not defeated by a few pathological intervals of negligible probability; it is defeated when a non-negligible amount of query mass remains above the budget scale $B$ in hardness-to-mass ratio.

\subsection{Information barriers for routing and repair}

\begin{lemma}[Entropy lower bound for routing]
\label{lem:routing-lb}
Let $\Dir$ be any ordered binary directory tree whose leaves are queried with probabilities $p_1,\ldots,p_m$.  If leaf $j$ has depth $d_j$, then
\[
  \sum_{j=1}^{m}p_jd_j\ge \sum_{j=1}^{m}p_j\log\frac{1}{p_j}=H_\mu(\Pi).
\]
\end{lemma}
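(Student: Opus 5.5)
The plan is to prove this as the classical source-coding bound for prefix-free codes, using Kraft's inequality together with Gibbs' inequality. No earlier result of the paper is needed; the argument only uses that $\Dir$ is a binary tree.

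\textbf{Step 1: the directory is a prefix code.} Label each internal node's two outgoing edges $0$ and $1$. The root-to-leaf path of leaf $I_j$ is then a binary codeword of length $d_j$. Since leaves are not ancestors of one another, the codewords form a prefix-free set. The ordered (alphabetic) constraint is not needed for the lower bound. It only restricts which trees are admissible, so it can only increase the optimum.

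\textbf{Step 2: Kraft's inequality.} Show that $\sum_j 2^{-d_j}\le 1$. Induct on the tree: a single leaf gives $2^0=1$. For an internal node with subtrees $T_0,T_1$, each subtree sum is at most $1$ by induction. Every depth grows by one, so the total is at most $\tfrac12+\tfrac12=1$. If some directory nodes were unary or the tree were incomplete, the inequality is only strengthened. Zero-mass leaves are kept in the Kraft sum but contribute nothing to either side of the claimed bound.

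\textbf{Step 3: Gibbs' inequality.} Put $r_j=2^{-d_j}$ and restrict to leaves with $p_j>0$. Then
\[
\sum_j p_jd_j-\sum_j p_j\log\frac1{p_j}=\sum_j p_j\log\frac{p_j}{r_j}.
\]
By concavity of $\log$ (Jensen),
\[
-\sum_j p_j\log\frac{r_j}{p_j}\ge -\log\sum_{j:p_j>0} r_j\ge -\log 1=0.
\]
This equals $\KL(p\,\|\,r/\textstyle\sum r)-\log\sum r\ge 0$. Rearranging gives $\sum_jp_jd_j\ge H_\mu(\Pi)$.

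\textbf{Expected obstacle.} The only delicate point is the bookkeeping: zero-mass leaves and the convention $0\log(1/0)=0$. These are handled by restricting the Jensen step to the support while keeping $\sum r_j\le1$ over all leaves. Otherwise the argument is routine.
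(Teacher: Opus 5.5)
Your proof is correct and follows essentially the same route as the paper: Kraft's inequality on the leaf depths, then Gibbs' inequality. The paper normalizes $q_j=2^{-d_j}$ by $Q=\sum_j q_j$ and writes the excess as a divergence term plus $\log(1/Q)\ge 0$, which is exactly the decomposition you note at the end of Step 3; your inductive proof of Kraft and your explicit handling of zero-mass leaves are extra detail that the paper cites or leaves implicit.
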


\begin{proof}
By Kraft's inequality, $\sum_j2^{-d_j}\le 1$.  Let $q_j=2^{-d_j}$ and $Q=\sum_jq_j$.  If $Q=0$ the statement is trivial, so assume $Q>0$ and set $\widehat q_j=q_j/Q$.  Then $\widehat q$ is a probability distribution and
\[
  \sum_jp_jd_j=\sum_jp_j\log\frac{1}{q_j}
  =\sum_jp_j\log\frac{1}{p_j}+\sum_jp_j\log\frac{p_j}{q_j}.
\]
The second term equals
\[
  \sum_jp_j\log\frac{p_j}{\widehat q_j}+\log\frac{1}{Q}
  =\KL(p\|\widehat q)+\log\frac{1}{Q}.
\]
Both summands are nonnegative: the first by Gibbs' inequality and the second because $Q\le 1$.  Therefore $\sum_jp_jd_j\ge H_\mu(\Pi)$.
\end{proof}

\begin{lemma}[Nearly optimal alphabetic routing]
\label{lem:routing-ub}
For every ordered set of leaf probabilities $p_1,\ldots,p_m$, there exists an ordered binary directory tree with expected routing cost at most $H_\mu(\Pi)+2$.
\end{lemma}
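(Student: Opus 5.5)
The plan is to build the tree directly with a Shannon--Fano--Elias / Gilbert--Moore style alphabetic construction. This keeps the left-to-right leaf order automatically. Assume all $p_j>0$; zero-mass leaves can be attached at the bottom of the tree without affecting the expectation. Define the midpoint cumulative values
\[
  \sigma_j=\sum_{i<j}p_i+\tfrac12 p_j,\qquad j=1,\ldots,m,
\]
and the integer lengths
\[
  \ell_j=\left\lceil\log\frac{1}{p_j}\right\rceil+1 .
\]
Take as the codeword $c_j$ of leaf $j$ the first $\ell_j$ bits of the binary expansion of $\sigma_j$.

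The key step is to show that $c_1,\ldots,c_m$ is a prefix-free code whose lexicographic order equals the leaf order. Since $2^{-\ell_j}\le p_j/2$, the dyadic interval $[\lfloor\sigma_j\rfloor_{\ell_j},\lfloor\sigma_j\rfloor_{\ell_j}+2^{-\ell_j})$ determined by $c_j$ lies inside $(\sum_{i<j}p_i,\sum_{i\le j}p_i]$, at least up to the half-open boundary check. These cells are pairwise disjoint and increase with $j$. Disjoint dyadic intervals correspond to prefix-free codewords, and their left-to-right order is the lexicographic order of the codewords.

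Next I read the binary trie of these codewords as an ordered binary directory tree. An internal node is the comparison ``is $q$ at most the right boundary of the leftmost leaf in the left subtree?''. Because the leaves of the trie appear in order $I_1,\ldots,I_m$, each such comparison is a single key/boundary comparison, so the trie is a legitimate $\Dir$. Some internal nodes may have only one child; I contract these, which only decreases depths. The depth of leaf $j$ is then at most $\ell_j$, so
\[
  \sum_j p_j d_j\le\sum_j p_j\ell_j<\sum_j p_j\left(\log\frac1{p_j}+2\right)=H_\mu(\Pi)+2 .
\]

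The main obstacle is the containment step: verifying that truncating $\sigma_j$ to $\ell_j$ bits stays strictly inside leaf $j$'s cumulative cell, including the endpoint conventions. I would handle it with the standard inequality
\[
  \sigma_j-2^{-\ell_j}<\lfloor\sigma_j\rfloor_{\ell_j}\le\sigma_j .
\]
Combined with $2^{-\ell_j}\le p_j/2$, this places the whole dyadic cell within $(\sigma_j-p_j/2,\,\sigma_j+p_j/2)$. This is exactly the Gilbert--Moore argument, and everything else is bookkeeping.
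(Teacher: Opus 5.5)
Your argument is correct and is essentially the self-contained version of what the paper does by citation: the paper invokes Mehlhorn's alphabetic construction (code intervals of length proportional to $p_j$, giving depths below $\log(1/p_j)+2$), and your Gilbert--Moore midpoint truncation is that same interval-coding idea with the containment and prefix-freeness checks written out. Two things to fix. The comparison at an internal node must be against the boundary separating the two subtrees, i.e.\ the right endpoint of the \emph{rightmost} leaf of the left subtree, not of the leftmost one. Your aside on zero-mass leaves needs more care than stated, since hanging such a leaf beside a positive-mass leaf pushes that leaf one level deeper; the aside is unnecessary anyway, because the paper assumes positive leaf masses.
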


\begin{proof}
The ordered leaves define an alphabetic distribution: every admissible routing tree must keep the leaves in the left-to-right order of the key intervals.  Mehlhorn's construction for alphabetic search trees assigns to each ordered probability $p_j$ a code interval of length proportional to $p_j$ and chooses split keys so that the resulting binary tree has leaf depths satisfying an expected-depth bound
\[
  \sum_jp_jd_j\le H(p_1,\ldots,p_m)+2
\]
\cite{mehlhorn1975obst}.  Applying that construction to the leaf masses gives an ordered directory tree, because the construction never permutes the leaves.  The expected routing cost of this directory is therefore at most $H_\mu(\Pi)+2$.
\end{proof}

\begin{lemma}[Conditional residual-answer lower bound]
\label{lem:residual-entropy}
Condition on reaching a leaf $I$ with predictor $h$ and certificate radius $\Delta$.  Let $Y=Y_{I,h}(q)$ be the pre-repair transcript and let $A=A_I(q)$ be the exact predecessor or rank answer for $q\sim\mu_I$.  Every comparison-based repair procedure correct for all queries in $I$ has expected depth at least
\[
  H(A\mid Y)-O(1).
\]
\end{lemma}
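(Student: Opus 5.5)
The plan is to condition on each transcript value $y$ and reduce to the classical entropy lower bound for comparison trees. Fix a positive-probability transcript value $y$ of $Y$. The repair procedure, having seen $y$, runs a comparison tree $T_y$: internal nodes compare $q$ with stored keys (three-way or two-way), and leaves output an answer. The transcript is finite information that determines which tree is used but, by the definition of the pre-repair transcript, contains no comparison outcomes. So the repair cost on query $q$ equals the depth of the leaf of $T_{Y(q)}$ reached by $q$. Correctness for all queries in $I$ forces each leaf of $T_y$ to be labelled with the single value $A(q)$ shared by all $q$ reaching it. Hence the leaf reached determines $A$, and $A$ is a deterministic function of the root-to-leaf path in $T_y$.

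Next, bound the expected depth of $T_y$ under $P_{q\mid Y=y}$. If comparisons are binary, the path is a prefix-free binary codeword. \cref{lem:routing-lb} (Kraft plus Gibbs) then gives expected depth at least the entropy of the leaf distribution, which is at least $H(A\mid Y=y)$ because $A$ is a function of the leaf (data processing). If comparisons are three-way ($<,=,>$), the ternary Kraft inequality gives expected depth at least $H(A\mid Y=y)/\log 3$. To keep an additive $O(1)$ loss rather than a multiplicative constant, I would simulate each three-way comparison by a binary test for $<$ followed, only on the branch where it is needed, by a test for $=$. In a search tree the equality outcome terminates the relevant branch (at most once per path, since an equality fixes the predecessor), so the simulation costs at most one extra binary comparison per root-to-leaf path. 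This yields $\E[\text{depth}\mid Y=y]\ge H(A\mid Y=y)-1$.

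Finally, average over $y$ with weights $\Prb[Y=y]$, using the definition of $\Rep_\mu$ as $\sum_y \Prb[Y=y]H(A\mid Y=y)$. For non-finite measurable transcripts, the same pointwise bound integrates against the regular conditional distribution. The result is expected repair depth at least $H(A\mid Y)-O(1)$.

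I expect the main obstacle to be the middle step, namely getting an additive rather than multiplicative loss under three-way comparisons while making sure no uncharged information leaks into $T_y$. The leakage issue is settled by the admissibility convention that repair receives no rank oracle beyond the transcript. For the three-way issue, my fallback is to state the model as binary comparisons, as the directory already is, and note that equality tests add at most one step per path.
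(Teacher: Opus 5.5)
Your binary-comparison argument is correct and is essentially the paper's proof. You fix $y$, treat the repair procedure as a comparison tree specialized to $y$, observe that correctness makes the exact answer a function of the leaf reached, apply the binary Kraft/Shannon bound, and average over $y$. Your data-processing step, expected depth $\ge$ entropy of the leaf reached $\ge H(A\mid Y=y)$, is a cleaner rendering of the paper's ``collapse leaves with the same exact answer'' step. In the binary model it gives the bound with no $-O(1)$ loss at all.

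The three-way extension, however, is wrong, and it cannot be repaired. In your simulation, the test for $=$ must be executed every time the $<$ test fails; otherwise you cannot decide whether to enter the $=$ subtree or the $>$ subtree. The $=$ \emph{outcome} occurs at most once per path, but the $=$ \emph{test} costs one extra comparison at every right turn, which can be the whole depth. You could instead drop the $=$ test and send $q=x_k$ into the $>$ subtree. That is still correct for rank, but successful lookups then lose their early termination.

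That loss is inherent. Take the empty predictor, so the transcript is constant, on a leaf whose keys form a complete BST with levels $j=0,\dots,k-1$. Put successful-lookup mass $2^{-j}/k$ on each of the $2^j$ keys at level $j$. Then
$H(A\mid Y)=\sum_{j}(j+\log k)/k=(k-1)/2+\log k$.
Three-way BST search, which is correct on all queries in the leaf, uses $\sum_{j}(j+1)/k=(k+1)/2$ comparisons in expectation. The deficit is $\log k-1$, which is unbounded.

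So with three-way comparisons the additive statement is false, and only your ternary-Kraft bound $H(A\mid Y)/\log 3$ survives. The lemma must be read in the binary probe model, with probes ``is $q<S[k]$?''. This is the paper's convention: the transcript box names exactly these probes, and the proof invokes the bound for binary prefix codes. Delete the simulation step and the fallback remark that equality tests add at most one step per path. Then state the binary comparison model explicitly, and your proof is complete.
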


\begin{proof}
Fix a transcript value $Y=y$.  Once $y$ is fixed, the repair algorithm may choose a decision tree specialized to that transcript; the only remaining branches are comparisons between $q$ and stored keys in the certified interval.  The leaves of this tree refine the possible exact answers.  Collapsing all leaves with the same exact answer cannot increase expected depth, and yields a prefix code for the conditional answer distribution $A\mid Y=y$.  Shannon's lower bound for binary prefix codes gives conditional expected depth at least $H(A\mid Y=y)-O(1)$.  Averaging over transcript values gives $H(A\mid Y)-O(1)$.
\end{proof}

\begin{lemma}[Worst-case window lower bound]
\label{lem:repair-lb}
Fix a leaf interval $I$ with promised error radius $\Delta\ge 0$.  In the worst case, exact repair requires $\Omega(\log(1+\Delta))$ comparisons.
\end{lemma}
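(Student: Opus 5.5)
The plan is an adversary (or counting) argument restricted to the certified window. The statement is worst-case, so no workload assumption is needed. It suffices to show that the window can contain $\Theta(1+\Delta)$ distinct exact answers that repair must tell apart.

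\textbf{Step 1: exhibit many answers.} I would read the promise ``error radius at most $\Delta$'' as certifying the rank interval $[\widehat r-\Delta,\widehat r+\Delta]$ around the predicted value $\widehat r$. It is not tied to a particular realized predictor that happens to be better. The lemma then concerns a leaf $I$ whose certificate is tight, meaning some transcript value $y$ is consistent with queries whose exact answers range over at least $1+\Delta$ distinct ranks in $\{0,\ldots,n\}$. This holds, for instance, for the empty predictor with $\Delta=R_I-1$, or for any leaf whose local rank diameter satisfies $R_I\ge 1+\Delta$ with a predictor whose output is constant on a subinterval covering those ranks. I would state this nondegeneracy explicitly, since an exact predictor with a loose advertised $\Delta$ trivially violates the conclusion. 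I would fix one such transcript value $y$ and choose one query $q_k$ in each of $K\ge c(1+\Delta)$ distinct rank gaps, all producing transcript $y$.

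\textbf{Step 2: counting.} Once $y$ is fixed, the repair procedure is a binary comparison decision tree, as in the proof of \cref{lem:residual-entropy}. Its leaves must output at least $K$ distinct answers, one for each $q_k$, because the procedure is correct on every query in $I$. A binary tree with at least $K$ leaves has depth at least $\lceil\log_2 K\rceil$. Hence some $q_k$ forces at least $\log_2(c(1+\Delta))=\Omega(\log(1+\Delta))$ comparisons.

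\textbf{Alternative route.} The same conclusion follows from \cref{lem:residual-entropy}. Put the uniform workload on $\{q_k\}$; then $H(A\mid Y)=\log K$, and worst-case depth is at least expected depth, which is at least $\log K-O(1)$.

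\textbf{Main obstacle.} The hard part is Step 1, making precise that the certificate genuinely leaves $\Theta(1+\Delta)$ answers indistinguishable by the transcript. The remaining steps are routine.
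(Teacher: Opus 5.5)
Your proposal is correct and follows essentially the paper's argument: fix one transcript/window that leaves $\Theta(1+\Delta)$ feasible exact answers, then observe that a binary comparison tree distinguishing them needs at least $\log_2$ of that many levels. Your explicit nondegeneracy hypothesis, namely a single transcript value $y$ consistent with $\Omega(1+\Delta)$ distinct answers, makes precise two things the paper leaves implicit: its phrase ``choose a transcript/window in which $\Theta(1+\Delta)$ ranks are feasible'', and its expanded proof's ``ignoring boundary effects'' and ``preserving the same prediction guarantee''.
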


\begin{proof}
Choose a transcript/window in which $\Theta(1+\Delta)$ ranks are feasible.  An exact decision tree distinguishing these possibilities has at least that many leaves and therefore depth $\Omega(\log(1+\Delta))$.
\end{proof}

\begin{lemma}[Window repair upper bound]
\label{lem:repair-ub}
Fix a leaf interval $I$ with promised error radius $\Delta\ge 0$.  Exact repair can be done in $O(\log(1+\Delta))$ comparisons.
\end{lemma}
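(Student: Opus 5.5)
The plan is to exhibit an explicit repair procedure: a binary search confined to the certified window. Everything reduces to two facts. First, the window contains few keys. Second, the rank and predecessor are determined by comparisons against those keys alone.

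\textbf{Step 1: form the window.} Let $\widehat r$ be the predictor output for a query $q\in I$. The certificate guarantees $|\widehat r-\rank_S(q)|\le\Delta$. Hence the true rank lies in the integer window
\[
W=[\max\{0,\lfloor\widehat r\rfloor-\Delta\},\ \min\{n,\lceil\widehat r\rceil+\Delta\}],
\]
which contains at most $2\Delta+3=O(1+\Delta)$ candidate ranks. When the empty predictor is used, take $W$ to be the full rank range of the leaf, which has $R_I$ candidates. This is consistent with $\Delta=R_I$, so the trivial case is covered by the same argument.

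\textbf{Step 2: binary search inside $W$.} Write $W=[\ell,u]$. We need the largest $r\in W$ with $x_r\le q$, using the convention $x_0=-\infty$, so that $r=\ell$ is the default. The function $r\mapsto \mathbf{1}\{x_r\le q\}$ is monotone because $S$ is sorted. It equals $1$ at $r=\rank_S(q)$ and $0$ at $r=\rank_S(q)+1$, and both of these indices lie in $W$ or just past its right end. Standard binary search over the $|W|$ indices therefore returns $\rank_S(q)$. It uses $\lceil\log_2(|W|+1)\rceil=O(\log(1+\Delta))$ comparisons of $q$ against stored keys $S[k]$. The predecessor is then $x_{\rank_S(q)}$, read from the sorted array, or the sentinel when the rank is $0$. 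This costs no additional comparisons.

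\textbf{Main obstacle: boundary bookkeeping.} The only delicate part is making sure the claimed window really brackets the answer. Three details need care. We must round $\widehat r$ correctly, clip the window to $[0,n]$, and check that correctness never depends on comparing $q$ against a key outside $W$. The certificate handles the last point, since it holds for every $q\in I$, not merely on average. A further subtlety arises for $\Delta=0$: the bound $O(\log 1)=O(1)$ must be read as a constant. Here $|W|\le 3$, which gives at most $2$ comparisons, so the constant hidden in the $O(\cdot)$ absorbs this case.
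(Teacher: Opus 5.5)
Your proposal is correct and is essentially the paper's proof: clip the certified window to $\{0,\ldots,n\}$, note that it has $O(1+\Delta)$ candidate ranks, and binary-search the monotone boundary inside it, with correctness coming from the worst-case certificate. Your explicit rounding and your remark on $\Delta=0$ are harmless refinements. In fact a radius-$0$ certificate forces $\widehat r=\rank_S(q)$, so your window is a singleton and no comparisons are needed at all.
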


\begin{proof}
Fix the transcript produced before repair.  The certificate gives a clipped interval of candidate ranks
\[
  [\ell,u]\subseteq\{0,1,\ldots,n\}
\]
with $u-\ell+1\le c(1+\Delta)$ for an absolute constant $c$; boundary clipping can only shrink the interval.  The predecessor or rank answer is the first position in this interval whose stored key is not smaller than $q$, with the usual sentinel convention at the endpoints.  Standard binary search on $S[\ell],\ldots,S[u]$ identifies this boundary using at most
\[
  \lceil\log_2(u-\ell+2)\rceil=O(\log(1+\Delta))
\]
comparisons.  The returned position is exact because the certificate guarantees that the true answer lies in the interval.
\end{proof}

\begin{lemma}[Rank-spread expected repair]
\label{lem:rank-spread-repair}
If a leaf is rank-spread, then the expected repair cost under the conditional workload is $\Omega(\log(1+\Delta))$ in the fixed-radius case, and $\Omega(\mathbb{E}[\log(1+\Delta_Y)])$ in the transcript-dependent radius case.
\end{lemma}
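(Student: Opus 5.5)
The plan is to combine the two results already established for a single leaf: the conditional residual-answer lower bound (\cref{lem:residual-entropy}) and the rank-spread entropy bound (\cref{lem:spread-surrogate}). Fix a rank-spread leaf $I$ with predictor $h$, and let $Y=Y_{I,h}(q)$ for $q\sim\mu_I$. \Cref{lem:residual-entropy} says that every correct comparison-based repair procedure has expected depth at least $H(A\mid Y)-O(1)$. It therefore suffices to show that $H(A\mid Y)=\Omega(\mathbb{E}[\log(1+\Delta_Y)])$, and then to deal with the additive constant.

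For the entropy bound I will work transcript by transcript. Fix a positive-probability value $y$. By the rank-spread definition there is a set $T_y$ of at least $c_0(1+\Delta_y)$ ranks, and each of them has conditional probability at least $c_1/(1+\Delta_y)$. Every $r\in T_y$ has $\log(1/P(A=r\mid y))\ge\log(1/P(A=r\mid y))$ trivially, so the useful bound comes from the fact that the total mass is at most $1$. That forces $P(A=r\mid y)\le 1$, which is too weak, so I argue directly instead:
\[
H(A\mid Y=y)\ \ge\ \sum_{r\in T_y}P(r\mid y)\log\frac1{P(r\mid y)}.
\]
On this set the total mass is at least $c_0c_1$, and $T_y$ has at least $c_0(1+\Delta_y)$ elements. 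Grouping $T_y$ into one event and applying the grouping and concavity property of entropy, the contribution is at least $c_0c_1\log\bigl(c_0(1+\Delta_y)\bigr)-O(1)$. This is the spread-to-entropy step behind \cref{lem:spread-surrogate}, and it gives $H(A\mid Y=y)\ge c_0c_1\log(1+\Delta_y)-O(1)$. Averaging over $y$ yields $H(A\mid Y)\ge c_0c_1\,\mathbb{E}[\log(1+\Delta_Y)]-O(1)$. In the fixed-radius case this is $\Omega(\log(1+\Delta))-O(1)$.

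The main obstacle is the additive $-O(1)$ terms, which come from \cref{lem:residual-entropy} and from the grouping step. When $\log(1+\Delta)$ exceeds a sufficiently large constant, these terms are absorbed into the $\Omega$. In the remaining range, where $\Delta$ is at most a constant, the target bound is $O(1)$. Here the rank-spread condition still guarantees at least two plausible answers once $c_0(1+\Delta_y)\ge2$, and distinguishing two answers of non-negligible probability costs at least one comparison with constant probability. So the expected depth is bounded below by a positive constant, and in this range that constant is $\Omega(\log(1+\Delta))$. If $c_0(1+\Delta_y)<2$ allows a single answer, then $\log(1+\Delta_y)=O(1)$ and the statement is read with the convention that $\Omega(\cdot)$ is taken for $\Delta$ above a constant threshold. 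I would state that convention explicitly.

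For the transcript-dependent case, I apply the same constant-absorption argument separately to each $y$ before averaging, splitting transcripts into large-radius and small-radius classes. This handles the large-radius class, but \cref{lem:residual-entropy}'s $-O(1)$ is only stated for the averaged quantity, not per transcript. So for the large-radius class I would instead redo the prefix-code bound on the decision tree specialized to $y$, which gives the per-transcript inequality directly.
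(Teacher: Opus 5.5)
Your skeleton is the paper's: a per-transcript lower bound on $H(A\mid Y=y)$ from rank-spread, averaged over $y$ and fed into \cref{lem:residual-entropy}. The gap is in the one substantive step, turning rank-spread into $H(A\mid Y=y)=\Omega(\log(1+\Delta_y))$, which your argument does not actually establish.

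Write $p_r=\Prb[A=r\mid Y=y]$ and $s=\sum_{r\in T_y}p_r\ge c_0c_1$. Grouping gives $\sum_{r\in T_y}p_r\log(1/p_r)=s\log(1/s)+s\,H(p_{T_y}/s)$. Concavity of entropy then gives the upper bound $H(p_{T_y}/s)\le\log|T_y|$, not a lower bound. Indeed, the two facts you feed into this step, $s\ge c_0c_1$ and $|T_y|\ge c_0(1+\Delta_y)$, are consistent with all of $s$ sitting on one element of $T_y$, which yields only $s\log(1/s)=O(1)$.

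The inequality you state is true, but what proves it is exactly the constraint you set aside. $\sum_r p_r\le 1$ does not merely give $p_r\le 1$; it gives $p_r\le 2/|T_y|$ for more than half of $T_y$, since otherwise at least $|T_y|/2$ elements would each carry more than $2/|T_y|$. Each such $r$ also has $p_r\ge c_1/(1+\Delta_y)$, so it contributes $p_r\log(1/p_r)\ge\frac{c_1}{1+\Delta_y}\log\frac{c_0(1+\Delta_y)}{2}$. There are at least $c_0(1+\Delta_y)/2$ such $r$, so $H(A\mid Y=y)\ge\frac{c_0c_1}{2}\log\frac{c_0(1+\Delta_y)}{2}$.

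A concavity argument can only be made to work by minimizing at an extreme point of $\{p_r\ge c_1/(1+\Delta_y),\ \sum_r p_r\le 1\}$, which again uses the mass constraint. The paper instead bounds each term by its value at the minimal probability $c_1/(1+\Delta_y)$. That is legitimate for the terms with $p_r\le 1/e$, where $x\log(1/x)$ is increasing, and at most two elements of $T_y$ exceed $1/e$.

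With that repair, the rest of your plan goes through. Your handling of the additive constant is more careful than the paper's, which applies \cref{lem:residual-entropy} and silently drops its $-O(1)$. Three of your moves are sound:
\begin{itemize}
\item absorbing the constant when the radius is large;
\item observing that two positive-probability answers force at least one comparison;
\item stating a threshold convention when $c_0(1+\Delta_y)<2$ permits a single answer. Here the literal multiplicative claim can indeed fail: an exact predictor with a larger certified radius satisfies rank-spread yet needs no repair comparisons.
\end{itemize}
The worry in your last paragraph also dissolves. For fixed $y$ the output is a function of the leaf reached, so Kraft's inequality gives expected depth at least $H(A\mid Y=y)$ transcript by transcript, with no additive loss at all.
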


\begin{proof}
Condition on a positive-probability transcript value $Y=y$.  By rank-spread there is a set $R_y$ of at least $c_0(1+\Delta_y)$ residual exact answers, and each answer in $R_y$ has conditional probability at least $c_1/(1+\Delta_y)$.  Therefore
\[
\begin{aligned}
  H(A\mid Y=y)
  &\ge \sum_{a\in R_y}\Prb[A=a\mid Y=y]\log\frac{1}{\Prb[A=a\mid Y=y]}\\
  &\ge c_0(1+\Delta_y)\frac{c_1}{1+\Delta_y}
        \log\frac{1+\Delta_y}{c_1}\\
  &=\Omega(\log(1+\Delta_y)),
\end{aligned}
\]
where the hidden constant depends only on the fixed spread constants.  Applying \cref{lem:residual-entropy} to the conditional repair tree and averaging over $y$ gives the transcript-dependent lower bound.  If the radius is fixed, then $\Delta_y=\Delta$ for every transcript value and the expectation reduces to $\Omega(\log(1+\Delta))$.
\end{proof}

These lemmas identify the two independent sources of cost.  Routing is constrained by Kraft's inequality; repair is constrained by residual answer entropy.  The next section combines them with the atom-budget constraint.

\section{Static Residual Law}
\label{sec:static}

We now separate the two mathematical statements that were often conflated in informal learned-index discussions.  The first theorem is the general residual-entropy law.  The second theorem is the radius-based specialization obtained only under rank-spread.  Throughout this section, $B$ is the local predictor-atom budget from the boxed accounting convention; repair-program tables, directory records, and the sorted array are outside $B$.

\begin{center}
\fbox{\begin{minipage}{0.91\linewidth}
\textbf{Finite-transcript and repair-program accounting.}
The theorem uses finite word-RAM transcripts: predictor coefficients, rounded outputs, certificate endpoints, and local side information are finite words.  The upper-bound direction is a coding construction and may use a transcript-indexed repair program.  That repair program is outside $B$ by convention and must be reported as separate repair-program memory in an implementation.  If it is charged to the same memory resource as local atoms, the theorem becomes a two-resource statement with an explicit repair-program space term.

No compactness claim is made for this repair-program table.  If $Y$ has finite range, let $K_y$ be the number of exact answers attainable by queries in the required domain under transcript $y$.  Only the positive-mass subset of these answers contributes to $H(A\mid Y=y)$, but a materialized repair tree correct on the whole required domain must still represent all $K_y$ attainable answers.  It can be stored with $O(K_y)$ tree nodes for transcript $y$.  Thus the total repair-program description can be as large as $\Theta(\sum_y K_y)$ nodes, and in the worst case as large as the number of finite transcript values times the certified-window support.  This is why the theorem is a local atom-budget time accounting statement within this architecture, not a systems space bound.

The accompanying Lean and Coq files are modest sanity checks for this finite interface.  They machine-check transcript/window/accounting invariants used below: residual offsets are bounded by and recover the certified window position, singleton windows have zero residual ambiguity, repair-program answers remain inside the transcript window, and charging directory or repair-program bytes outside $B$ does not change the local atom count.
\end{minipage}}
\end{center}

\begin{theorem}[Residual-entropy accounting law]
\label{thm:residual-law}
Under the accounting convention above, for every ordered set $S$, workload $\mu$, atomized model family $\M$, and atom budget $B$, every routed piecewise learned index $D$ using at most $B$ local model atoms satisfies
\[
  \E_\mu[T_D]=\Omega(\RGap_\M(S,\mu,B)).
\]
Conversely, for every $\eta>0$ there is a routed piecewise learned index using at most $B$ atoms and satisfying
\[
  \E_\mu[T_D]\le O(\RGap_\M(S,\mu,B)+1)+\eta.
\]
The constants are independent of $S$, $\mu$, and $B$ under the finite-encoding convention above.  The atom budget counts only local predictor records.  Directory topology, directory keys, the sorted array, and any stored transcript-to-repair-program tables are outside $B$ and are accounted for by routing/repair comparisons or by an explicit additional memory term.  No bound in terms of $B$ alone is claimed for the size of a materialized transcript-to-repair-program table.
\end{theorem}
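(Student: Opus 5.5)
The plan is to prove the two directions separately. Both rest on the decomposition of an expected query time into a routing part and a repair part, conditioned on the leaf reached.

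\textbf{Lower bound.} Fix a routed piecewise learned index $D$ with partition $\Pi=\{I_1,\ldots,I_m\}$, predictors $h_j\in\M_{a_j}$ with $\sum_j a_j\le B$, certificates $\Delta_j$, and leaf depths $d_j$. A query $q\in I_j$ costs at least $d_j$ routing comparisons plus the depth of the repair tree it follows after the transcript $Y_{I_j,h_j}(q)$ has been computed. Prediction costs are nonnegative and can be dropped. By \cref{lem:routing-lb}, $\sum_j p_jd_j\ge H_\mu(\Pi)$. By \cref{lem:residual-entropy} applied to the conditional workload $\mu_{I_j}$, the expected repair depth in leaf $j$ is at least $\Rep_\mu(I_j,h_j,\Delta_j)-O(1)$. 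Combining the two gives
\[
\E_\mu[T_D]\ge H_\mu(\Pi)+\sum_jp_j\Rep_\mu(I_j,h_j,\Delta_j)-O(1)\ge \RGap_\M(S,\mu,B)-O(1),
\]
since $(\Pi,a_j,h_j,\Delta_j)$ is feasible in \eqref{eq:rgap-def}.

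To remove the additive $O(1)$ and obtain a pure $\Omega(\cdot)$, I will argue that any correct structure has $\E_\mu[T_D]\ge c$ for an absolute $c>0$ whenever $\RGap_\M>0$. The point is that a query distribution with nonconstant answer forces at least one comparison on a constant fraction of the mass. Then I take the maximum of the two bounds, using $\max\{x-C,c\}\ge c'x$. If $\RGap_\M=0$ there is nothing to prove. The case $\RGap_\M$ positive but tiny is where I must be careful. If that degenerate case does not close, I will settle for the honest form $\Omega(\RGap_\M)-O(1)$ and note that the additive constant is absorbed by the $+1$ in the upper bound.

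\textbf{Upper bound.} Given $\eta>0$, choose a feasible tuple $(\Pi,a_j,h_j,\Delta_j)$ whose objective is within $\eta'$ of the infimum; the value $\eta'$ is fixed below. Build the directory with \cref{lem:routing-ub}, so that expected routing cost is at most $H_\mu(\Pi)+2$. For each leaf $j$ and each transcript value $y$, materialize a repair decision tree over the attainable answers $K_y$ inside the certified window. This tree is outside $B$ by the accounting convention. Since the answers in the window are ordered, and a comparison of $q$ with $S[k]$ splits them at a threshold, every such tree is alphabetic on the ordered answer set. Apply Mehlhorn's construction (\cref{lem:routing-ub}) to the conditional distribution $P_{A\mid Y=y}$. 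Zero-mass attainable answers are given a small positive weight so that the tree remains correct on the whole required domain, and that weight is chosen so the expected cost rises by at most $\eta'$. This yields expected repair cost at most $H(A\mid Y=y)+2+\eta'$. Averaging over $y$ and then over leaves gives
\[
\E_\mu[T_D]\le H_\mu(\Pi)+\sum_jp_j\Rep_\mu(I_j,h_j,\Delta_j)+4+O(\eta')+t_{\mathrm{pred}}.
\]
Here $t_{\mathrm{pred}}$ is the predictor evaluation cost, which is $O(1)$ per query for finite-word atoms under the paper's convention. Choosing $\eta'$ suitably then yields $O(\RGap_\M+1)+\eta$.

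\textbf{Main obstacle.} The hardest step is making the lower-bound argument rigorous about what the repair tree may depend on. \Cref{lem:residual-entropy} needs the transcript to be exactly the information available before the first stored-key comparison. I must check that directory comparisons do not leak extra answer information beyond the leaf identity. They do not: routing outcomes are determined by the leaf, because leaves are contiguous intervals and the directory is a tree. A second, subtler issue is that the predictor's output range and the per-query evaluation cost must not scale with $S$; I rely here on the finite-encoding convention. I would also flag that infimum attainment is not needed, since the $\eta$ slack handles it.
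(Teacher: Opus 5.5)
Your proposal follows the paper's proof essentially step for step. For the lower bound you combine \cref{lem:routing-lb} with \cref{lem:residual-entropy} leaf by leaf and then use feasibility of the induced tuple in \eqref{eq:rgap-def}. For the upper bound you take a near-optimal tuple, a Mehlhorn directory, and transcript-indexed alphabetic repair trees in which zero-mass attainable answers get a small auxiliary weight. This matches the paper, which also simply writes $\RGap_\M-O(1)=\Omega(\RGap_\M)$.

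The one place you go beyond the paper is your attempt to remove the additive $O(1)$, and the justification you give there is wrong. A nonconstant answer need not force a comparison on a constant fraction of the mass. A single leaf's transcript can determine the answer on all but a $\delta$ fraction of queries and leave two answers on the rest. Then the expected comparison count, and that tuple's objective, are only of order $\delta$, while $\RGap_\M$ can still be positive. So $\E_\mu[T_D]\ge c$ does not follow from $\RGap_\M>0$ when only comparisons are counted.

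The patch is unnecessary, though, because the entropy bounds are exact for binary decision trees. The repair tree used under transcript $y$ has expected depth at least the entropy of the leaf reached, by Kraft/Shannon. That entropy is at least $H(A\mid Y=y)$, since correctness makes $A$ a function of the leaf. Combined with the exact routing bound $\sum_j p_jd_j\ge H_\mu(\Pi)$, this gives $\E_\mu[T_D]\ge\RGap_\M(S,\mu,B)$ with no additive loss. That settles the small-$\RGap_\M$ case cleanly, and is slightly sharper than what the paper writes.
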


\begin{proof}
Let $D$ induce the ordered leaf partition $\Pi=\{I_1,\ldots,I_m\}$, leaf masses $p_j$, routing depths $d_j$, charged predictors $h_j\in\M_{a_j}$, and certificates $\Delta_j$.  The directory transcript is a binary prefix code for the leaf identity, so \cref{lem:routing-lb} gives
\[
  \E[T_D^{\mathrm{route}}]\ge \sum_j p_jd_j\ge H_\mu(\Pi).
\]
Now condition on the event $q\in I_j$.  Before repair begins the algorithm may know the full pre-repair transcript $Y_{I_j,h_j}(q)$: the leaf, the predictor output, the certified window, and all deterministic information obtained from charged local atoms.  It may not know the outcomes of comparisons between $q$ and stored keys, because those are precisely the repair probes.  By \cref{lem:residual-entropy}, the conditional expected number of repair comparisons is at least
\[
  H(A_{I_j}(q)\mid Y_{I_j,h_j}(q))-O(1)
  =\Rep_\mu(I_j,h_j,\Delta_j)-O(1).
\]
Summing over leaves yields
\[
  \E[T_D]\ge H_\mu(\Pi)+\sum_jp_j\Rep_\mu(I_j,h_j,\Delta_j)-O(1).
\]
Since $\sum_j a_j\le B$, the induced partition and charged predictors are feasible for \cref{eq:rgap-def}.  Taking the infimum over all feasible tuples gives the stated lower bound.

For the upper bound, choose a feasible tuple in \cref{eq:rgap-def} within $\eta$ of the infimum.  Build a nearly optimal alphabetic directory on the selected leaves, with expected depth at most $H_\mu(\Pi)+2$.  The next step is the non-implementation step in the theorem: it is a finite coding construction for the repair phase, not the repair algorithm used by the benchmark.  In the finite word-RAM transcript model, each leaf has a finite set of possible pre-repair transcript values.  For every positive-probability value $y$, consider all $K_y$ exact answers attainable in the required query domain under $y$.  Give the zero-mass attainable answers total auxiliary probability $\varepsilon_y>0$, preserving their order, and rescale the positive conditional distribution by $1-\varepsilon_y$.  A nearly optimal alphabetic comparison tree for this full finite distribution is correct for every attainable answer, and its expected depth on the original conditional distribution is at most $H(A\mid Y=y)+2+o_{\varepsilon_y}(1)$.  For attainable transcript values having zero probability under $\mu$, store any correct finite comparison tree; they contribute memory but no expected-time term.  Because there are finitely many positive-probability transcripts, choose the auxiliary masses so that their averaged overhead is below $\eta$.

This is where the accounting convention matters.  The family of transcript-indexed repair trees is a finite repair program: a dispatch table from transcript codes to comparison-tree descriptions, or equivalently one finite comparison program that branches first on the transcript and then on ordered key comparisons.  Its description is not counted as local predictor atoms and is not hidden in $B$; it is an explicit non-atom resource.  The table may be large.  An explicit description needs $O(K_y)$ tree nodes for a transcript with $K_y$ attainable answers, so a materialized construction can require $\Theta(\sum_y K_y)$ repair-program nodes and admits no general bound in terms of $B$ alone.  If a pure byte-space theorem is desired, this table must be charged as an additional repair-program memory term, just as a systems artifact must report directory and repair-program bytes separately.  With such a charge, the clean $\RGap_\M(S,\mu,B)$ accounting statement becomes a two-resource tradeoff involving both local atoms and repair-program bytes.  Averaging over transcript values and leaves gives repair cost at most $\sum_jp_j\Rep_\mu(I_j,h_j,\Delta_j)+O(1)+\eta$.  For measurable real-output predictors, the implementation first fixes a finite word encoding or entropy-preserving quantization of the transcript; any additional precision used by the repair program is charged outside $B$ by the same convention.  Letting the feasible tuple approach the infimum proves the claim.
\end{proof}

The upper-bound construction is an information-theoretic coding construction.  It does not assert that a production learned index should materialize one bespoke repair tree for every transcript.  Practical systems usually use binary, exponential, or interpolation repair inside the certified window; the benchmark follows this practical route.  The theorem says what repair time is achievable when the finite repair program is accounted for outside the local atom budget, while the experiments report explicit repair implementations and their routing/cache overheads.

\begin{corollary}[Radius surrogate under rank-spread]
\label{cor:rank-spread-gap}
Let $D$ be any rank-spread routed piecewise learned index for $(S,\mu)$ using at most $B$ model atoms from class $\M$.  Then
\[
  \E_\mu[T_D]=\Omega(\Gap_{\M}(S,\mu,B)),
\]
where the constant may depend on the fixed spread constants but not on $n$, $B$, $S$, or $\mu$.  Conversely, there is a routed piecewise learned index with expected query time $O(\Gap_\M(S,\mu,B)+1)$.
\end{corollary}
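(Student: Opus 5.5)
The plan is to prove the two directions separately. The lower bound goes through the decomposition already used for \cref{thm:residual-law}. The upper bound is an explicit construction using \cref{lem:routing-ub} and \cref{lem:repair-ub}; it does not need the transcript-indexed repair program at all.

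For the lower bound, fix a rank-spread index $D$ with partition $\Pi=\{I_1,\ldots,I_m\}$, masses $p_j$, atoms $a_j$ with $\sum_j a_j\le B$, and certificates $\Delta_j$. Routing cost is at least $H_\mu(\Pi)$ by \cref{lem:routing-lb}. Conditioned on leaf $I_j$, \cref{lem:rank-spread-repair} gives expected repair cost $\Omega(\log(1+\Delta_j))$ in the fixed-radius case. In the transcript-dependent case it gives $\Omega(\E[\log(1+\Delta_Y)])$, and I would replace $\Delta_j$ by the per-leaf worst certified radius only after checking $\Delta_Y\ge$ a constant fraction of $\Delta_j$. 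If that fails, I would instead note that the tuple $(\Pi,\Delta_Y)$ is still feasible for \eqref{eq:gap-def}, since $\Comp$ is nonincreasing and $h_j$ certifies radius $\Delta_j$. Summing over leaves gives $\E_\mu[T_D]\ge H_\mu(\Pi)+c\sum_jp_j\log(1+\Delta_j)$ for a constant $c$ depending only on $c_0,c_1$.

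The additive $-O(1)$ slack from \cref{lem:residual-entropy} is the point that needs care. The spread bound in the proof of \cref{lem:rank-spread-repair} is $\Omega(\log(1+\Delta_y))$ with no subtraction, but it is derived from the entropy lemma that carries $-O(1)$. To avoid this, I would argue directly that a decision tree separating $c_0(1+\Delta_y)$ answers, each of mass at least $c_1/(1+\Delta_y)$, has expected depth at least a constant times $\log(1+\Delta_y)$. When $\Delta_y$ is bounded, I would absorb the term into the routing term, or into the trivial bound $T_D\ge 0$ together with $\E[T_D]\ge 1$ for nontrivial queries. Feasibility of $(\Pi,\Delta_j)$ follows because $\Comp_{I_j,\M}(\Delta_j)\le a_j$ by definition of the profile, so $\sum_j\Comp_{I_j,\M}(\Delta_j)\le B$. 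Minimizing over feasible tuples then yields $\E_\mu[T_D]\ge \min\{1,c\}\,\Gap_\M(S,\mu,B)$.

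For the upper bound, take a minimizer $(\Pi,\Delta_j)$ of \eqref{eq:gap-def}; the feasible set is finite up to zero-mass intervals, so the minimum is attained. For each leaf, pick $h_j\in\M_{\Comp_{I_j,\M}(\Delta_j)}$ certifying radius $\Delta_j$, using the empty predictor when $\Delta_j=R_{I_j}$. Build the alphabetic directory of \cref{lem:routing-ub}, costing $H_\mu(\Pi)+2$, and repair by binary search in the window via \cref{lem:repair-ub}, costing $O(\log(1+\Delta_j))$. The total is $O(\Gap_\M+1)$.

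I expect the main obstacle to be the lower-bound step that matches the transcript-dependent radii $\Delta_y$ in \cref{lem:rank-spread-repair} against the single certificate $\Delta_j$ used in $\Gap_\M$. I would first try to resolve it by monotonicity of $\Comp$ and by Jensen-free averaging, leaf by leaf.
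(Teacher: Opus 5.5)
Your main line coincides with the paper's proof. For the lower bound, the paper takes the structure-specific inequality from the proof of \cref{thm:residual-law}: expected cost at least a constant multiple of $H_\mu(\Pi)+\sum_jp_j\Rep_\mu(I_j,h_j,\Delta_j)$. It then applies \cref{lem:spread-surrogate}. Your direct use of \cref{lem:routing-lb} and \cref{lem:rank-spread-repair} is the same argument unpacked, and your feasibility step via $a_j\ge\Comp_{I_j,\M}(\Delta_j)$ followed by minimization is identical. Your upper bound is exactly the paper's: profile-certified predictors, the alphabetic directory of \cref{lem:routing-ub}, binary search via \cref{lem:repair-ub}, and no transcript-indexed repair program.

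The step that would fail is your fallback for transcript-dependent radii.
\begin{itemize}
\item Because $\Delta_Y$ varies with the transcript, $(\Pi,\Delta_Y)$ is not a tuple for \eqref{eq:gap-def}, which assigns one radius per leaf.
\item If you try any uniform radius $\Delta'<\Delta_j$ on $I_j$, monotonicity gives $\Comp_{I_j,\M}(\Delta')\ge\Comp_{I_j,\M}(\Delta_j)$, which is the wrong direction. Since $h_j$ certifies only $\Delta_j$ on all of $I_j$, nothing bounds $\Comp_{I_j,\M}(\Delta')$ by $a_j$.
\item Refining $I_j$ along transcript classes does not rescue it either. The classes need not be contiguous, and each new leaf needs its own charged predictor. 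The refinement also adds routing entropy that $D$ never paid, because selecting a segment inside $h_j$ is free computation on charged records.
\end{itemize}
Consequently, a leaf whose segments carry very different error bounds can have repair cost of order $\E[\log(1+\Delta_Y)]$, far below $\log(1+\Delta_j)$. Meanwhile, every $\Gap_\M$-partition that separates its low-radius segments pays extra routing entropy. In that situation there is no general reduction to $\Gap_\M$, and only the $\RGap_\M$ statement survives.

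The paper's proof avoids the issue by applying \cref{lem:spread-surrogate} with the leaf certificate itself. That is, it works in the fixed-radius case $\Delta_y=\Delta_j$, with window $h_j(q)\pm\Delta_j$, and there your main line is already complete. You should drop the fallback and state the lower bound in that setting.

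Your separate handling of the $-O(1)$ slack is more care than the paper takes; the paper reads the bound from \cref{thm:residual-law} directly as a constant multiple. This does not affect the comparison.
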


\begin{proof}
Let $D$ be a rank-spread routed piecewise learned index using local predictors $h_j\in\M_{a_j}$ and certified radii $\Delta_j$ on leaves $I_j$.  Since $\sum_j a_j\le B$, the tuple $(\Pi,\Delta_1,\ldots,\Delta_m)$ is feasible for the optimization defining $\Gap_\M$: by definition of the local profile, $a_j\ge \Comp_{I_j,\M}(\Delta_j)$ up to the fixed atomization convention, and hence $\sum_j\Comp_{I_j,\M}(\Delta_j)\le B$.  By \cref{thm:residual-law}, every such structure has expected cost at least a constant multiple of
\[
  H_\mu(\Pi)+\sum_jp_j\Rep_\mu(I_j,h_j,\Delta_j).
\]
Rank-spread and \cref{lem:spread-surrogate} give
\[
  \Rep_\mu(I_j,h_j,\Delta_j)\ge c\log(1+\Delta_j)
\]
for a constant $c>0$ depending only on the spread constants.  Therefore
\[
  \E_\mu[T_D]\ge
  \Omega\!\left(\sum_jp_j\left(\log\frac{1}{p_j}+\log(1+\Delta_j)\right)\right)
  \ge \Omega(\Gap_\M(S,\mu,B)),
\]
because $\Gap_\M$ is the minimum of the same radius objective over all feasible partitions and radii.

For the upper bound, choose a feasible partition and radii whose objective is within an additive constant of $\Gap_\M(S,\mu,B)$, or within an arbitrary $\eta>0$ if the minimum is not attained.  For each leaf $I_j$, the definition of $\Comp_{I_j,\M}(\Delta_j)$ supplies a certified predictor using at most that many atoms and radius $\Delta_j$.  The total atom count is at most $B$.  Route the leaves using the nearly optimal alphabetic tree from \cref{lem:routing-ub}, which has expected depth at most $H_\mu(\Pi)+2$.  After routing and prediction, binary search inside the certified repair window uses $O(\log(1+\Delta_j))$ comparisons by \cref{lem:repair-ub}.  Averaging over leaves gives expected query time
\[
  O\!\left(H_\mu(\Pi)+\sum_jp_j\log(1+\Delta_j)+1\right)
  =O(\Gap_\M(S,\mu,B)+1).
\]
\end{proof}

\begin{corollary}[Residual accounting within this architecture]
\label{cor:instance-opt}
For routed, atom-budgeted, certified-repair learned indexes, $\RGap_\M(S,\mu,B)$ is the correct instance parameter within this architecture up to constant factors.  The simpler parameter $\Gap_\M(S,\mu,B)$ has the same status only after restricting to rank-spread leaves or after replacing residual entropy by the radius surrogate as an explicit design objective.
\end{corollary}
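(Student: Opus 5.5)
The corollary has two halves, and I would derive each from results already proved rather than arguing from scratch. The first half says that $\RGap_\M(S,\mu,B)$ is the correct instance parameter up to constant factors. This is a restatement of the two directions of \cref{thm:residual-law}.

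The lower bound gives $\E_\mu[T_D]\ge c\,\RGap_\M(S,\mu,B)$ for every routed piecewise learned index $D$ using at most $B$ atoms. Here I would make the $-O(1)$ slack from \cref{lem:residual-entropy} explicit. It is absorbed whenever $\RGap_\M$ is bounded below by a constant. In the degenerate case $\RGap_\M=O(1)$ the statement is trivially compatible with the upper bound.

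The upper bound gives a structure with $\E_\mu[T_D]\le O(\RGap_\M+1)+\eta$. Fixing, say, $\eta=1$ yields a two-sided $\Theta(\RGap_\M+1)$ characterization. So the phrase ``up to constant factors'' is read with the standard $+1$ additive convention. I would state that convention explicitly in the proof, since otherwise the claim fails when $\RGap_\M=0$ (a single exact leaf still costs one step).

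For the second half, concerning $\Gap_\M$, I would split into the two listed cases. Under the restriction to rank-spread leaves, \cref{cor:rank-spread-gap} already gives matching $\Omega(\Gap_\M)$ and $O(\Gap_\M+1)$ bounds, so $\Gap_\M$ inherits the same status. The same corollary also shows that $\RGap_\M$ and $\Gap_\M$ agree up to constants on that class, via \cref{lem:spread-surrogate}.

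For the ``explicit design objective'' reading, I would observe the following chain. \cref{lem:spread-surrogate} gives $\Rep_\mu\le O(\log(1+\Delta_j))$ pointwise. The profile definition shows that every feasible tuple for $\Gap_\M$ is realizable by predictors in $\M_{a_j}$ with $a_j=\Comp_{I_j,\M}(\Delta_j)$. Together these give $\RGap_\M\le O(\Gap_\M)$ unconditionally. The binary-search construction in \cref{cor:rank-spread-gap} then achieves $O(\Gap_\M+1)$ without any spread hypothesis. So $\Gap_\M$ is always an achievable upper-bound objective, and it is a lower bound only under rank-spread.

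To show that the restriction cannot be dropped, I would cite the single-residual-answer example from the running example and \cref{lem:spread-surrogate}. There $\Rep_\mu=0$ while $\log(1+\Delta)$ is large, so $\Gap_\M$ can exceed $\RGap_\M$ by an unbounded factor.

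The main obstacle is not technical depth but precision about quantifiers. The lower bound in \cref{thm:residual-law} is over all indexes, while the rank-spread lower bound is only over rank-spread indexes. The optimizer in $\Gap_\M$ ranges over all partitions, not just rank-spread ones. I would handle this by phrasing the second sentence of the corollary as a statement about the restricted class, comparing against $\Gap_\M$ as a minimum over radii. That is legitimate because the lower-bound proof of \cref{cor:rank-spread-gap} only uses feasibility of the realized tuple.
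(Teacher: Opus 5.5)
Your argument for the first sentence and for the rank-spread half is the paper's own. Both directions of \cref{thm:residual-law} give the $\RGap_\M$ claim, and \cref{cor:rank-spread-gap} with \cref{lem:spread-surrogate} gives the $\Gap_\M$ claim. Your additions here are correct and a bit sharper than the paper's text. One is the explicit $+1$ convention. The other is the unconditional chain $\RGap_\M\le 2\,\Gap_\M$: take $h_j$ attaining $\Comp_{I_j,\M}(\Delta_j)$ and use $\Rep_\mu\le\log(1+2\Delta_j)$. Together with binary-search achievability of $O(\Gap_\M+1)$, this shows $\Gap_\M$ is always an achievable objective.

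The step that does not work is your justification of the ``only'' clause. You take a leaf with $\Rep_\mu=0$ but large $\log(1+\Delta)$ and conclude that $\Gap_\M$ can exceed $\RGap_\M$ by an unbounded factor. That compares two objectives of one fixed structure, not the two minimized functionals. $\Gap_\M$ may re-partition at no atom cost. When low residual entropy comes from the workload concentrating on one answer, as in the running example, isolating that answer's gap in its own leaf costs no atoms and no extra routing entropy, so the discrepancy disappears from $\Gap_\M$. The running example also has only eight keys, so no unbounded factor can come from it.

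A real separation needs a family in which the transcript value $h(q)$ carries many bits about $A$ that the certified radius does not reflect. For instance:
\begin{itemize}
\item Take blocks $\{tM,tM+1,\dots,tM+c-1\}$ for $t=1,\dots,k$, with $M=2c$, and let $\mu$ be uniform on the anchors $tM$.
\item The single affine atom $h(q)=q/2-c+1$ on the leaf spanning the blocks satisfies $h(tM)=\rank_S(tM)$. So $Y$ determines $A$ on the support, and $\RGap_{\M^{\Lin}}(S,\mu,1)=0$.
\item Yet any leaf containing two anchors has radius at least $(c-1)/4$ under one affine atom, and larger under the empty predictor. A one-anchor leaf pays $\log k$ in routing. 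Hence $\Gap_{\M^{\Lin}}(S,\mu,1)\ge\min\{\log k,\log c\}-O(1)$.
\end{itemize}
The paper itself supports ``only'' just by the per-leaf remark that a large window can have low residual entropy. So either drop your quantitative separation claim, or supply such a family.

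Your quantifier discussion also covers only the lower bound. The $O(\Gap_\M+1)$ index built in \cref{cor:rank-spread-gap} is not shown to be rank-spread. So ``matching bounds within the rank-spread class'' should be read as a lower bound over rank-spread indexes and an upper bound over all indexes. The paper is equally loose on this point.
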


\begin{proof}
For the routed, atom-budgeted, certified-repair architecture, \cref{thm:residual-law} proves both directions for the residual functional: every admissible structure has expected cost at least a constant multiple of $\RGap_\M(S,\mu,B)$, and there is an admissible structure with expected cost at most a constant multiple of $\RGap_\M(S,\mu,B)+1$ under the stated accounting convention.  Hence $\RGap_\M$ is the controlling instance parameter for this accounting model.  If one replaces the residual entropy inside each certified window by the radius expression $\log(1+\Delta)$, then the replacement is justified exactly by the rank-spread hypothesis via \cref{lem:spread-surrogate}.  Under that additional hypothesis, \cref{cor:rank-spread-gap} gives the same two-sided accounting statement for $\Gap_\M$.  Without rank-spread, only the residual statement remains valid, because a large certified window can have low conditional residual entropy.
\end{proof}

\subsection{Scope of the lower bound}

The lower bound is a theorem about the exact model defined in \cref{sec:model}.  It applies when pieces are selected by ordered branching, local predictors are budgeted by counted atoms, and exactness is obtained by ordered repair inside certified prediction windows.  It does not apply without modification to pure arithmetic dispatch, neural routers, learned hash-routing layers, or any structure that stores exact rank information in an uncharged model or router.

The residual-entropy theorem is the mathematically stable statement.  The radius statement is a corollary that requires rank-spread.  This distinction matters because a large radius and a low-entropy residual distribution can coexist.  A point-query workload concentrated on one residual answer has $\Rep_\mu=0$ even if $\Delta$ is large; a locally uniform workload on the same window has $\Rep_\mu=\Theta(\log(1+\Delta))$.  Thus the residual statement uses $\RGap_\M$ in the full model and uses $\Gap_\M$ only under rank-spread.

The atom budget is also structural.  Without it, exact rank information can be encoded into local predictors and the repair entropy disappears.  With it, the structure must decide where approximation information is worth storing.  The lower bound is therefore not caused by a poor corrective search algorithm; it is caused by the amount of residual exact-answer information that remains after the charged approximation is known.

\section{Power Laws and Allocation}
\label{sec:powerlaw}

The static theorem is intentionally profile-parametric.  This section studies the most useful closed form: local profiles that decay according to a discrete power law until they reach a boundary floor.  The normalization is stated explicitly because unnormalized expressions such as $\Comp(\Delta)\asymp\kappa\Delta^{-\alpha}$ become meaningless once the right-hand side drops below one atom.

\subsection{Closed form under a normalized discrete power law}

The theorem is parameterized rather than distribution-specific.  We now solve the radius-allocation part approximately under \cref{def:powerlaw}.  For a real number $x>0$, write $\logplus x=\max\{0,\log x\}$.

\begin{theorem}[Discrete power-law instance formula]
\label{thm:power-law}
Fix an ordered partition $\Pi=\{I_1,\ldots,I_m\}$ with positive leaf masses $p_1,\ldots,p_m$ and rank diameters $R_1,
\ldots,R_m$.  Suppose $\Pi$ satisfies the normalized profile condition \eqref{eq:powerlaw}.  Let $\Beff=B+m$ be the effective budget after accounting for the one-atom floor in each active leaf.  Then the optimal fixed-partition cost is, up to constants depending only on $a,b,\alpha$,
\[
H_\mu(\Pi)+
\sum_{j=1}^{m}p_j\log\left(1+
\min\left\{R_j,\max\left\{1,\left(\frac{\kappa_j}{\Beff p_j}\right)^{1/\alpha}\right\}\right\}\right).
\]
In the interior regime where $1\ll (\kappa_j/(\Beff p_j))^{1/\alpha}\ll R_j$ for the leaves contributing to the repair term, this simplifies to
\[
  \Theta\left(H_\mu(\Pi)+\frac{1}{\alpha}\sum_{j=1}^{m}p_j\logplus\frac{\kappa_j}{\Beff p_j}\right).
\]
An optimal allocation may be chosen with
\[
  1+\Delta_j^\star=\Theta\left(
  \min\left\{1+R_j,\max\left\{1,\left(\frac{\kappa_j}{\Beff p_j}\right)^{1/\alpha}\right\}\right\}\right)
\]
after integer rounding.
\end{theorem}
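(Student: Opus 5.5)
The plan is to reduce the fixed-partition problem to the repair deficiency $\Def_\M(\Pi,B)$ of \cref{def:deficiency}. Since $H_\mu(\Pi)$ does not depend on the radii, it suffices to show that
\[
\Def_\M(\Pi,B)=\Theta\Bigl(\sum_j p_j\log(1+r_j)\Bigr),
\qquad
r_j:=\min\Bigl\{R_j,\max\Bigl\{1,\bigl(\kappa_j/(\Beff p_j)\bigr)^{1/\alpha}\Bigr\}\Bigr\}.
\]
Adding $H_\mu(\Pi)$ to both sides then gives the displayed formula. I would work throughout with the shifted cost $1+\Comp$. By \eqref{eq:powerlaw}, each active leaf has a shifted cost of order $1+\kappa_j(1+\Delta)^{-\alpha}$. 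Summing the floors over the $m$ leaves turns the constraint $\sum_j\Comp\le B$ into $\sum_j(1+\Comp)\le B+m=\Beff$, which explains the effective budget.

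\emph{Upper bound.} I would exhibit an allocation directly, namely $\Delta_j^\star=\lceil r_j\rceil$ rounded to an integer in $[0,R_j]$. Leaves with $r_j=R_j$ take the empty predictor and cost nothing, and rounding changes $\log(1+\Delta)$ only by constants. For every other leaf,
\[
\kappa_j(1+\Delta_j^\star)^{-\alpha}\le \kappa_j\cdot \frac{\Beff p_j}{\kappa_j}=\Beff p_j ,
\]
so its shifted cost is at most $b(1+\Beff p_j)$. Summing over leaves gives at most $b(m+\Beff)\le 2b\Beff$. The constant-factor overshoot is removed by rescaling $\Beff$ by $2b$. Inside the logarithm this rescaling costs only $O(1/\alpha)$ additively per unit of mass, and since $\log(1+r_j)\ge 1$ that is absorbed multiplicatively. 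Alternatively I would choose $\lambda\asymp 1/\Beff$ in \cref{prop:local-waterfilling} and read off the same radii.

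\emph{Lower bound.} I would use the dual certificate \eqref{eq:dual-cert} of \cref{prop:separable} with $\lambda=c/\Beff$, written for the shifted costs. This gives
\[
\Def\ge\sum_j\min_{0\le\Delta\le R_j}\bigl\{p_j\log(1+\Delta)+\lambda a\kappa_j(1+\Delta)^{-\alpha}\bigr\}-O(\lambda\Beff).
\]
Each one-dimensional minimum is $\Omega(p_j\log(1+r_j))$, with the $O(1/\alpha)$ loss again absorbed because $\log(1+r_j)\ge1$. To see this I would split into cases:
\begin{itemize}
\item if $\Delta\ge r_j/C$, the logarithmic term alone suffices;
\item if $\Delta<r_j/C$ and $r_j>1$, the penalty term exceeds $\lambda a\kappa_j C^\alpha r_j^{-\alpha}\gtrsim p_j C^\alpha$;
\item the floor case $r_j=1$ is trivial.
\end{itemize}

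\textbf{Main obstacle.} The lower bound has an additive loss: the $-\lambda\Beff=-c$ term and the $p_j\cdot O(1)$ constants cannot be absorbed when $\sum_jp_j\log(1+r_j)$ is itself $O(1)$. My first approach would be to use the convention $r_j\ge1$, which forces each $\log(1+r_j)\ge 1$, and to choose $c$ small. If that fails, I would argue directly via a Hölder/Jensen bound on $\sum_j\kappa_j(1+\Delta_j)^{-\alpha}\le\Beff/a$, splitting leaves into those that meet their target radius and those that do not.

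\emph{Interior simplification and optimal allocation.} In the interior regime, $\log(1+r_j)=\frac1\alpha\log(\kappa_j/(\Beff p_j))+O(1)$, which gives the $\logplus$ form; the layer-cake identity in \cref{lem:layercake} can be cited for its equivalent hard-mass reading. The allocation claim then follows from the upper-bound construction.
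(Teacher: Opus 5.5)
Your upper bound is the paper's appendix construction. Your lower bound takes a different route: instead of the paper's KKT analysis of the relaxed envelope problem, you apply the weak-duality certificate \eqref{eq:dual-cert} at the fixed multiplier $\lambda=c/\Beff$. That route is sound and cleaner, since it needs no convexity and no tight multiplier (the paper states the multiplier's scale without computing it). Your displayed inequality for $\Def$ is correct.

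The per-leaf estimate, however, fails as written. In the case $\Delta<r_j/C$ you only get a penalty of order $p_jC^\alpha$, a constant multiple of $p_j$. For fixed $C$ this does not dominate $p_j\log(1+r_j)$ once $r_j$ is large, which is exactly the regime of interest; your two cases together give only a bounded multiple of $p_j$. The fix is to let the split point depend on $r_j$. Write $t_j=\kappa_j/(\Beff p_j)$, so that $r_j\le t_j^{1/\alpha}$ when $t_j\ge1$. If $1+\Delta\ge\sqrt{1+r_j}$, the logarithm already gives $\tfrac12\log(1+r_j)$. Otherwise, for $t_j\ge1$, the penalty $ca\,p_jt_j(1+\Delta)^{-\alpha}$ is $\gtrsim p_j(1+r_j)^{\alpha/2}$, which dominates $p_j\log(1+r_j)$. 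Alternatively, minimize $u+ca\,t_j2^{-\alpha u}$ over $u=\log(1+\Delta)$ exactly.

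Even after this repair you only get $\Def\ge c'\sum_jp_j\log(1+r_j)-O(1)$. The $-c$ from duality and the floor leaves with $t_j<1$ (where the one-variable minimum can be nearly $0$) each cost an additive constant, and that constant cannot be removed. The multiplicative claim $\Def=\Theta(\sum_jp_j\log(1+r_j))$ is false. For $m=1$, $\kappa_1\le1$ and $B\ge2b$, the profile gives $\Comp(0)\le2b-1\le B$, so the optimal cost is $0$ while the formula equals $\log2=1$. Hence neither a small $c$ nor a H\"older/Jensen argument can resolve your ``main obstacle''. The floor case is ``trivial'' only additively, and absorbing $O(1)$ losses via $\log(1+r_j)\ge1$ is legitimate only in the upper-bound direction. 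The theorem has to be read up to an additive constant, which is how the paper's proof treats it (``changes the objective by an additive constant'').

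Your rescaling step in the upper bound has the same defect as the paper's ``feasible up to a constant rescaling of $B$''. Rescaling shrinks only the $\Beff p_j$ part of the shifted cost, not the per-leaf floor $b$. After rescaling the total is about $bm+\Beff/2$, which fits in $\Beff=B+m$ only when $B\ge(2b-1)m$. This restriction is real, not an artifact of the argument. Take $B=0$, $(a,b,\alpha)=(1/2,2,1)$, equal masses, $\kappa_j=1$, all $R_j=R$, and leaves whose profile equals $1$ for every $\Delta<R$. Every leaf is then forced to full repair, so the cost is $\log m+\log(1+R)$, against the formula's $\log m+1$. Both directions therefore go through only with an extra hypothesis $B=\Omega(m)$ and an additive $O(1)$.
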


\begin{proof}
For the fixed partition the routing term is constant, so it remains to solve the radius allocation problem.  Put $y_j=1+\Delta_j$, with $1\le y_j\le 1+R_j$.  By \cref{def:powerlaw}, a feasible allocation with budget $B$ satisfies
\[
  \sum_j \kappa_j y_j^{-\alpha}=O(B+m)=O(\Beff),
\]
after absorbing the atom floor into $\Beff$; conversely, an allocation satisfying the same inequality with a sufficiently small hidden constant is feasible after changing the budget by only a constant factor.  Thus the discrete problem is sandwiched by the relaxed problem
\[
  \min_{1\le y_j\le 1+R_j}\sum_jp_j\log y_j
  \quad\text{subject to}\quad
  \sum_j\kappa_jy_j^{-\alpha}\le C\Beff
\]
for constants depending only on the profile constants.

For a non-boundary coordinate the KKT condition gives
\[
  \frac{p_j}{y_j}=\lambda\alpha\kappa_jy_j^{-\alpha-1},
  \qquad\text{so}\qquad
  y_j^\alpha=\frac{\lambda\alpha\kappa_j}{p_j}.
\]
The multiplier is chosen so that the relaxed envelope is tight; under the normalized envelope this puts the active scale at $\kappa_j/(\Beff p_j)$ up to constants.  Coordinates whose stationary value is below $1$ are rounded to the atom-floor radius, and coordinates whose stationary value exceeds $1+R_j$ use the full-repair boundary.  Hence an optimal radius satisfies
\[
  y_j=\Theta\!\left(
  \min\left\{1+R_j,\max\left\{1,\left(\frac{\kappa_j}{\Beff p_j}\right)^{1/\alpha}\right\}\right\}\right).
\]
Rounding $y_j-1$ to an integer changes $y_j$ by at most a constant factor away from the floor and changes the objective by only $O(p_j)$ at the floor.  Summing over leaves changes the objective by an additive constant.  Substituting the displayed scale into $\sum_jp_j\log y_j$ gives the capped expression.  When no cap is active on the contributing leaves, $\log y_j=\Theta(\alpha^{-1}\logplus(\kappa_j/(\Beff p_j)))$, which gives the interior formula.
\end{proof}

\begin{remark}[Interpretation]
The term $H_\mu(\Pi)$ is the routing entropy and would already be present in a merely distribution-sensitive decision tree.  The second term measures misalignment between local hardness and query mass.  If an interval is easy to approximate and frequently queried, it receives a tiny repair radius.  If it is hard and rarely queried, it can be assigned a larger radius without hurting the expected cost much.  The expensive case is when many difficult intervals are queried often enough that the atom budget cannot make their repair windows small.
\end{remark}

\begin{corollary}[Uniform hardness recovers entropy]
\label{cor:entropy}
Suppose $\kappa_j=\Theta(1)$ for all leaves and $B=\Theta(m)$ after normalizing the profile so that one atom per leaf gives constant error.  Then
\[
  \Gap_{\M}(S,\mu,B)=\Theta(H_\mu(\Pi)+1)
\]
for the fixed partition $\Pi$.
\end{corollary}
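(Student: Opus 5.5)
The plan is to treat \cref{cor:entropy} as a direct specialization of \cref{thm:power-law} on the fixed partition $\Pi$, reading $\Gap_\M$ here as the fixed-partition value $H_\mu(\Pi)+\Def_\M(\Pi,B)$ from \cref{def:deficiency}. The lower bound $\Gap\ge H_\mu(\Pi)$ is immediate, because every repair term $p_j\log(1+\Delta_j)$ is nonnegative. This is also consistent with \cref{lem:routing-lb}. All the work is therefore in the upper bound $\Def_\M(\Pi,B)=O(1)$, and there I would give two routes.

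\textbf{Direct allocation route.} Use the normalization hypothesis that one atom per leaf certifies constant error. Concretely, for each leaf choose $\Delta_j=c$ for a fixed constant $c$ with $\Comp_{I_j,\M}(c)\le 1$. The normalized profile \eqref{eq:powerlaw} with $\kappa_j=\Theta(1)$ supplies such a $c$, depending only on $a,b,\alpha$. Leaves with $R_j\le c$ need no atoms at all, since the empty predictor already has radius $R_j\le c$. The total charge is then at most $m$ atoms, which fits because $B=\Theta(m)$. If the hidden constant in $B=\Theta(m)$ is below one, I would instead give atoms to the leaves of largest mass and leave the others at radius $R_j$. This branch is the delicate one; see the last paragraph.

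\textbf{Closed-form route.} Alternatively, plug into the capped formula of \cref{thm:power-law}. With $\Beff=B+m=\Theta(m)$, we have $\kappa_j/(\Beff p_j)=\Theta(1/(mp_j))$, so the repair term is
\[
\sum_j p_j\log\Bigl(1+\min\bigl\{R_j,\max\{1,(1/(mp_j))^{1/\alpha}\}\bigr\}\Bigr).
\]
Split the leaves by whether $mp_j\ge 1$.
\begin{itemize}
\item For leaves with $mp_j\ge 1$, the inner maximum is $1$, so each contributes $O(p_j)$.
\item For leaves with $mp_j<1$, the contribution is $O\bigl(\alpha^{-1}p_j\log(1/(mp_j))\bigr)=O\bigl(p_j\log(1/p_j)\bigr)$, since $\log(1/(mp_j))\le\log(1/p_j)$.
\end{itemize}
Summing, the repair term is $O(1+H_\mu(\Pi))$. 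Adding the routing entropy gives $\Gap=O(H_\mu(\Pi)+1)$, and combined with the lower bound this yields $\Theta(H_\mu(\Pi)+1)$.

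\textbf{Expected obstacle.} The hard part is making the normalization hypothesis precise when some leaves could get zero atoms. The closed-form route avoids this issue: the light leaves with $mp_j<1$ are absorbed into the entropy term at the cost of a $1/\alpha$ factor, which is a constant. For that reason I would present the closed-form route as the main argument and give the direct allocation only as intuition.
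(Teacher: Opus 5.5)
Your closed-form route is the paper's own argument. You substitute $\kappa_j=\Theta(1)$ and $\Beff=\Theta(m)$ into \cref{thm:power-law}, observe that leaves with $mp_j\gtrsim 1$ contribute only $O(p_j)$, and then control the light leaves. The difference is in that last estimate. By absorbing $p_j\log\frac{1}{mp_j}$ into $p_j\log\frac{1}{p_j}$ you get only $\Gap_\M\le(1+O(1/\alpha))H_\mu(\Pi)+O(1)$. That suffices for the $\Theta$ claim. The paper asserts the sharper fact that the positive-log term is bounded by an absolute constant, and this is true and elementary. With $x_j=mp_j$,
\[
\sum_{j:\,mp_j<1}p_j\log\frac{1}{mp_j}=\frac{1}{m}\sum_{j:\,x_j<1}x_j\log\frac{1}{x_j}\le\frac{\log_2 e}{e},
\]
since there are at most $m$ terms and $x\log(1/x)\le\log_2 e/e$ on $(0,1]$; a hidden constant $\kappa_j\le K$ only adds $\logplus K$. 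This yields the additive form $H_\mu(\Pi)\le\Gap_\M\le H_\mu(\Pi)+O(1)$, which is what ``recovers entropy'' is really saying. Two small points. For $mp_j$ just below $1$, your second bullet needs an extra $O(p_j)$, which is harmless. Your lower bound $\Gap_\M\ge H_\mu(\Pi)$ is the correct lower side, because the $+1$ is only additive slack: with $m=1$ and a leaf on which one atom certifies $\Delta=0$, we get $\Gap_\M=0$.

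I would reverse your ranking of the two routes. The closed-form computation uses only $\Beff=B+m\ge m$, which holds for every $B$. That includes $B=0$, where every leaf is forced onto the empty predictor and pays a repair term of order $\log R_j$. So that route does not avoid the floor-atom issue. It hides it in the constant-factor budget rescaling, measured against $\Beff$, that the proof of \cref{thm:power-law} needs for its upper direction, and that rescaling is exactly the one-atom-per-leaf payment.

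Read the corollary's hypothesis as $B\ge m$ with one atom certifying some constant radius $c$ on each leaf. Then your direct allocation is already a complete proof of $H_\mu(\Pi)\le\Gap_\M\le H_\mu(\Pi)+\log(1+c)$, with no appeal to \cref{thm:power-law} and no entropy estimate. Take $c$ from that hypothesis, not from \eqref{eq:powerlaw}. The upper profile bound guarantees $\Comp_{I_j,\M}(\Delta)\le 1$ at a constant radius only when $b<2$. It does give $\Comp_{I_j,\M}(0)\le b(1+\kappa_j)-1=O(1)$, so a budget $B\ge Cm$ even makes $\Def_\M(\Pi,B)=0$.

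Your fallback for a budget constant below one fails. With uniform masses and $B=m/2$, at least half of the mass sits on empty predictors and pays order $\log R_j$, which can be far larger than $\log m$. That regime is precisely what the normalization hypothesis rules out.
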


\begin{proof}
Substitute $\kappa_j=\Theta(1)$ and $B=\Theta(m)$ into \cref{thm:power-law}.  The positive-log term is nonzero only on leaves with $p_j\lesssim 1/m$, and its weighted average is bounded by a constant under the normalized linear-budget regime.  The remaining term is $H_\mu(\Pi)$.
\end{proof}

\begin{corollary}[A non-uniform lower bound for hard distributions]
\label{cor:hardnonuniform}
Let a partition $\Pi$ have $m$ leaves of equal mass $1/m$, and suppose each leaf has hardness $\kappa_j\ge\kappa>0$ with $\alpha=1$.  Then every routed piecewise learned index with budget $B$ satisfies
\[
  \E[T]=\Omega\left(\log m+\logplus\frac{\kappa m}{B+m}\right).
\]
In particular, if $B+m=o(\kappa m)$, constant expected query time is impossible within this model.
\end{corollary}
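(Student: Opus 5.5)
The plan is to treat the two summands of the bound separately and add them. The routing term $\Omega(\log m)$ comes from \cref{lem:routing-lb}: with $m$ leaves of mass $1/m$, $H_\mu(\Pi)=\log m$, so every ordered directory reaching these leaves has expected depth at least $\log m$. The repair term requires the corollary to be read through \cref{cor:rank-spread-gap}. Residual entropy alone cannot produce a radius bound, so I will assume the index is rank-spread on these leaves, which is the standing hypothesis under which $\Gap_\M$ lower-bounds query time. With that assumption, $\E[T]=\Omega(H_\mu(\Pi)+\Def_\M(\Pi,B))$ for the fixed partition, and the main work is to lower-bound $\Def_\M(\Pi,B)$.

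For the deficiency I will use the dual certificate \eqref{eq:dual-cert} rather than the $\Theta$-level statement of \cref{thm:power-law}, because only the lower direction is needed. By \eqref{eq:powerlaw} with $\alpha=1$, $\Comp_{I_j,\M}(\Delta)\ge a(1+\kappa(1+\Delta)^{-1})-1$. Write $y_j=1+\Delta_j$. The budget constraint then gives
\[
\sum_j \kappa y_j^{-1}\le (B+m)/a.
\]
The objective is $\frac1m\sum_j\log y_j$. Since $\log$ is concave, I will apply Jensen to $\frac1m\sum_j\log y_j^{-1}\le\log\bigl(\frac1m\sum_j y_j^{-1}\bigr)$. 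This gives
\[
\frac1m\sum_j\log y_j\ \ge\ \log\frac{a\kappa m}{B+m}.
\]
The bound is also capped below by $0$ because $y_j\ge1$. Together these give $\Def_\M(\Pi,B)\ge \logplus\bigl(\kappa m/(B+m)\bigr)-O(1)$. As an alternative route, the same bound follows from choosing $\lambda\asymp 1/(B+m)$ in \eqref{eq:dual-cert} and evaluating $\Psi$ at the stationary point from \cref{prop:local-waterfilling}.

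The step I expect to be the main obstacle is the full-repair boundary $\Delta_j\le R_j$. If some $R_j$ are small, the optimal $y_j$ is capped at $1+R_j$, and the profile floor at $R_j$ (which is $0$) may be inconsistent with the power-law lower envelope. Jensen only uses the lower envelope and the constraint $y_j\ge 1$, so the cap does not hurt a \emph{lower} bound. Still, if $\kappa(1+R_j)^{-1}$ is large, the power-law envelope contradicts $\Comp(R_j)=0$. I would resolve this by reading the hypothesis as implicitly requiring $\kappa\lesssim R_j$, which is consistent with \cref{def:powerlaw}, and by noting the $-O(1)$ additive loss. That loss is absorbed because $\log m\ge 1$ whenever $m\ge2$; the case $m=1$ is handled separately, where the claim reduces to the deficiency term with an additive constant.

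Finally, I add the two bounds, $\log m$ from routing and $\logplus(\kappa m/(B+m))-O(1)$ from repair, and conclude. For the "in particular" clause: if $B+m=o(\kappa m)$, then $\logplus(\kappa m/(B+m))\to\infty$, so $\E[T]$ cannot be $O(1)$.
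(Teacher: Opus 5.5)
Your proof is correct. It differs from the paper's in the one step that carries content, the lower bound on the repair deficiency.

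The paper's proof substitutes $p_j=1/m$, $\alpha=1$ and $\kappa_j\ge\kappa$ into the interior-regime formula of \cref{thm:power-law}. It reads off $\sum_j p_j\logplus\bigl(\kappa_j/((B+m)p_j)\bigr)\ge\logplus\bigl(\kappa m/(B+m)\bigr)$, adds $H_\mu(\Pi)=\log m$, and mentions only in its last sentence that the conclusion holds in the rank-spread model.

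You instead bound $\Def_\M(\Pi,B)$ directly, after reducing to $H_\mu(\Pi)+\Def_\M(\Pi,B)$ via \cref{cor:rank-spread-gap}. The lower half of \eqref{eq:powerlaw} turns the budget into $\sum_j y_j^{-1}\le (B+m)/(a\kappa)$. Jensen for the concave $\log$ then gives $\frac1m\sum_j\log y_j\ge\log\bigl(a\kappa m/(B+m)\bigr)$.

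What each route buys:
\begin{itemize}
\item The paper's route is a one-line reuse of the general closed form. But it inherits the relaxation, KKT and rounding sketch behind \cref{thm:power-law}, and it silently drops the $\min\{R_j,\cdot\}$ cap.
\item Your route uses only the lower envelope and has an explicit additive loss of $\log(1/a)$.
\item You state the rank-spread hypothesis up front. This is necessary: a workload concentrated on one residual answer per leaf has zero repair entropy, and the claimed bound then fails.
\item Your argument is insensitive to the full-repair cap, since Jensen never uses an upper bound on $y_j$.
\end{itemize}

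On that last point you can drop the ``implicit reading'' of $\kappa\lesssim R_j$; it follows from the definition. Requiring \eqref{eq:powerlaw} at $\Delta=R_j$ together with $\Comp_{I_j,\M}(R_j)=0$ forces $a\bigl(1+\kappa_j(1+R_j)^{-1}\bigr)\le 1$, so $\kappa_j=O(1+R_j)$. Hence the envelope is valid on every feasible radius. Because $\kappa_j m/(B+m)\le\kappa_j$, the cap is inactive up to constants here, which also justifies the paper's use of the uncapped formula after the fact.
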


\begin{proof}
For equal masses, $p_j=1/m$ for all $j$, so
\[
  H_\mu(\Pi)=\sum_{j=1}^m\frac{1}{m}\log m=\log m.
\]
With $\alpha=1$ and $\kappa_j\ge\kappa$, the positive-log lower-bound term in \cref{thm:power-law} satisfies
\[
\begin{aligned}
  \sum_{j=1}^{m}p_j\logplus\frac{\kappa_j}{(B+m)p_j}
  &\ge \sum_{j=1}^{m}\frac1m
       \logplus\frac{\kappa}{(B+m)/m}\\
  &=\logplus\frac{\kappa m}{B+m}.
\end{aligned}
\]
Adding the routing entropy term gives the displayed lower bound.  If $B+m=o(\kappa m)$, then $\kappa m/(B+m)\to\infty$, so the second term diverges and constant expected query time is impossible in the rank-spread routed model.
\end{proof}

The preceding corollaries cover the standard benchmark regimes without requiring a separate summary table.  Perfect local prediction collapses the repair term and leaves the alphabetic-search quantity $\Theta(H_\mu(\Pi))$.  A single global predictor with radius $\Theta(n)$ recovers the binary-search scale $\Theta(\log n)$.  A power-law profile yields the explicit entropy--deficiency formula of \cref{thm:power-law}.  Uniform hardness with a linear atom budget returns entropy-optimal behavior, whereas equal-mass hard leaves force the additional term $\Omega(\logplus(\kappa m/(B+m)))$.

\begin{corollary}[Capacity-concentration obstruction]
\label{cor:capacity-obstruction}
Fix a power-law partition $\Pi$ and a threshold $\rho>1$.  Let
\[
  A_\rho=\left\{j:\frac{\kappa_j}{\Beff p_j}\ge \rho\right\},
  \qquad P_\rho=\sum_{j\in A_\rho}p_j.
\]
Then the repair deficiency for this partition is
\[
  \Def_\M(\Pi,B)=\Omega\left(\frac{P_\rho}{\alpha}\log \rho\right).
\]
Consequently, any rank-spread routed piecewise learned index whose induced partition contains such a set of leaves has expected query time at least
\[
  \Omega\left(H_\mu(\Pi)+\frac{P_\rho}{\alpha}\log\rho\right).
\]
\end{corollary}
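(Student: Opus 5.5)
The plan is to read the deficiency bound directly off the capped closed form of \cref{thm:power-law}, restricted to the leaves in $A_\rho$, and then combine it with the routing and rank-spread lower bounds through \cref{cor:rank-spread-gap}.

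\textbf{Step 1: drop the leaves outside $A_\rho$.} For the fixed partition $\Pi$, \cref{thm:power-law} gives $\Def_\M(\Pi,B)$ up to constants depending on $a,b,\alpha$. Its value is
\[
  \sum_j p_j\log\Bigl(1+\min\bigl\{R_j,\max\{1,(\kappa_j/(\Beff p_j))^{1/\alpha}\}\bigr\}\Bigr).
\]
Every summand is nonnegative, so discarding all leaves with $j\notin A_\rho$ only decreases the sum.

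\textbf{Step 2: bound each retained leaf.} For $j\in A_\rho$, the stationary scale satisfies $(\kappa_j/(\Beff p_j))^{1/\alpha}\ge\rho^{1/\alpha}>1$. Hence the inner maximum equals this scale, and the summand is at least $p_j\log(1+\min\{R_j,\rho^{1/\alpha}\})$. In the uncapped case this is at least $p_j\alpha^{-1}\log\rho$. Summing over $A_\rho$ gives $\Omega(P_\rho\alpha^{-1}\log\rho)$.

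\textbf{Step 3: routing and rank-spread.} \Cref{lem:routing-lb} lower bounds routing cost by $H_\mu(\Pi)$. For a rank-spread index with induced partition $\Pi$, \cref{lem:rank-spread-repair} bounds expected repair below by $\Omega(\sum_jp_j\log(1+\Delta_j))$. Its realized radii are feasible for $\Def_\M(\Pi,B)$ by the argument in \cref{cor:rank-spread-gap}, namely $a_j\ge\Comp_{I_j,\M}(\Delta_j)$. Adding the routing and repair bounds yields the displayed query-time bound.

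\textbf{Main obstacle: the cap $R_j$.} A leaf with small rank diameter contributes only $\log(1+R_j)$, which may be far below $\alpha^{-1}\log\rho$. I would handle this as follows.

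Under \eqref{eq:powerlaw} we have $\Comp(R_j)=0$ at the full-repair boundary. Evaluating the lower inequality at $\Delta=R_j$ gives $a(1+\kappa_j(1+R_j)^{-\alpha})\le1$, so $\kappa_j\le(1+R_j)^\alpha/a$ whenever $a>1$. If $a\le1$, I would instead argue that the normalization forces $(1+R_j)^\alpha\gtrsim\kappa_j$.

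Combined with $\kappa_j\ge\rho\Beff p_j$ and $\Beff p_j\ge p_j$, this gives $(1+R_j)^\alpha\gtrsim\rho p_j$. That is weaker than needed when $p_j$ is small. So the fallback is to assume, as the interior regime of \cref{thm:power-law} implicitly does, that the cap is inactive on $A_\rho$. I would state that assumption explicitly in the proof, noting that the constants are absorbed as in \cref{thm:power-law}.

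A cleaner alternative is to bypass \cref{thm:power-law} and work from first principles. Any feasible allocation spends at most $B$ atoms in total. On a leaf in $A_\rho$ with $y_j<(\rho\Beff p_j/\kappa_j)^{\ldots}$, the profile forces $\Comp\gtrsim\kappa_jy_j^{-\alpha}$. Summing these requirements and applying a Markov-type argument over $A_\rho$ shows that leaves carrying at least half of $P_\rho$ must have $y_j^\alpha\gtrsim\rho$. This variant is what I would try first if the capped-formula route leaves gaps.
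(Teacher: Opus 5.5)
Your Steps 1--3 are exactly the paper's proof. The paper lower-bounds the positive-log (interior-regime) term of \cref{thm:power-law} by $p_j\alpha^{-1}\log\rho$ on $A_\rho$, adds $H_\mu(\Pi)$ through \cref{cor:rank-spread-gap}, and tacitly assumes the cap $R_j$ is inactive on $A_\rho$, so stating that hypothesis openly is the right call. Two side remarks need repair. First, the boundary condition $a(1+\kappa_j(1+R_j)^{-\alpha})\le 1$ forces $a<1$ and directly gives $\kappa_j\le(a^{-1}-1)(1+R_j)^\alpha$, so your case split is reversed. Second, your Markov variant does need no cap assumption, since summing $a\kappa_jy_j^{-\alpha}\le 1+\Comp_{I_j,\M}(\Delta_j)$ over all leaves gives $\sum_{j\in A_\rho}p_jy_j^{-\alpha}\le 1/(a\rho)$; but it only yields $y_j^\alpha\ge a\rho P_\rho/2$ on leaves carrying half the mass of $A_\rho$, not $y_j^\alpha\gtrsim\rho$, i.e.\ the bound up to an additive $O(1/\alpha)$, a loss that is unavoidable when $\rho P_\rho=O(1)$.
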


\begin{proof}
For every $j\in A_\rho$,
\[
  \frac{\kappa_j}{\Beff p_j}\ge\rho,
\]
and therefore the positive-log term in \cref{thm:power-law} satisfies
\[
  p_j\frac{1}{\alpha}\logplus\frac{\kappa_j}{\Beff p_j}
  \ge p_j\frac{1}{\alpha}\log\rho.
\]
Summing this inequality over $j\in A_\rho$ yields
\[
  \Def_\M(\Pi,B)
  =\Omega\left(\sum_{j\in A_\rho}p_j\frac{1}{\alpha}\log\rho\right)
  =\Omega\left(\frac{P_\rho}{\alpha}\log\rho\right).
\]
For a rank-spread routed index inducing this partition, \cref{cor:rank-spread-gap} converts the radius deficiency into an expected-query lower bound.  The routing contribution is $H_\mu(\Pi)$, and the deficiency contribution is the bound above, so the total expected cost is at least
\[
  \Omega\left(H_\mu(\Pi)+\frac{P_\rho}{\alpha}\log\rho\right).
\]
\end{proof}

\subsection{Synthetic examples}

\begin{example}[Uniform workload]
Let a fixed partition have $m$ equal-mass leaves, $p_j=1/m$, equal rank diameters above the optimum radius, and equal hardness $\kappa_j=\kappa$.  For $\alpha=1$, \cref{thm:power-law} gives
\[
  \Gap_\M(\Pi,B)=\Theta\left(\log m+\logplus\frac{\kappa m}{B+m}\right).
\]
Thus entropy is the whole story once the budget scale matches the aggregate local hardness.  If the budget is smaller by a polynomial factor, the missing approximation capacity reappears as an additive repair term.
\end{example}

\begin{example}[Zipf workload]
Let $p_j=j^{-s}/Z_m(s)$ and suppose $\kappa_j=\kappa$ and the optimum is in the interior regime.  The repair deficiency is
\[
  \frac{1}{\alpha}\sum_{j=1}^m p_j\logplus\frac{\kappa}{\Beff p_j}.
\]
For $s>1$, most mass lies on a small prefix; the water-filling rule assigns those leaves small radii and leaves the long tail with larger windows.  For $s\le 1$, the tail mass decays slowly, so more leaves remain above the hardness-to-mass threshold and the repair term can persist even when the entropy term is moderate.
\end{example}

\begin{example}[A locally hard interval]
Suppose one leaf has mass $\tau$ and hardness $K\gg 1$, while all other leaves have unit hardness and share the remaining mass.  In the interior regime the hard leaf contributes
\[
  \Omega\left(\frac{\tau}{\alpha}\logplus\frac{K}{\Beff\tau}\right)
\]
by itself.  A single high-curvature or noisy region can therefore dominate the learned-index cost when it receives non-negligible query mass.  If the same hard region is rarely queried, the contribution is suppressed by the multiplier $\tau$ and the optimizer rationally leaves a larger repair window there.
\end{example}

\subsection{Miniature profile calculation and allocation pseudocode}

The profile abstraction becomes concrete once local segment-cover curves have been estimated.  Consider $m=32$ candidate leaves with rank diameter at least $256$, exponent $\alpha=1$, and normalized profiles
\[
  1+\Comp_j(\Delta)\asymp 1+\kappa_j(1+\Delta)^{-1}.
\]
Using the closed form, the predicted fixed-partition cost is $H_\mu(\Pi)$ plus the weighted repair term.  For a uniform workload with $\kappa_j=256$ for every leaf, the entropy part is $5.00$ bits and the predicted repair terms at budgets $B=32,64,128,256$ are approximately
\[
  7.01,\quad 6.43,\quad 5.71,\quad 4.88.
\]
For a Zipf workload with exponent $1.2$ and the same hardness, the entropy part drops to about $3.75$ bits and the corresponding repair terms are
\[
  5.66,\quad 5.18,\quad 4.54,\quad 3.78.
\]
The decrease is not because the rank curve became easier; it is because query mass is concentrated on fewer leaves, so the optimizer spends atoms there first.  For a locally hard workload with one leaf of mass $0.25$ and hardness $4096$, while the other leaves have hardness $64$, the entropy part is about $4.53$ bits and the repair terms at the same budgets are
\[
  6.05,\quad 5.48,\quad 4.77,\quad 3.98.
\]
The hard leaf dominates only because it also carries substantial mass.  If its mass is moved into the tail, its contribution is suppressed by the factor $p_j$.

A discrete implementation does not need to fit a power law.  It can work directly with certified segment-cover values.  The following pseudocode is the operational version of \cref{prop:separable}.
\begin{quote}
\small
\begin{verbatim}
Input: candidate leaves I_j, workload masses p_j,
       certified profile values C_j[Delta], atom budget B
For every leaf j and every refinement Delta_old -> Delta_new:
    gain  = p_j * (log(1 + Delta_old) - log(1 + Delta_new))
    cost  = C_j[Delta_new] - C_j[Delta_old]
    score = gain / cost
Start from trivial full-repair radii.
Take refinements in decreasing score while the atom budget allows.
Rebuild the ordered directory for the selected leaves.
\end{verbatim}
\end{quote}
This greedy form is exact when refinement options form a laminar concave envelope and is otherwise a practical approximation to the Lagrange multiplier rule.  The mathematical certificate remains the dual inequality \eqref{eq:dual-cert}: a proposed shadow price $\lambda$ proves that no allocation whose remaining refinements all have score below $\lambda$ can improve the objective by more than the unspent budget times $\lambda$.

\subsection{System-facing allocation rule}

The shadow-price formula gives a direct design recipe.  Estimate $p_j$ from the workload, compute or sandwich $\Comp_{I_j,\M}(\Delta)$ at a grid of error radii, and choose a multiplier $\lambda$.  For each leaf, select
\[
  \Delta_j(\lambda)\in\argmin_\Delta\{p_j\log(1+\Delta)+\lambda\Comp_{I_j,\M}(\Delta)\}.
\]
Decrease $\lambda$ until the atom budget is exhausted.  Equivalently, sort all candidate local refinements by
\[
  \frac{p_j(\log(1+\Delta_{\mathrm{old}})-\log(1+\Delta_{\mathrm{new}}))}
       {\Comp_{I_j,\M}(\Delta_{\mathrm{new}})-\Comp_{I_j,\M}(\Delta_{\mathrm{old}})}
\]
and take refinements while the ratio exceeds the current shadow price.  In operational terms: allocate more segments to high-mass regions when their profile curve drops quickly; tolerate larger $\varepsilon$-windows in low-mass or intrinsically hard regions; split a region only when the split lowers routing entropy plus repair deficiency more than it increases directory and model cost.

A direct implementation is:
\begin{enumerate}[label=\arabic*.,leftmargin=1.5em]
  \item choose candidate leaves and a grid of radii;
  \item compute certified profile values by segment-cover routines or lower/upper profile sandwiches;
  \item estimate conditional masses from the workload trace;
  \item greedily or via a Lagrange multiplier choose the local refinements with largest weighted entropy decrease per additional atom;
  \item rebuild the ordered directory for the resulting leaves.
\end{enumerate}

\subsection{Partition selection as a coupled optimization}

The definition of $\Gap_{\M}$ optimizes over both a partition and a radius allocation.  The radius allocation is separable after Lagrangian relaxation, but the partitioning problem remains global because adjacent keys compete for the same model atoms.  For many concrete model classes, however, the partition search has useful structure.  If $\Comp_{I,\M}(\Delta)$ can be evaluated for every interval $I$ and radius $\Delta$, then a dynamic program over prefixes computes the optimum for a fixed number of leaves and a discretized budget.  The state records the right endpoint of the current prefix, the number of used leaves, and the used atom budget; the transition chooses the previous cut and the local radius.  This dynamic program is intended as an exact optimization formulation for the stated architecture rather than a systems implementation; its role is to show that the parameter is algorithmically well-defined rather than merely existential.

For piecewise-linear predictors, the local profile can often be approximated by the minimum number of segments needed to cover the rank curve within vertical error $\Delta$.  Classical algorithms for error-bounded polygonal approximation then provide upper bounds \cite{imai1986polygonal} on $\Comp_{I,\M}(\Delta)$, while approximation-theoretic arguments provide lower bounds.  The instance parameter can therefore be estimated by combining segmentation routines with empirical workload masses.  This provides a bridge from the theorem to systems practice: one can diagnose whether a workload is hard because the query mass is spread across many leaves, because the local rank curve is difficult to approximate, or because the two are poorly aligned.

The Lagrangian form yields a design principle.  A structure should refine a leaf only when the decrease in weighted repair entropy exceeds the atom cost.  Equivalently, the marginal value of a new model atom should be approximately equalized across active leaves.  Regions whose query mass is high and whose approximation profile drops quickly should receive more precise models.  Regions whose query mass is low or whose approximation profile drops slowly should tolerate larger repair windows.  This rule is absent from worst-case learned-index analyses, but it is exactly what an optimizer needs when memory is the scarce resource.

A final structural point is worth isolating.  The best partition for routing entropy need not be the best partition for approximation.  Entropy favors high-probability leaves that can be reached quickly; approximation favors cuts at places where the rank curve changes behavior.  The parameter $\Gap_{\M}$ forces these objectives into the same optimization.  This is why the theorem is more informative than combining an optimal alphabetic tree with an independently optimized model error bound: the two decisions must be made jointly.

\section{Diagnostic Benchmark Results}
\label{sec:experiments}

These experiments have two distinct roles.  First, an exact finite-parameter experiment computes $\RGap_\M$ and $\Gap_\M$, rather than a proxy, on real-data-induced finite instances for an explicitly declared affine atom family and finite transcript encoding.  Second, a full-scale systems benchmark compares production baselines and diagnostic shadow variants on the large datasets.  That second experiment does not claim to compute the optimum for the PGM-index or RadixSpline implementation families; it exposes the systems overheads that separate the mathematical repair objective from end-to-end latency.  Neither layer proves the theorem, which is a formal statement.  Raw datasets are downloaded by script and verified by checksum; they are not included in the artifact archive.

The empirical claim is correspondingly precise.  The exact experiment tests that the theorem parameter is computationally measurable once $\M$, the transcript encoding, $B$, and the finite workload are declared.  The full-scale benchmark tests whether its components and failure modes remain observable in practical implementations: rank-spread can fail on skewed or adversarial workloads, and a prototype that reduces repair comparisons can still lose end-to-end latency because directory dispatch, cache behavior, and branch behavior dominate.  The shadow prototype is a diagnostic allocation instrument, not a proposed replacement for PGM-index or RadixSpline.

\begin{table}[H]
\centering
\small
\caption{Empirical hypotheses and their relation to the theorem.}
\label{tab:benchmark-bridge}
\begin{tabular}{llll}
\hline
ID & Question & Quantity & Evidence \\
\hline
H1 & Exact measurability & $\RGap_\M$, $\Gap_\M$ & finite DP values \\
H2 & Spread dependence & $\RGap_\M$ vs. $\Gap_\M$ & trace diagnostics \\
H3 & Non-atom cost & outside $B$ & bytes, route comps. \\
H4 & Latency separation & repair vs. time & timing ablations \\
\hline
\end{tabular}
\end{table}
\FloatBarrier

\subsection{Exact finite theorem-parameter experiment}

To compute the parameter itself, we define a finite experiment model $\M^{\mathrm{aff}}_Q$.  For each of the four real sorted datasets, the evaluator deterministically extracts $1024$ ordered keys spanning the file; these keys and the indicated finite successful-query distribution constitute the finite instance.  A leaf may use either no predictor at zero atoms or one affine chord predictor at one atom.  Its predicted rank is rounded to a rank quantum of $Q=16$ before the resulting finite word code enters the transcript, and its certified radius is checked against every key in the finite instance.  Thus the transcript alphabet and all repair supports are finite and explicit, rather than estimated by a histogram.

For each workload in $\{\texttt{uniform\_hits},\texttt{hotspot\_hits},\texttt{zipf\_hits}\}$, 50{,}000 deterministic finite-support queries define $\mu$, while correctness of this finite experiment is required for successful lookup of all $1024$ selected keys.  A dynamic program enumerates all ordered intervals and budgets $B\in\{0,4,8,16,32,64,128\}$ and minimizes separately the residual-entropy and radius objectives in the definitions of $\RGap_{\M^{\mathrm{aff}}_Q}$ and $\Gap_{\M^{\mathrm{aff}}_Q}$.  Consequently, the reported values are exact for this declared finite successful-key model and workload; they are not an estimator of a larger model family or of unsuccessful predecessor queries.  The program-node column reports $\sum_y K_y$ for the selected residual construction, where $K_y$ counts answers attainable by the required finite query universe under transcript $y$, including attainable zero-mass answers.  This is the materialized transcript-indexed repair-program resource that the theorem deliberately reports outside $B$.

\begin{table}[H]
\centering
\small
\caption{Exact finite-instance parameters in bits per query for $\M^{\mathrm{aff}}_{16}$.  The selected column budget is $B=32$; program nodes are outside $B$.}
\label{tab:exact-parameter}
\resizebox{\linewidth}{!}{%
\begin{tabular}{llrrrrr}
Dataset & Workload & $\RGap_\M(0)$ & $\RGap_\M(32)$ & $\Gap_\M(32)$ & Atoms & Prog. nodes \\
\hline
Books & hits & 9.98 & 4.32 & 5.92 & 2 & 1024 \\
Books & hotspot & 7.66 & 3.93 & 4.29 & 3 & 1024 \\
Books & zipf & 9.57 & 4.38 & 5.25 & 2 & 1024 \\
\hline
FB & hits & 9.98 & 3.98 & 3.47 & 1 & 1024 \\
FB & hotspot & 7.66 & 3.88 & 3.37 & 1 & 1024 \\
FB & zipf & 9.57 & 3.88 & 3.46 & 1 & 1024 \\
\hline
Wiki & hits & 9.98 & 4.37 & 5.49 & 2 & 1024 \\
Wiki & hotspot & 7.66 & 4.20 & 3.94 & 3 & 1024 \\
Wiki & zipf & 9.57 & 4.04 & 6.09 & 1 & 1024 \\
\hline
OSM & hits & 9.98 & 5.91 & 7.49 & 3 & 1024 \\
OSM & hotspot & 7.66 & 4.66 & 5.39 & 5 & 1024 \\
OSM & zipf & 9.57 & 5.26 & 7.29 & 2 & 1024 \\
\hline

\end{tabular}
}
\end{table}
\FloatBarrier

This table gives a direct empirical link to the theorem within the declared finite architecture: adding local affine atoms reduces exact $\RGap_\M$ on every displayed dataset/workload pair.  It also makes the accounting limitation measurable rather than rhetorical: the chosen upper-bound repair construction retains all 1024 required successful-key answer nodes even at $B=32$.  Thus the improvement is an atom-budget statement, not a byte-space result.  The gaps between $\RGap_\M(32)$ and $\Gap_\M(32)$ further demonstrate why a radius-only interpretation requires the rank-spread condition.

\subsection{Full-scale systems protocol}

The default benchmark suite uses four SOSD/Zenodo sorted integer datasets \cite{marcus2021benchmarking,sortedints2025zenodo}: books, Facebook identifiers, wiki timestamps, and OSM cell identifiers.  In artifact notation these are
\begin{quote}
\small
\texttt{books\_200M\_uint32}, \texttt{fb\_200M\_uint64},\\
\texttt{wiki\_ts\_200M\_uint64}, and \texttt{osm\_cellids\_800M\_uint64}.
\end{quote}
The benchmark harness compares exact lookup paths: \texttt{std::lower\_bound}, PGM-index \cite{ferragina2020pgm}, RadixSpline \cite{kipf2020radixspline}, an ordered shadow-profile prototype, and a radix-routed shadow-profile variant.  The shadow variants are diagnostic prototypes for the allocation rule, not proposed replacements for PGM-index or RadixSpline.  They greedily split high-mass, high-error regions according to the weighted repair decrease per additional segment, so their role is to expose the size/repair/routing tradeoff in the accounting model and to show where that tradeoff stops predicting latency.  The PGM and RadixSpline error grid is $\varepsilon\in\{32,128,512\}$; RadixSpline uses 18 radix bits.  The shadow-profile atom grid is $\{256,1024\}$ with 8192 profile samples per candidate segment.  Unless otherwise stated, each configuration uses 200,000 queries, three independent query seeds, a 1000-query warm-up, and a 20,000-query latency sample.  The workload suite contains successful uniform lookups, unsuccessful midpoint probes, a 50--50 mixed hit/miss workload, Zipf-like hits, hotspot hits, and adversarial probes into large gaps.  All methods return the exact lower-bound position, and the CSV output includes a checksum of returned ranks.

The reported local run used an Apple M4 Pro with 48 GiB RAM, macOS 26.3, Apple clang 17.0.0, and release builds.  Because this is a single-machine artifact rather than a cross-platform performance study, absolute latency values should be read as local measurements.  The intended performance comparison is within this controlled run: repair work, routing work, directory bytes, and latency are recorded together so that a reader can see when the mathematical repair term stops predicting end-to-end time.  The paper reports confidence intervals over repeated query streams and keeps raw CSV output in the artifact for reanalysis.

\subsection{Full-scale systems results}

Each full-scale run records build time, total index bytes, directory bytes outside the atom budget, repair-program bytes, model atoms or segment records, mean lookup latency, sampled 95th-percentile latency, average repair comparisons, routing comparisons, fallback count, coarsened residual entropy ratio, residual support, and checksum.  Unlike \cref{tab:exact-parameter}, these trace diagnostics are not exact values of $\RGap_\M$ for the full implementation families.  \Cref{tab:real-results} reports the fastest successful-lookup configuration within each family.  The full grid is retained in \texttt{results/benchmark\_results.csv}.

\begin{table}[H]
\centering
\small
\caption{Real-data diagnostic successful-lookup results.  Size is total index size excluding the sorted array; Dir is the directory part outside the local atom budget.  Mean, build, and p95 cells report mean $\pm$ 95\% confidence interval over repeated query streams when repeats are available.}
\label{tab:real-results}
\resizebox{\linewidth}{!}{%
\begin{tabular}{llrrrrrr}
Dataset & Index & Size MB & Dir MB & Build ms & Mean ns & p95 ns & Repair \\
\hline
Books & Binary & 0.00 & 0.00 & 0.0 & 497.9{\scriptsize$\pm$7.8} & 1430.3{\scriptsize$\pm$98.3} & 27.66 \\
Books & Shadow-O(1024) & 0.07 & 0.01 & 251.7{\scriptsize$\pm$16.7} & 743.5{\scriptsize$\pm$34.9} & 791.7{\scriptsize$\pm$47.5} & 10.60 \\
Books & PGM(32) & 4.12 & 0.00 & 1730.6{\scriptsize$\pm$6.8} & 177.4{\scriptsize$\pm$14.9} & 402.7{\scriptsize$\pm$54.2} & 6.08 \\
Books & RS(32) & 11.72 & 0.00 & 1069.2{\scriptsize$\pm$4.4} & 181.7{\scriptsize$\pm$6.3} & 402.7{\scriptsize$\pm$54.2} & 6.09 \\
\hline
FB & Binary & 0.00 & 0.00 & 0.0 & 458.9{\scriptsize$\pm$7.3} & 1347.3{\scriptsize$\pm$151.5} & 27.66 \\
FB & Shadow-R(1024) & 2.16 & 2.11 & 214.1{\scriptsize$\pm$10.5} & 1336.2{\scriptsize$\pm$45.9} & 1667.0{\scriptsize$\pm$124.7} & 14.66 \\
FB & PGM(32) & 17.31 & 0.00 & 2455.5{\scriptsize$\pm$15.3} & 243.3{\scriptsize$\pm$15.5} & 513.7{\scriptsize$\pm$54.6} & 6.07 \\
FB & RS(128) & 9.11 & 0.00 & 1098.6{\scriptsize$\pm$1.8} & 416.3{\scriptsize$\pm$22.9} & 680.7{\scriptsize$\pm$26.8} & 8.02 \\
\hline
Wiki & Binary & 0.00 & 0.00 & 0.0 & 463.4{\scriptsize$\pm$14.8} & 1250.0{\scriptsize$\pm$46.4} & 27.66 \\
Wiki & Shadow-O(1024) & 0.07 & 0.01 & 211.8{\scriptsize$\pm$1.3} & 1281.2{\scriptsize$\pm$9.8} & 1292.0 & 14.73 \\
Wiki & PGM(32) & 1.38 & 0.00 & 1429.4{\scriptsize$\pm$1.4} & 167.0{\scriptsize$\pm$7.7} & 347.3{\scriptsize$\pm$72.1} & 6.08 \\
Wiki & RS(32) & 4.16 & 0.00 & 644.5{\scriptsize$\pm$4.5} & 174.3{\scriptsize$\pm$13.3} & 375.3{\scriptsize$\pm$47.0} & 6.08 \\
\hline
OSM & Binary & 0.00 & 0.00 & 0.0 & 745.5{\scriptsize$\pm$118.7} & 1833.3{\scriptsize$\pm$188.4} & 29.66 \\
OSM & Shadow-R(1024) & 2.16 & 2.11 & 386.0{\scriptsize$\pm$22.2} & 2151.0{\scriptsize$\pm$77.9} & 2583.0 & 19.07 \\
OSM & PGM(32) & 47.47 & 0.00 & 9558.7{\scriptsize$\pm$98.1} & 305.3{\scriptsize$\pm$18.7} & 569.3{\scriptsize$\pm$54.6} & 6.07 \\
OSM & RS(512) & 5.82 & 0.00 & 4371.6{\scriptsize$\pm$34.0} & 402.7{\scriptsize$\pm$33.9} & 819.3{\scriptsize$\pm$54.6} & 10.01 \\
\hline

\end{tabular}
}
\end{table}
\FloatBarrier

The results should be read as a diagnostic comparison, not as a speed claim for the shadow prototype.  PGM and RadixSpline reduce the repair term to about six to ten comparisons and obtain the best latency on most real-data configurations.  The shadow-profile variants also reduce repair comparisons relative to binary search, often with far smaller model state, but ordered routing and radix dispatch add costs outside the theorem's repair term.  The latency lesson is short: repair comparisons are necessary but not sufficient.  Once repair is small, directory dispatch, cache behavior, and branch behavior can dominate.  The detailed stress-workload, rank-spread, routing-overhead, and correlation tables are moved to \cref{app:benchmark-diagnostics} so that the main text keeps the experimental message rather than a performance scoreboard.

The synthetic allocation examples in \cref{sec:powerlaw} explain why the shadow rule spends atoms where query mass and local hardness align.  The real-data results add the systems qualification: a lower repair term must be priced alongside non-atom directory bytes and hardware effects before it becomes an end-to-end latency prediction.

\subsection{Reproducibility package}

The artifact contains a C++17 full-scale benchmark harness, a C++17 exact finite-parameter evaluator, Python scripts for downloading datasets and aggregating CSV outputs into LaTeX tables, and formal sanity checks in both Lean 4 and Coq.  These proof files have a deliberately limited role: they check the finite transcript, repair window, and accounting interface used by the theorem.  In particular, they prove that a certified singleton repair window has zero residual ambiguity, that residual offsets are bounded by the certified window width, that a repair program's answer lies inside the transcript window, and that materializing arbitrary repair-program nodes changes the separately reported repair resource without changing the local atom count.  They do not mechanize the entropy theorem, the alphabetic-tree coding argument, the dynamic program, or the benchmark algorithms.  A standard run executes:

\begin{table}[H]
\centering
\small
\caption{Artifact reproducibility metadata.  The manuscript archive is self-contained except for raw datasets, which are downloaded and verified by checksum.}
\label{tab:artifact-metadata}
\resizebox{\linewidth}{!}{%
\begin{tabular}{ll}
\hline
Item & Value \\
\hline
Artifact archive & \texttt{learned\_index\_artifact.zip} \\
Archive provenance & standalone submission archive; no enclosing Git commit in this workspace \\
PGM-index snapshot & \texttt{c6fcf3d} from \texttt{github.com/gvinciguerra/PGM-index} \\
RadixSpline snapshot & \texttt{ab96aa5} from \texttt{github.com/learnedsystems/RadixSpline} \\
Compiler/toolchain & Apple clang 17.0.0, C++17, CMake release build \\
C++ flags & \texttt{-O3 -Wall -Wextra -Wpedantic} \\
Benchmark grid & 200{,}000 queries, 3 query seeds, 1000 warm-up queries, 20{,}000 latency samples \\
Exact-parameter grid & 1024 keys, 50{,}000 queries, $Q=16$, $B\in\{0,4,8,16,32,64,128\}$ \\
Formal checks & \texttt{lean formal/ResidualEntropySanity.lean}; \texttt{coqc formal/ResidualEntropySanity.v} \\
\hline
\end{tabular}
}
\end{table}
\FloatBarrier

\begin{table}[H]
\centering
\small
\caption{Dataset checksums used by the artifact downloader.  Raw datasets are excluded from the archive and reproduced from Zenodo record \href{https://doi.org/10.5281/zenodo.15240501}{10.5281/zenodo.15240501}.}
\label{tab:dataset-checksums}
\resizebox{\linewidth}{!}{%
\begin{tabular}{lrl}
\hline
Dataset & Bytes & MD5 \\
\hline
\texttt{books\_200M\_uint32} & 800000008 & \texttt{55845580be1554d82be1c0dda416005c} \\
\texttt{fb\_200M\_uint64} & 1600000008 & \texttt{679eff3bfbc80572b30f6575b40b6918} \\
\texttt{wiki\_ts\_200M\_uint64} & 1600000008 & \texttt{4f1402b1c476d67f77d2da4955432f7d} \\
\texttt{osm\_cellids\_800M\_uint64} & 6400000008 & \texttt{70670bf41196b9591e07d0128a281b9a} \\
\hline
\end{tabular}
}
\end{table}
\FloatBarrier

\begin{quote}
\small
\begin{verbatim}
scripts/fetch_third_party.sh
cmake -S . -B build -DCMAKE_BUILD_TYPE=Release
cmake --build build --target learned_index_bench -j
cmake --build build --target exact_parameter -j
python3 scripts/download_sosd.py --data-dir data
scripts/run_benchmarks.sh
scripts/run_exact_parameter.sh
lean formal/ResidualEntropySanity.lean
coqc formal/ResidualEntropySanity.v
scripts/check_artifact_paths.sh
\end{verbatim}
\end{quote}
The path-scrub check rejects generated source artifacts containing local absolute paths.  Thus benchmark outputs can be archived without exposing workstation-specific directories.  The smoke-test configuration runs on a synthetic dataset and verifies exactness against binary search before any large real-data run is launched.

\section{Dynamic Boundary Checks}
\label{sec:dynamic}

The static theorem concerns a fixed set and a fixed workload.  Dynamic updates change both the rank curve and the workload-conditioned profile.  The results in this section should therefore be read as consequences and boundary checks, not as the central novelty of the paper.  They explain what the static framework predicts under standard rebuilding, where it loses force under tombstones and drift, and why exact dynamic rank still inherits classical cell-probe barriers.

\subsection{Logarithmic rebuilding}

Fix an integer branching parameter $\beta\ge 2$.  Maintain levels
\[
  L_0,L_1,\ldots,L_h,\qquad h=\lfloor\log_\beta n\rfloor,
\]
where level $L_i$ stores either zero or $\Theta(\beta^i)$ live keys in sorted order together with a static routed piecewise learned index built for that level.  Insertions behave like carries in a base-$\beta$ counter: when $\beta$ structures of size $\beta^i$ become full, they are merged into one structure of size $\beta^{i+1}$.  Deletions can be handled by symmetric rebuilding or by tombstones followed by periodic cleanup.  This is the cleanest dynamization to analyze because each key participates in only $O(\log_\beta n)$ rebuild levels.

\begin{theorem}[Standard logarithmic-rebuilding consequence]
\label{thm:dynamic-upper}
Assume that a static structure on $m$ keys can be rebuilt in $O(m)$ time and uses $O(m)$ atoms.  The $\beta$-ary logarithmic method yields a fully dynamic routed piecewise learned index with:
\begin{enumerate}[label=(\alph*),leftmargin=1.5em]
  \item total space $O(n)$;
  \item amortized insertion and deletion time $O(\beta\log_\beta n)$, ignoring the cost of external memory allocation;
  \item exact predecessor and rank queries in expected time
  \[
    O\left(\sum_{i\in A}\bigl(\Gap_{\M}(S_i,\mu_i,\Theta(|S_i|))+1\bigr)\right),
  \]
  where $A$ is the set of active levels, $S_i$ is the live set stored at level $i$, and $\mu_i$ is the induced query distribution for level $i$.
\end{enumerate}
In particular, if every active level has static gap at most $G$, then queries take $O((G+1)\log_\beta n)$ expected time.
\end{theorem}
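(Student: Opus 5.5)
The plan follows the standard Bentley--Saxe/Overmars accounting \cite{overmars1983design}, specialized to the $\beta$-ary carry structure, with the query bound taken from the upper-bound half of \cref{cor:rank-spread-gap} at each level. The first step is structural. Level $L_i$ holds either nothing or $\Theta(\beta^i)$ keys, and level sizes grow geometrically. The live set is therefore split across at most $h+1=O(\log_\beta n)$ disjoint sorted levels, and the total key count is $O(n)$ once tombstones are purged periodically whenever dead entries reach a constant fraction. Each static structure on $m=|S_i|$ keys uses $O(m)$ atoms and $O(m)$ words by hypothesis. Directory and repair storage in the upper-bound construction is a binary-search-style program over the certified window, not a materialized transcript table, so it is also $O(m)$. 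Summing over levels gives (a).

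For (b), I would charge each rebuild of a structure of size $\Theta(\beta^{i+1})$ to the keys it absorbs. A key enters level $i+1$ only through a merge triggered by $\beta$ full structures at level $i$. Merging sorted runs and rebuilding costs $O(\beta^{i+1})$ by the $O(m)$ rebuild hypothesis. In the worst carry pattern a key is re-merged $O(\beta)$ times per level before it moves up, if one counts the $\beta$ sibling slots. Over $O(\log_\beta n)$ levels this gives amortized $O(\beta\log_\beta n)$ per insertion.

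Deletions are handled by tombstones. Each deletion marks a key in $O(1)$ plus a query to locate it. When the number of tombstones reaches half the live count, a global rebuild is triggered, and its $O(n)$ cost is charged to the $\Omega(n)$ deletions since the last rebuild. The per-key charge stays within $O(\beta\log_\beta n)$.

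For (c), a query on $q$ is answered by querying every active level $i\in A$ for $\rank_{S_i}(q)$ and summing the results, after subtracting tombstone counts. Tombstones can be treated as a separate level-structured multiset queried in the same way, which at most doubles the bound. Each static structure at level $i$ is the upper-bound construction of \cref{cor:rank-spread-gap} for $(S_i,\mu_i,\Theta(|S_i|))$. Its expected cost under $\mu_i$ is $O(\Gap_\M(S_i,\mu_i,\Theta(|S_i|))+1)$, where $\mu_i$ is the push-forward of $\mu$, which is just $\mu$ viewed against $S_i$. Predecessor is recovered as the maximum of the level predecessors. Linearity of expectation gives the displayed sum, and the final clause follows from $|A|\le h+1=O(\log_\beta n)$.

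The main obstacle is the amortization in (b), specifically making the carry argument honest. The issue is whether a key is charged $O(\beta)$ or $O(1)$ per level, which depends on whether a level may hold up to $\beta-1$ separate structures or must merge eagerly. I would adopt the convention of up to $\beta-1$ structures per level. This matches the stated bound, since each key is copied once per level per carry, and the $\beta$ factor absorbs the query-side multiplicity.

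A secondary subtlety is that $\mu_i$ and the atom budget $\Theta(|S_i|)$ must be fixed at rebuild time. I would state that the bound holds for the workload frozen at the last rebuild of each level.
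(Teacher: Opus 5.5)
Your overall route matches the paper's: disjoint levels for (a), charging each carry rebuild to the updates that trigger it for (b), and for (c) applying the upper-bound half of \cref{cor:rank-spread-gap} level by level and summing. The gap is in how you resolve the convention you flag as the main obstacle.

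If a level may hold up to $\beta-1$ separate static structures, each key is rebuilt only once per level. Updates then cost $O(\log_\beta n)$, which is harmless. But a query must now consult up to $(\beta-1)(h+1)$ structures. The cost at level $i$ becomes a sum of up to $\beta-1$ separate $\Gap$ terms plus $\beta-1$ additive constants. That is not $O(\Gap_\M(S_i,\mu_i,\Theta(|S_i|))+1)$ for the level's live set, and the final clause degrades to $O((G+1)\beta\log_\beta n)$. Your appeal to $|A|\le h+1$ counts levels, not structures.

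The sentence ``the $\beta$ factor absorbs the query-side multiplicity'' conflates the two operations. The $\beta$ appears only in the update bound (b), and nothing in (c) can pay for it. The convention that matches the statement is the one the paper's proof uses (it also matches the subsequent remark that larger $\beta$ means fewer levels but costlier rebuilds). Each level carries a single static index, rebuilt whenever a carry of $\Theta(\beta^i)$ keys arrives from below. Each $O(\beta^{i+1})$ rebuild is charged to $\Theta(\beta^i)$ updates, giving $O(\beta)$ per update per level, while a query touches only $|A|\le h+1$ indexes.

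Your own paragraph on (b) in fact computes exactly this: a key re-merged $O(\beta)$ times per level. So as written, your (b) and your chosen convention contradict each other. Adopt the single-index convention and the argument closes.

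A secondary problem is your tombstone handling in (c). If deleted keys stay inside the levels, the maximum of the level predecessors can be a deleted key. Subtracting tombstone counts repairs rank but not predecessor. The live predecessor may lie beyond a run of dead keys as long as a constant fraction of a level before your purge fires. This is the stale-state cost of \cref{prop:tombstones}, and it is not bounded by any per-level $\Gap_\M$ term. The query cost of your separate tombstone multiset is not part of the displayed sum either. Since (c) is stated for the live sets $S_i$, you should either build each level's static index on live keys only, as the paper's per-level predecessor argument tacitly requires, or add a live-predecessor mechanism and account for its cost.

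Your observation in (a) is correct and slightly sharper than the paper's atom-only space count. The \cref{cor:rank-spread-gap} construction uses binary-search repair rather than a materialized transcript table, so its non-atom storage is also $O(|S_i|)$.
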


\begin{proof}
Each live key is stored in exactly one active level, and the static structure stored at level $i$ uses $O(|S_i|)$ atoms by assumption.  Since the active levels are disjoint as sets of keys, the total number of stored keys over all levels is $n$ and the total space is $O(n)$.

For updates, consider the carry operation at level $i$.  A rebuild from $\beta$ full structures of size $\Theta(\beta^i)$ creates one structure of size $\Theta(\beta^{i+1})$ and costs $O(\beta^{i+1})$ by the assumed linear rebuild bound.  Such a carry can occur only after $\Theta(\beta^i)$ lower-level insertions or deletions have accumulated enough items to fill the $\beta$ children that trigger the merge.  Charging the rebuild cost to those triggering updates gives an amortized contribution $O(\beta)$ for level $i$.  There are $O(\log_\beta n)$ possible levels, so the total amortized update cost is $O(\beta\log_\beta n)$.

For queries, the structure asks every active level for its local predecessor candidate, or for its local rank contribution in the rank-query case.  The global predecessor is the maximum of the per-level candidates below $q$, and the global rank is the sum of the per-level ranks.  Let $A$ be the set of active levels.  On level $i$, the static expected query cost under the induced conditional workload $\mu_i$ is bounded by
\[
  O\bigl(\Gap_{\M}(S_i,\mu_i,\Theta(|S_i|))+1\bigr)
\]
by the static rank-spread guarantee applied to that level's structure and atom budget.  Summing the costs over active levels and adding only constant work per level to combine candidates gives the displayed query bound.  If every active level has static gap at most $G$, the sum over $O(\log_\beta n)$ active levels is $O((G+1)\log_\beta n)$.
\end{proof}

\begin{remark}[Choice of $\beta$]
With constant $\beta$, updates cost $O(\log n)$ amortized time and queries scan $O(\log n)$ levels.  Larger $\beta$ reduces the number of active levels but increases rebuild cost.  This is the standard logarithmic-method tradeoff, now with the static cost of each level expressed by $\Gap_\M$.
\end{remark}

\subsection{A level-scanning lower bound}

The previous theorem exposes an additive level-scanning overhead.  The next theorem shows that this is not merely a loose analysis for this family.

\begin{theorem}[Lower bound within the levelled family]
\label{thm:level-lb}
Consider any exact dynamic predecessor or rank structure obtained by a $\beta$-ary levelled decomposition with $h+1$ active levels.  There exists a set of live keys and a query $q$ for which the structure must inspect all $h+1$ active levels.  Consequently, the additive query overhead in this family is $\Omega(h)=\Omega(\log_\beta n)$.
\end{theorem}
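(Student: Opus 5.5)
We will prove \cref{thm:level-lb} with an adversary argument. The structure only fixes how keys are split into levels. The live set, and how keys are spread across levels, is ours to choose. The key fact is that each level's static structure stores only its own keys, so nothing it stores reveals whether another level holds a key near $q$. We will therefore build an instance in which, for a single query $q$, every level could hold the true answer.

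The construction has three steps.

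First, we fix $h+1$ active levels $L_0,\ldots,L_h$, each nonempty, as the carry discipline allows; for example, an insertion count whose base-$\beta$ digits are all nonzero. We assign keys so that level $i$ holds a key $x_i$ with $x_0<x_1<\cdots<x_h<q$. All other keys of every level are placed either below $x_0$ or above $q$. Under this placement the global predecessor of $q$ is $x_h$.

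Second, we take any execution that returns an answer while skipping some level $L_k$, meaning it reads no cell and makes no comparison involving level $k$'s structure. We then build a second instance that agrees with the first on every level except $L_k$. For predecessor queries, we change $L_k$ so that it contains a key $x'$ with $x_h<x'<q$, and we rebuild it with the same number of keys, so the level configuration and the levels visited are unchanged. For rank queries, it is enough to move one key of $L_k$ from above $q$ to below $q$. The skipping execution sees identical information in both instances, so it returns the same answer, but the correct answers differ. Hence every correct execution must inspect each of the $h+1$ levels.

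Third, each inspection costs at least one probe or comparison, and combining candidates across levels is charged separately. So the additive overhead is $\Omega(h+1)$. Since $h=\lfloor\log_\beta n\rfloor$ whenever all levels are populated, which needs $n\ge\sum_i\beta^i$, this gives $\Omega(\log_\beta n)$.

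The main obstacle is making precise what it means for the structure to be levelled. We need that level $k$'s stored data depends only on $S_k$, and that information about $L_k$ can be learned only by probing $L_k$. The indistinguishability step needs this, because without it a global summary such as a cross-level fractional-cascading pointer could reveal level $k$'s contents. We will state this explicitly as the defining property of the family, consistent with the query procedure in \cref{thm:dynamic-upper}. We must also check that changing $L_k$ keeps the same level sizes and so the same carry state. Doing the change by moving a key across $q$ or by replacing one key with another preserves $|S_k|$, which settles this.
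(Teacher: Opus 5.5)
Your proposal is correct and is essentially the paper's argument: place one key per level strictly below $q$, and if some level $L_t$ is skipped, replace its key by one lying between the top key and $q$, which changes the predecessor without changing anything the query read. Your version is in fact slightly more careful than the paper's. The paper's swap leaves $\rank_S(q)$ unchanged, so it only covers predecessor queries; your separate rank move closes that case, though for a level holding only $x_k$ you should push $x_k$ above $q$ rather than pull a key down. You also state explicitly the locality and level-size-preservation assumptions that the paper leaves implicit.
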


\begin{proof}
Let the active levels be $L_0,\ldots,L_h$.  Choose live keys
\[
  y_0<y_1<\cdots<y_h<q
\]
with $y_i\in L_i$ for every $i$.  In this instance, the predecessor of $q$ is $y_h$.  Suppose an exact query algorithm does not inspect some active level $L_t$ on this query.  Construct a second instance that is identical on every inspected level and differs only inside $L_t$: replace the key $y_t$ by a key $z$ satisfying
\[
  y_h<z<q.
\]
The query transcript is unchanged, because the algorithm never reads or otherwise inspects $L_t$ and all inspected state is identical.  However, the correct predecessor in the modified instance is now $z$, not $y_h$.  The algorithm must therefore return the same answer on two indistinguishable executions with different correct answers, contradicting exactness.  Hence every exact algorithm in this levelled family must inspect all $h+1$ active levels on some instance.  Since $h=\Theta(\log_\beta n)$ in the worst case, the additive level-scanning overhead is $\Omega(\log_\beta n)$.
\end{proof}

This theorem is intentionally modest.  It does not claim that every dynamic learned index must pay $\Omega(\log n)$ additive query overhead.  It says that the basic Overmars-style decomposition pays this cost unless it is augmented by additional global coordination.  Recent geometric worst-case constructions show that more sophisticated dynamic learned indexes can avoid pure level scanning at the price of heavier machinery \cite{gaede2025dynamic}.

\subsection{Tombstones and range-reporting degeneration}

A common practical deletion strategy is to insert tombstones and reconcile them during later rebuilds.  This can be efficient for point updates, but it can be bad for exact range reporting.

\begin{proposition}[Tombstone degeneration]
\label{prop:tombstones}
Any levelled dynamic learned index that handles deletions by inserting tombstones can be forced into a state in which an exact range query reporting $k$ live keys must examine $\Omega(N+k)$ items after $N$ update operations, unless the structure performs additional cleanup whose cost is charged elsewhere.
\end{proposition}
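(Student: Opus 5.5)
We argue by an adversarial construction in the spirit of the indistinguishability argument used for \cref{thm:level-lb}. Fix a target range $[x,x']$ and an update budget $N$. First insert $N/2$ keys $z_1<\cdots<z_{N/2}$ inside $(x,x')$ and enough filler so that carries push these keys into a deep level $L_i$, where they sit as live entries of a static routed piecewise index. Then delete every $z_t$. Under the tombstone policy, each deletion inserts a tombstone record $\bar z_t$ into a shallow level. We schedule the deletions so that no carry merges a tombstone with its target before the query; this is possible whenever the $\beta$-ary counter has not yet reached level $i$. If the structure's rule would force such a merge, that merge is precisely the ``additional cleanup whose cost is charged elsewhere'' excluded by the statement. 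Finally, insert $k$ live keys into $(x,x')$. After $N$ updates the range then contains $k$ live keys plus $\Theta(N)$ dead keys and $\Theta(N)$ tombstones.

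The lower bound rests on an adversary claim: an exact reporting algorithm must touch each dead key $z_t$ or its tombstone $\bar z_t$. Suppose some pair $(z_t,\bar z_t)$ is never inspected. Build a second instance identical on all inspected cells in which the deletion of $z_t$ never happened. Concretely, replace $\bar z_t$ by a tombstone for a key outside $[x,x']$, which keeps the level sizes and the layout of inspected cells unchanged. The query transcript is then identical, but $z_t$ is live in the second instance and must be reported. This contradicts exactness, exactly as in \cref{thm:level-lb}.

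Given the claim, the cost is at least the number of pairs, $N/2$. Outputting the answer adds $\Omega(k)$, so the total is $\Omega(N+k)$.

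The main obstacle is making the modified instance genuinely consistent with the structure's deterministic layout. Moving a tombstone could shift positions, change model atoms, or alter directory keys of the shallow level, and those changes might themselves be inspected. I would try to fix this first by choosing the replacement tombstone with the same rank inside its level, so that the sorted array changes only at an uninspected cell. Failing that, I would argue at the level of whole entries: any cell encoding $\bar z_t$'s key must be read to tell the two instances apart, so the inspected-cell count still upper bounds the number of distinguished pairs.
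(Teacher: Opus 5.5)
Your strategy is the same as the paper's: insert $N/2$ keys into the range, delete them by tombstones, add $k$ fresh keys, and run an indistinguishability argument for each dead key. However, the step you flag as the main obstacle is a genuine gap, and neither of your proposed repairs closes it.

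\textbf{Why your substitution fails.} Replacing $\bar z_t$ by a tombstone for a key outside $[x,x']$ is not a local change.
\begin{itemize}
\item Within a level, records are stored in key order, so the in-range entries occupy a contiguous block. Moving one tombstone out of the range shifts the block boundary and the positions of intermediate records, and possibly changes that level's segments and directory keys.
\item Your ``same rank'' repair keeps the array sorted only if $\bar z_t$ is the first or last in-range entry of its level. That covers at most two tombstones per level, so it cannot produce $N/2$ disjoint distinguishing cell sets.
\item Your fallback (``any cell encoding $\bar z_t$'s key must be read'') fails for the same reason. A single probe at the block boundary distinguishes the original instance from all $N/2$ modified instances at once, so inspected cells no longer upper-bound the number of distinguished pairs.
\end{itemize}

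\textbf{The adversary claim is actually false in your setting.} Suppose tombstones are separate sorted records and each one kills a distinct older key, for example because the structure tombstones only currently live keys. The query can then get, with two rank searches per level, the number of in-range regular records and in-range tombstones in every level. In your instance, the only in-range regular records in levels older than the tombstones are the $N/2$ keys $z_t$ in $L_i$. These are matched by exactly $N/2$ in-range tombstones. The counts alone therefore certify every $z_t$ dead after polylogarithmically many probes, and the $k$ fresh keys are reported with $O(k)$ further work. So an exact algorithm need not touch most pairs $(z_t,\bar z_t)$.

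\textbf{The missing idea is to make the adversary's modification local.} This is what the paper's proof does. It treats the stale/live status of each in-range entry as a marker that the adversary flips in place, with the same key and the same position. The modified instance then differs from the original in exactly one uninspected cell, these cells are disjoint across $t$, and an exact query must read all $\Theta(N)$ of them plus the $k$ output records. This works only if no uncharged aggregate of the markers is available, such as stored counts or a separately sorted tombstone array supporting the rank arithmetic above. The paper folds such aggregates into the excluded term: ``unless a prior cleanup step has already paid to remove or summarize those tombstones.'' You need to state that modeling assumption explicitly and replace your out-of-range-tombstone substitution by an in-place stale-to-live flip.
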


\begin{proof}
Fix a query range $Q$.  Perform $N/2$ insertions of distinct keys inside $Q$, followed by $N/2$ deletions of those same keys, and suppose the deletion policy records the removals by tombstones rather than physically cleaning the affected range.  Then the representation contains $\Theta(N)$ stale markers associated with keys in $Q$.  Now leave $k$ live keys inside the same range.  Consider an exact range-reporting query for $Q$.  To certify that its output contains exactly the live keys and no deleted keys, the query procedure must distinguish live records from stale tombstone records encountered in the range.  If it skips an uninspected stale/live marker location, an adversary can change that marker from stale to live, or from live to stale, without changing anything the algorithm has read; the correct output changes but the transcript does not.  Therefore exactness forces the procedure to examine the accumulated stale markers plus the $k$ live output records, which is $\Omega(N+k)$ work unless a prior cleanup step has already paid to remove or summarize those tombstones.
\end{proof}

This pathology matches the practical observation that dynamic learned indexes may behave well on insertion-heavy workloads but degrade under deletion-heavy or adversarial update sequences \cite{wongkham2022ready,yang2024aca,luo2025robust}.

\subsection{Dynamic distribution drift}

The dynamic setting introduces a problem that is absent from the static theorem: the instance itself changes over time.  Insertions and deletions alter the ordered set $S$, but they may also change the workload distribution and the local hardness profile.  A model that is near-optimal for yesterday's rank curve may allocate atoms to the wrong regions after a burst of updates.  In our notation, both $p_j$ and $\kappa_j$ can drift.  When drift preserves their alignment, learned indexing can remain fast; when drift makes high-mass regions locally hard, the repair term grows even if the routing directory remains balanced.

This observation gives a theoretical explanation for why retraining policies matter.  Rebuilding too often pays update cost; rebuilding too rarely lets the structure operate with stale radii and stale partitions.  A principled policy should monitor the increase in the repair-entropy potential and rebuild when the saved query cost amortizes the rebuild cost.  The resulting strategy resembles classical global rebuilding, but the trigger is not only size imbalance.  It is the loss of alignment between workload mass and approximation hardness.

The same perspective applies to deletion handling.  Tombstones preserve update speed by postponing cleanup, but they change the effective repair problem for range queries.  The model may still predict where the range begins, yet exact reporting must separate live keys from stale markers.  In other words, tombstones add a second repair term that is not captured by rank-prediction error alone.  Any fully dynamic theory that aims to be predictive for systems will need to price this stale-state repair explicitly.

\subsection{Cell-probe barrier for exact dynamic rank}

We now leave the levelled family and state a generic exact lower bound.  No learned prediction mechanism can bypass the information-transfer barrier for exact dynamic rank.

\begin{theorem}[Dynamic rank barrier, as a reduction]
\label{thm:rank-barrier}
Any exact dynamic learned data structure that supports insert, delete, and rank on an ordered set of size $n$ in the cell-probe model with $w$-bit cells inherits the classical dynamic subset-rank lower bound.  In particular, in the standard word-RAM regime $w=\Theta(\log n)$, either update time or query time is $\Omega(\log n/\log\log n)$ for the corresponding dynamic prefix-counting primitive.
\end{theorem}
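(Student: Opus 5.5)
The plan is a black-box reduction: any exact dynamic learned rank structure is simulated verbatim as a dynamic subset-rank (prefix-counting) structure. The classical cell-probe lower bound of Fredman and Saks \cite{fredman1989cellprobe}, sharpened by P\v{a}tra\c{s}cu--Demaine \cite{patrascu2006loglb}, then transfers unchanged. Learned components never enter the argument. In the cell-probe model they are just some memory layout plus free computation, so the lower bound applies to the whole structure regardless of its atoms, directory, or repair program.

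The steps are as follows.
\begin{enumerate}
\item Fix the target problem. Dynamic subset rank over a universe $[U]$ with $U=\mathrm{poly}(n)$ maintains $S\subseteq[U]$ under insert and delete, and answers $\rank_S(q)=|\{x\in S:x\le q\}|$. Equivalently, it is dynamic prefix counting on a bit vector with bit flips. The known bound is $t_q\log(t_u w)=\Omega(\log n)$, hence $\max\{t_u,t_q\}=\Omega(\log n/\log\log n)$ when $w=\Theta(\log n)$.
\item Given an exact dynamic learned structure $D$ with update time $t_u$ and query time $t_q$, implement each subset-rank operation by the identical operation on $D$. Setting bit $x$ is an insert, clearing it is a delete, and a prefix count is a rank query.
\item Verify that the simulation costs nothing extra. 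Every cell probe of $D$ is charged one-for-one. Model atoms, retraining, directory rebuilds, and transcript-indexed repair programs are all stored in the same $w$-bit cells, so any read of them is a probe. Since computation is free in the cell-probe model, evaluating a predictor, however complex, adds no cost. Exactness of $D$ means every answer is correct, as the lower bound requires.
\item Conclude that $(t_u,t_q)$ must satisfy the classical tradeoff. Because the lower bound holds even for amortized or randomized-expected costs in the standard formulations, the same robustness transfers.
\end{enumerate}

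The main obstacle is matching hypotheses rather than doing any computation. The classical bound is stated for a particular universe size and for bit-vector updates, so I must check that the key universe $\U$ can embed $[U]$ with $U=n^{O(1)}$ and that the live set size stays $\Theta(n)$ throughout the hard sequence. The fix is to pad with dummy keys in a disjoint range whose rank contribution is a known constant, subtracted at query time for free.

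I also need to be clear that the ``learned'' label imposes no restriction weaker than the cell-probe model. This holds because any uncharged oracle would itself have to live in memory cells and so would be charged as probes. This is what justifies stating the theorem as a pure reduction, without reproving the chronogram or information-transfer argument.
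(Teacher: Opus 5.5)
Your proposal is correct and matches the paper's proof. The paper likewise encodes the bit vector $A$ as the set $S_A=\{i:A_i=1\}$, maps bit flips to insert/delete and prefix counts to rank queries, and observes that free computation in the cell-probe model means the learned components cannot evade the inherited Fredman--Saks / P\v{a}tra\c{s}cu--Demaine tradeoff. Your padding to keep the live set at size $\Theta(n)$ and your explicit tradeoff form are harmless refinements of the same reduction; the paper adds instead a remark on translating between the $\le$ and $<$ rank conventions.
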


\begin{proof}
Reduce from dynamic subset rank on universe $[n]$.  Maintain a dynamic ordered set
\[
  S_A=\{i\in[n]: A_i=1\}
\]
corresponding to the current bitvector $A\in\{0,1\}^n$.  A bit flip from $0$ to $1$ is an insertion of key $i$, and a bit flip from $1$ to $0$ is a deletion of key $i$.  A subset-rank query asks for
\[
  \rank_A(t)=|\{i\le t:A_i=1\}|.
\]
This is exactly the rank of $t$ in the ordered set $S_A$, up to the standard convention about whether rank counts keys $\le t$ or keys $<t$; if the learned index uses the latter convention, query $t+1$ or add the membership bit of $t$ to translate between the two.  Thus any exact dynamic learned index supporting insert, delete, and rank on an ordered set gives a dynamic subset-rank data structure with the same cell-probe update and query costs, up to constant overhead.

The cell-probe model charges only memory probes and gives computation for free, so the prediction mechanism cannot invalidate the reduction: arbitrary learned computation between probes is already allowed.  Classical dynamic prefix-counting and subset-rank lower bounds therefore apply to the learned-index structure.  In the standard word-RAM regime $w=\Theta(\log n)$, these bounds imply that one of update time or query time must be $\Omega(\log n/\log\log n)$.
\end{proof}

\begin{remark}
The dynamic theorem draws a sharp line between static and dynamic learned indexing.  In the static case, learning can realize a clean residual-entropy advantage by reallocating modeling power to high-mass, low-hardness regions.  In the dynamic case, exact rank still requires maintaining changing order information under adversarial updates.  Prediction can reduce local search, but it cannot eliminate the dynamic information that exactness requires.
\end{remark}

\subsection{Limitations and open problems}
\label{sec:discussion}

The theory suggests a different way to evaluate learned data structures.  The right question is not whether learning beats $O(\log n)$ in the abstract.  The right question is whether the routed, atom-budgeted architecture leaves enough residual exact-answer entropy after the charged predictors have been evaluated.  The parameter $\RGap_{\M}(S,\mu,B)$ is the general answer for that architecture; $\Gap_{\M}(S,\mu,B)$ is the rank-spread radius surrogate.

The main limitation is the routing interface itself.  The theorem covers ordered routing into contiguous intervals followed by counted local prediction and certified repair.  It does not yet give an accounting theorem for unconstrained recursive model indexes, learned hash routers, neural multi-stage dispatch, or arithmetic dispatch schemes that bypass ordered comparisons.  Those designs may still be analyzable by a residual-entropy method, but only after their dispatch records, payloads, and exact-position information are assigned explicit information or byte costs.  This is a limitation of the current theorem, not merely future polish.

The rank-spread specialization is also a real restriction.  The residual theorem remains valid without rank-spread, but the simpler $\log(1+\Delta)$ radius surrogate can fail on hot-key, endpoint, and adversarial-gap workloads.  The finite evaluator therefore reports exact $\RGap_\M$ beside exact $\Gap_\M$ for its declared model, while the full-scale implementation traces report entropy ratios and residual supports rather than assuming that every large repair window implies a large lower bound.

This perspective clarifies several empirical observations.  When systems papers report that the internal search phase of a PGM-like index is memory-bound or highly sensitive to the corrective strategy, they are observing the systems-level signature of the repair term \cite{marcus2021benchmarking,maltry2022critical,liu2025pgmpp}.  When disk-based work adjusts error bounds around page alignment, it is changing the effective cost of a repair window in an I/O model \cite{lan2023disk,zhang2024disk}.  When robustness studies report performance collapse under skewed updates and retraining instability, they are observing that updates change both local hardness and routing masses, often in a misaligned way \cite{yang2024aca,luo2025robust}.

The framework also clarifies why prior theoretical results are complementary rather than contradictory.  Distribution-dependent upper bounds show regimes where the optimum is small \cite{zeighami2023distribution,croquevielle2025constant}.  Approximation-theoretic lower bounds show that certain model families cannot drive $\Comp_{I,\M}(\Delta)$ down quickly enough on uniformly hard regions \cite{croquevielle2026lower}.  Our contribution is to put these two phenomena into a single instance-level objective.

\paragraph{Open problems.}
Several directions seem especially likely to generate follow-up work.

\begin{itemize}[leftmargin=1.5em]
  \item \textbf{Beyond routed piecewise models.}  Extending the same accounting discipline to arithmetic-dispatch structures, learned hash-routing layers, and neural multi-stage routers remains the most important conceptual next step.
  \item \textbf{Measuring $\alpha$ and $\kappa$ for concrete model classes.}  The closed form in \cref{thm:power-law} becomes maximally informative once one can identify local hardness numbers $\kappa_I$ and approximation exponents $\alpha$ for concrete classes of rank curves and concrete families of model atoms.  Establishing such measurements for piecewise-linear predictors on smooth, heavy-tailed, or self-similar inputs would sharpen the theorem's predictive power within this architecture.
  \item \textbf{Dynamic residual guarantees.}  There remains a gap between generic dynamic exact lower bounds, logarithmic-method dynamizations, and geometric worst-case learned indexes.  A natural goal is a fully dynamic learned rank structure with provably near-optimal query distribution sensitivity and worst-case updates near the classical lower-bound frontier, while remaining memory-local enough to be practical.
  \item \textbf{External-memory and cache-aware variants.}  In storage systems the repair cost is not $\log(1+\Delta)$ comparisons but a function of page transfers, cache lines, and prefetching.  A refined $\Gap$ parameter for external memory should replace repair entropy by an architecture-aware correction cost.
\end{itemize}

\paragraph{Conclusion.}
We proved a residual-entropy accounting law for exact predecessor and rank search in atom-budgeted routed piecewise learned indexes.  The theorem's accounting is explicit: ordered routing, counted local model information, certified repair windows, and a separate directory/repair-program memory convention.  Under that accounting, the controlling parameter is routing entropy plus residual answer entropy.  The cleaner $\Gap_\M$ expression is a useful and tight radius surrogate under rank-spread.  The exact finite-instance measurements show that $\RGap_\M$ is computable on real-data-induced workloads once the architecture and transcript are fixed, while also exposing the large repair-program resource outside $B$; the full-scale diagnostic benchmarks show where routing and hardware overheads dominate the repair objective.

\paragraph{Data and artifact availability.}
The benchmark artifact accompanying this manuscript contains the C++17 benchmark harness and exact finite-parameter evaluator, Python aggregation scripts, Lean and Coq sanity checks, generated CSV outputs, and table files.  The public sorted integer datasets are downloaded from Zenodo record \href{https://doi.org/10.5281/zenodo.15240501}{10.5281/zenodo.15240501}; raw datasets are excluded from the archive and reproduced by checksum-verified download scripts.  The third-party baselines are PGM-index and RadixSpline, fetched by the artifact scripts from their public source repositories.

\appendix

\section{Benchmark Diagnostics}
\label{app:benchmark-diagnostics}

The main text reports the exact finite-parameter table and one compact full-scale successful-lookup table.  The additional tables below document stress workloads, coarsened rank-spread diagnostics for the full-scale prototype, shadow routing overhead, and the repair-latency correlation used to interpret that diagnostic prototype.  They are not used as empirical proof of the accounting theorem.

\begin{table}[H]
\centering
\small
\caption{Stress-workload mean lookup latency in ns.  Each cell shows the fastest configuration in the family for that workload; repeated-run confidence intervals are shown when available.}
\label{tab:stress-workloads}
\resizebox{\linewidth}{!}{%
\begin{tabular}{llrrrr}
Dataset & Workload & Binary & Shadow & PGM & RS \\
\hline
Books & misses & 470.9{\scriptsize$\pm$24.8} & 751.5{\scriptsize$\pm$33.3} & 187.2{\scriptsize$\pm$7.9} & 194.3{\scriptsize$\pm$2.7} \\
Books & mixed & 498.1{\scriptsize$\pm$5.5} & 763.3{\scriptsize$\pm$30.3} & 185.6{\scriptsize$\pm$2.5} & 180.3{\scriptsize$\pm$10.7} \\
Books & zipf & 460.0{\scriptsize$\pm$38.5} & 672.2{\scriptsize$\pm$48.2} & 176.3{\scriptsize$\pm$11.7} & 180.4{\scriptsize$\pm$3.8} \\
Books & hotspot & 159.3{\scriptsize$\pm$7.5} & 138.3{\scriptsize$\pm$4.3} & 71.4{\scriptsize$\pm$5.7} & 60.2{\scriptsize$\pm$0.4} \\
Books & gaps & 61.9{\scriptsize$\pm$1.7} & 83.9{\scriptsize$\pm$2.2} & 56.4{\scriptsize$\pm$1.1} & 37.9{\scriptsize$\pm$0.6} \\
\hline
FB & misses & 462.7{\scriptsize$\pm$25.9} & 1292.6{\scriptsize$\pm$3.1} & 219.1{\scriptsize$\pm$12.0} & 420.7{\scriptsize$\pm$27.4} \\
FB & mixed & 461.0{\scriptsize$\pm$30.9} & 1270.6{\scriptsize$\pm$12.0} & 213.5{\scriptsize$\pm$8.3} & 413.8{\scriptsize$\pm$22.3} \\
FB & zipf & 450.2{\scriptsize$\pm$19.1} & 1302.5{\scriptsize$\pm$62.7} & 230.2{\scriptsize$\pm$3.4} & 401.3{\scriptsize$\pm$5.7} \\
FB & hotspot & 154.8{\scriptsize$\pm$26.7} & 201.7{\scriptsize$\pm$7.6} & 83.4{\scriptsize$\pm$3.1} & 127.2{\scriptsize$\pm$2.7} \\
FB & gaps & 163.0{\scriptsize$\pm$41.0} & 229.5{\scriptsize$\pm$14.4} & 74.2{\scriptsize$\pm$0.7} & 120.1{\scriptsize$\pm$4.8} \\
\hline
Wiki & misses & 484.8{\scriptsize$\pm$36.2} & 1319.4{\scriptsize$\pm$56.3} & 162.4{\scriptsize$\pm$12.3} & 166.1{\scriptsize$\pm$20.5} \\
Wiki & mixed & 472.4{\scriptsize$\pm$55.6} & 1291.5{\scriptsize$\pm$16.1} & 161.6{\scriptsize$\pm$5.7} & 146.1{\scriptsize$\pm$13.3} \\
Wiki & zipf & 475.7{\scriptsize$\pm$55.5} & 1265.5{\scriptsize$\pm$46.8} & 167.3{\scriptsize$\pm$18.7} & 176.7{\scriptsize$\pm$14.7} \\
Wiki & hotspot & 148.4{\scriptsize$\pm$7.9} & 208.9{\scriptsize$\pm$4.8} & 71.3{\scriptsize$\pm$7.3} & 60.7{\scriptsize$\pm$1.0} \\
Wiki & gaps & 52.1{\scriptsize$\pm$2.4} & 76.5{\scriptsize$\pm$1.6} & 51.1{\scriptsize$\pm$1.1} & 33.1{\scriptsize$\pm$0.8} \\
\hline
OSM & misses & 748.6{\scriptsize$\pm$34.4} & 2183.9{\scriptsize$\pm$100.0} & 305.5{\scriptsize$\pm$28.8} & 401.7{\scriptsize$\pm$26.7} \\
OSM & mixed & 769.8{\scriptsize$\pm$94.6} & 2175.9{\scriptsize$\pm$105.9} & 293.2{\scriptsize$\pm$3.4} & 384.1{\scriptsize$\pm$1.8} \\
OSM & zipf & 799.7{\scriptsize$\pm$186.8} & 2017.8{\scriptsize$\pm$8.3} & 301.6{\scriptsize$\pm$16.2} & 373.9{\scriptsize$\pm$6.6} \\
OSM & hotspot & 200.0{\scriptsize$\pm$26.6} & 284.4{\scriptsize$\pm$7.3} & 101.3{\scriptsize$\pm$1.9} & 111.1{\scriptsize$\pm$0.1} \\
OSM & gaps & 106.5{\scriptsize$\pm$5.0} & 308.7{\scriptsize$\pm$65.5} & 84.4{\scriptsize$\pm$2.1} & 44.5{\scriptsize$\pm$0.8} \\
\hline

\end{tabular}
}
\end{table}
\FloatBarrier

\begin{table}[H]
\centering
\small
\caption{Coarsened rank-spread diagnostics for the fastest shadow configuration on each dataset/workload.  Ratio is empirical residual entropy divided by the empirical log-window denominator; Support counts residual positions with non-negligible mass.  These are trace diagnostics, not exact theorem quantities.}
\label{tab:rank-spread-diagnostic}
\resizebox{\linewidth}{!}{%
\begin{tabular}{llrrrr}
Dataset & Workload & Index & Ratio & Support & Window \\
\hline
Books & hits & Shadow-O(1024) & 0.73 & 310.3 & 7637.5 \\
Books & misses & Shadow-O(1024) & 0.73 & 310.1 & 5970.9 \\
Books & mixed & Shadow-O(1024) & 0.73 & 310.7 & 5978.0 \\
Books & zipf & Shadow-O(1024) & 0.75 & 289.6 & 4614.2 \\
Books & hotspot & Shadow-R(1024) & 0.52 & 85.6 & 3214.8 \\
Books & gaps & Shadow-R(1024) & 0.10 & 2.1 & 98550.6 \\
\hline
FB & hits & Shadow-R(1024) & 0.64 & 4233.6 & 53052.0 \\
FB & misses & Shadow-R(1024) & 0.64 & 4233.5 & 53718.6 \\
FB & mixed & Shadow-O(1024) & 0.64 & 4259.5 & 52432.2 \\
FB & zipf & Shadow-R(1024) & 0.64 & 3386.7 & 47004.2 \\
FB & hotspot & Shadow-O(1024) & 0.44 & 699.9 & 71193.8 \\
FB & gaps & Shadow-R(256) & 0.34 & 111.2 & 58804.1 \\
\hline
Wiki & hits & Shadow-O(1024) & 0.61 & 1220.0 & 46887.2 \\
Wiki & misses & Shadow-O(1024) & 0.61 & 1219.9 & 46887.2 \\
Wiki & mixed & Shadow-O(1024) & 0.61 & 1228.9 & 47601.2 \\
Wiki & zipf & Shadow-O(1024) & 0.61 & 921.9 & 39556.4 \\
Wiki & hotspot & Shadow-O(1024) & 0.38 & 246.6 & 63819.6 \\
Wiki & gaps & Shadow-O(256) & 0.57 & 18.6 & 62161.6 \\
\hline
OSM & hits & Shadow-R(1024) & 0.23 & 41.9 & 5633128.2 \\
OSM & misses & Shadow-O(1024) & 0.46 & 3481.0 & 5634460.9 \\
OSM & mixed & Shadow-O(1024) & 0.46 & 3480.2 & 5632448.5 \\
OSM & zipf & Shadow-R(1024) & 0.24 & 45.3 & 3600901.4 \\
OSM & hotspot & Shadow-O(1024) & 0.39 & 452.0 & 1806917.8 \\
OSM & gaps & Shadow-R(1024) & 0.00 & 1.0 & 5707344.5 \\
\hline

\end{tabular}
}
\end{table}
\FloatBarrier

The rank-spread table is deliberately not used as a theorem proof.  On Books, FB, and Wiki, high hit/miss ratios say that the observed residual mass is spread enough for the radius surrogate to be informative on those traces.  The low adversarial-gap ratios, especially on OSM, show the opposite regime: the certified windows are large, but the residual answers concentrate, so the paper must use $\RGap_\M$ rather than a blanket $\log(1+\Delta)$ lower bound.

\begin{table}[H]
\centering
\small
\caption{Shadow routing overhead on successful lookups.  Ordered routing keeps the index small but pays ordered segment search; radix routing moves part of that cost into explicit directory bytes outside $B$.}
\label{tab:shadow-overhead}
\resizebox{\linewidth}{!}{%
\begin{tabular}{lrrrrrr}
Dataset & Ord MB & Rad MB & Ordered ns & Radix ns & Ord route & Rad route \\
\hline
Books & 0.07 & 2.16 & 743.5{\scriptsize$\pm$34.9} & 782.0{\scriptsize$\pm$38.2} & 10.0 & 2.0 \\
FB & 0.07 & 2.16 & 1351.8{\scriptsize$\pm$9.0} & 1336.2{\scriptsize$\pm$45.9} & 10.0 & 11.0 \\
Wiki & 0.07 & 2.16 & 1281.2{\scriptsize$\pm$9.8} & 1371.7{\scriptsize$\pm$43.7} & 10.0 & 2.0 \\
OSM & 0.07 & 2.16 & 2160.2{\scriptsize$\pm$60.9} & 2151.0{\scriptsize$\pm$77.9} & 10.0 & 2.2 \\

\end{tabular}
}
\end{table}
\FloatBarrier

Repair comparisons correlate with latency when the index family and routing overhead are comparable.  \Cref{tab:repair-latency-corr} reports Pearson correlations between average repair comparisons and mean latency over learned configurations.  Weak or unstable correlations identify cases where cache behavior, radix tables, directory lookup, or branch behavior dominate the last-mile comparison count.

\begin{table}[H]
\centering
\small
\caption{Pearson correlation between average repair comparisons and mean latency across benchmark configurations.}
\label{tab:repair-latency-corr}
\begin{tabular}{lrr}
Dataset & Hits learned & All workloads \\
\hline
Books & 0.86 & 0.69 \\
FB & 0.92 & 0.62 \\
Wiki & 0.96 & 0.80 \\
OSM & 0.98 & 0.76 \\

\end{tabular}
\end{table}
\FloatBarrier

\section{Proof of the Routing Lemmas}

\begin{proof}[Expanded proof of \cref{lem:routing-lb}]
Let $d_1,\ldots,d_m$ be the leaf depths, and let $q_j=2^{-d_j}$.  Since the tree is binary, Kraft's inequality yields $\sum_j q_j\le 1$.  Now
\[
\sum_j p_jd_j
=\sum_jp_j\log\frac{1}{q_j}
=\sum_jp_j\log\frac{1}{p_j}+\sum_jp_j\log\frac{p_j}{q_j}.
\]
If $Q=\sum_jq_j<1$, normalize to $\widehat q_j=q_j/Q$.  Then
\[
\sum_jp_j\log\frac{p_j}{q_j}
=\sum_jp_j\log\frac{p_j}{\widehat q_j}+\log\frac{1}{Q}
=\KL(p\|\widehat q)+\log\frac{1}{Q}\ge 0.
\]
If $Q=1$, the same argument holds without the normalization term.  Therefore $\sum_jp_jd_j\ge H_\mu(\Pi)$.
\end{proof}

\begin{proof}[Expanded proof of \cref{lem:routing-ub}]
Mehlhorn's construction builds an alphabetic search tree for ordered probabilities $p_1,\ldots,p_m$ with expected depth at most $H+2$, where $H=\sum_jp_j\log(1/p_j)$ \cite{mehlhorn1975obst}.  Our leaf intervals are ordered and nonoverlapping, so they satisfy the alphabetic constraint.  Applying the construction directly gives the claimed directory tree.
\end{proof}

\section{Proof of the Repair Lemmas}

\begin{proof}[Expanded proof of \cref{lem:repair-lb}]
Fix a leaf with promised error radius $\Delta$ and predicted rank $\widehat r$.  The true predecessor rank lies in
\[
  [\widehat r-\Delta,\widehat r+\Delta]\cap\{0,1,\ldots,n\}.
\]
Ignoring boundary effects, there are $M=\Theta(1+\Delta)$ possible predecessor answers across the gaps in that window.  For each possible answer, choose a query value falling into the corresponding gap while preserving the same prediction guarantee.  A comparison decision tree that returns the exact predecessor must distinguish these $M$ cases.  A binary decision tree with depth $t$ has at most $2^t$ leaves, so $t\ge\log M=\Omega(\log(1+\Delta))$ in the worst case.
\end{proof}

\begin{proof}[Expanded proof of \cref{lem:repair-ub}]
The candidate predecessor lies among $O(1+\Delta)$ consecutive positions around the prediction.  Binary search over the clipped candidate interval determines the exact predecessor or rank using $O(\log(1+\Delta))$ comparisons.  If the data structure stores only the predicted rank and not the exact interval endpoints, exponential search around the predicted position finds a containing interval of size $O(1+\Delta)$ and then binary search completes the correction within the same asymptotic bound.
\end{proof}

\section{Expanded Proof of the Static Theorem}

We spell out both directions, including the finite repair-program construction used only in the residual upper bound.  For any routed piecewise index $D$, write
\[
  T_D(q)=T_D^{\mathrm{route}}(q)+T_D^{\mathrm{repair}}(q)+O(1),
\]
where the additive constant covers local model evaluation.  If $q\in I_j$, the routing path has length at least the depth $d_j$ of that leaf, and the repair phase must identify the residual exact answer left by the local predictor.  Therefore
\[
\begin{aligned}
  \E[T_D]
  &\ge \sum_j p_jd_j+\sum_jp_j\Rep_\mu(I_j,h_j,\Delta_j)-O(1)\\
  &\ge H_\mu(\Pi)+\sum_jp_j\Rep_\mu(I_j,h_j,\Delta_j)-O(1),
\end{aligned}
\]
where the last inequality is \cref{lem:routing-lb} and the conditional repair term follows from \cref{lem:residual-entropy}.  The tuple implemented by $D$ is feasible in \eqref{eq:rgap-def}, hence
\[
  \E[T_D]\ge \RGap_\M(S,\mu,B)-O(1)
  =\Omega(\RGap_\M(S,\mu,B)).
\]
This proves the residual lower-bound direction without a rank-spread assumption.  If $D$ is additionally rank-spread, \cref{lem:rank-spread-repair} substitutes
$\Rep_\mu(I_j,h_j,\Delta_j)\ge c\log(1+\Delta_j)$, and feasibility for
\eqref{eq:gap-def} gives the radius lower bound.

For the residual upper bound, fix $\eta>0$ and select a feasible tuple
$F=(\Pi,\{h_j,\Delta_j\}_j)$ such that
\[
  H_\mu(\Pi)+\sum_jp_j\Rep_\mu(I_j,h_j,\Delta_j)
  \le \RGap_\M(S,\mu,B)+\eta/3.
\]
An alphabetic directory for $\Pi$ has expected depth at most
$H_\mu(\Pi)+2$.  Fix a leaf and a positive-probability finite transcript
$y$.  Let $P_y$ be the conditional distribution on its positive-mass
answers and let $K_y$ include every answer attainable in the required query
domain under that transcript.  Insert the zero-mass attainable answers in
their order with total auxiliary mass $\varepsilon_y$, and multiply $P_y$ by
$1-\varepsilon_y$.  Applying the alphabetic-code upper bound to this full
ordered distribution gives a comparison tree that is correct on all $K_y$
answers.  When it is evaluated under $P_y$, its expected comparisons are at
most
\[
  H(P_y)+2+\log\frac{1}{1-\varepsilon_y}.
\]
Choose the finitely many $\varepsilon_y$ so that the expectation of the last
term over all positive-mass transcripts is at most $\eta/3$.  For attainable
transcripts of zero workload mass, choose any correct finite tree; this adds
no expected comparisons.

Dispatching to these transcript-indexed trees yields
\[
\begin{aligned}
  \E[T_F]
  &\le H_\mu(\Pi)+2+
       \sum_jp_j\Rep_\mu(I_j,h_j,\Delta_j)+2+\eta/3+O(1)\\
  &\le \RGap_\M(S,\mu,B)+O(1)+\eta .
\end{aligned}
\]
The comparison bound is independent of the table size.  Materializing the
trees, however, requires $\Theta(\sum_yK_y)$ nodes in general; those nodes
are outside $B$ under the accounting convention and are exactly the
non-atom resource made visible in the statement and experiments.

Finally, the radius upper bound in \cref{cor:rank-spread-gap} does not require
that transcript-indexed coding construction.  Take a near-minimizer of
\eqref{eq:gap-def}, build its certified predictors and alphabetic directory,
and binary-search each certified window.  By \cref{lem:repair-ub}, its
expected cost is
\[
  O\!\left(H_\mu(\Pi)+\sum_jp_j\log(1+\Delta_j)+1\right)
  =O(\Gap_\M(S,\mu,B)+1).
\]

\section{Power-Law Optimization Details}

We prove \cref{thm:power-law} with the integer floor and the full-repair boundary included.  Put
\[
  y_j=1+\Delta_j,
  \qquad 1\le y_j\le 1+R_j,
  \qquad \Beff=B+m.
\]
For non-boundary predictors, the normalized profile condition gives
\[
  1+\Comp_{I_j,\M}(\Delta_j)\asymp 1+\kappa_j y_j^{-\alpha}.
\]
Thus a budget $B$ implies an effective envelope constraint
\[
  \sum_{j=1}^m \kappa_j y_j^{-\alpha}=O(\Beff)
\]
for all leaves not using the empty predictor.  Conversely, any integer radii satisfying the envelope with a sufficiently small hidden constant yield a feasible allocation after changing $B$ by a constant factor.  This is the only place where the additive $m$ in $\Beff$ enters.

\begin{proof}[Detailed proof of \cref{thm:power-law}]
For the upper bound, define
\[
  y_j^*=\min\left\{1+R_j,\max\left\{1,\left(\frac{c\kappa_j}{\Beff p_j}\right)^{1/\alpha}\right\}\right\}
\]
with a constant $c>0$ chosen from the profile constants.  If $y_j^*<1+R_j$, then
\[
  \kappa_j (y_j^*)^{-\alpha}\le c^{-1}\Beff p_j.
\]
Summing over such leaves gives at most $c^{-1}\Beff$.  If $y_j^*=1+R_j$, we use the empty predictor on that leaf and pay no local atoms; its contribution is handled by full repair.  Rounding $y_j^*-1$ to the nearest integer in $[0,R_j]$ changes $y_j$ by at most a factor of two except in the constant-radius regime, where the repair cost changes by $O(1)$.  Hence the rounded allocation is feasible up to a constant rescaling of $B$ and has repair cost
\[
  O\left(\sum_j p_j\log\left(1+
\min\left\{R_j,\max\left\{1,\left(\frac{\kappa_j}{\Beff p_j}\right)^{1/\alpha}\right\}\right\}\right)\right).
\]

For the lower bound, relax the integer radii to real variables $y_j\in[1,1+R_j]$ and replace each local profile by the lower envelope.  This can only decrease the optimum.  The relaxed repair problem is
\[
  \min \sum_j p_j\log y_j
  \quad\text{subject to}\quad
  \sum_{j\in A}\kappa_j y_j^{-\alpha}\le C\Beff,
  \qquad 1\le y_j\le 1+R_j,
\]
where $A$ denotes leaves not assigned full repair and $C$ is an absolute constant.  The Karush--Kuhn--Tucker conditions give, for every non-boundary coordinate,
\[
  \frac{p_j}{y_j}=\lambda\alpha\kappa_j y_j^{-\alpha-1},
  \qquad\text{hence}\qquad
  y_j^\alpha=\frac{\lambda\alpha\kappa_j}{p_j}.
\]
Coordinates for which this value is below one are fixed at the lower boundary, and coordinates for which it exceeds $1+R_j$ are fixed at full repair.  Choosing $\lambda$ so that the relaxed envelope is tight gives the same scale as the upper allocation, with $\Beff$ replacing $B$ because of the atom floor.  Substitution yields the capped expression in \cref{thm:power-law}.  In the interior regime the caps are inactive and
\[
  \log y_j=\Theta\left(\frac{1}{\alpha}\logplus\frac{\kappa_j}{\Beff p_j}\right),
\]
which gives the positive-log formula.  The discrete optimum differs from the relaxed optimum by at most constant factors because rounding changes every non-boundary $y_j$ by a constant factor and the boundary cases contribute only $O(p_j)$ per leaf.  Adding the fixed routing term $H_\mu(\Pi)$ completes the proof.
\end{proof}

\section{Dynamic Proof Details}

\begin{proof}[Expanded proof of \cref{thm:dynamic-upper}]
The space bound is immediate because each live key is stored in exactly one level and each active level uses a linear-space static structure.  For updates, rebuilding a parent from $\beta$ full children of size $\Theta(\beta^i)$ costs $\Theta(\beta^{i+1})$.  Such a rebuild is triggered after $\Theta(\beta^i)$ lower-level updates, so the amortized contribution of level $i$ is $\Theta(\beta)$.  There are $\Theta(\log_\beta n)$ levels, yielding $\Theta(\beta\log_\beta n)$ amortized update time.

For queries, the structure queries every active level and combines predecessor candidates by taking their maximum below $q$, or combines rank contributions by summing per-level ranks.  The expected cost at level $i$ is bounded by the static residual-entropy cost for $S_i$ under the induced distribution, plus a constant.  Summing over active levels proves the theorem.
\end{proof}

\begin{proof}[Expanded proof of \cref{thm:level-lb}]
Let the active levels be $L_0,\ldots,L_h$.  Choose one live key $y_i$ inside each level $L_i$ so that $y_0<\cdots<y_h<q$.  Suppose an exact algorithm does not inspect $L_t$.  The algorithm's observed transcript is unchanged if $L_t$ is modified to contain a key $z$ with $y_h<z<q$, while all inspected levels remain fixed.  The correct predecessor changes from $y_h$ to $z$, but the algorithm cannot distinguish the two instances.  This contradicts exactness.  Hence all active levels must be inspected.
\end{proof}

\begin{proof}[Expanded proof of \cref{prop:tombstones}]
After $N/2$ insertions and $N/2$ deletions inside the same query range, a tombstone-based structure that has not cleaned the range contains $\Theta(N)$ stale markers there.  Place $k$ live keys in the same range.  An exact range query must identify which encountered markers correspond to deleted keys and which entries are live.  Otherwise it cannot certify the exact output.  Therefore the range query must examine $\Omega(N+k)$ items in the worst case, unless cleanup work has already been paid elsewhere.
\end{proof}

\begin{proof}[Expanded proof of \cref{thm:rank-barrier}]
Represent the dynamic subset-rank problem as a dynamic ordered set on universe $[n]$.  A bit $1$ at position $i$ means that key $i$ is present; a bit $0$ means absent.  Insertions and deletions are bit flips.  A rank query at $i$ returns the number of present keys in $[1,i]$, exactly the prefix sum of the bitvector.  Any exact dynamic learned index supporting rank solves this problem as a special case.  Since the cell-probe model gives computation for free, any lower bound for dynamic subset rank or dynamic prefix sums applies even if prediction is arbitrarily powerful.  The standard $w=\Theta(\log n)$ consequence is the $\Omega(\log n/\log\log n)$ barrier for one of update or query time.
\end{proof}

\end{document}